\documentclass[reqno,12pt]{amsart}
\usepackage{amsmath,amssymb,latexsym,amsthm,mathrsfs,mathtools,xcolor,wasysym}
\usepackage[a4paper,hmargin={2.54cm,2.54cm},vmargin={3.17cm,3.17cm}]{geometry}
\usepackage[hidelinks, colorlinks=true, linkcolor=black, citecolor=red, anchorcolor=green]{hyperref}

\theoremstyle{definition}
\newtheorem{theorem}{Theorem}[section]
\newtheorem{lemma}[theorem]{Lemma}
\newtheorem{corollary}[theorem]{Corollary}
\newtheorem{remark}[theorem]{Remark}
\newtheorem{proposition}[theorem]{Proposition}
\newtheorem{definition}[theorem]{Definition}
\newtheorem*{hypo1}{Hypothesis H1}
\newtheorem*{hypo2}{Hypothesis H2}
\numberwithin{equation}{section}

\newcommand{\M}{\mathcal M}
\newcommand{\E}{\mathcal E}
\newcommand{\Id}{\text{Id}}
\newcommand{\lt}{\left}
\newcommand{\rt}{\right}
\newcommand{\mi}{\mathrm{i}}
\newcommand{\md}{\mathrm{d}}
\newcommand{\me}{\mathrm{e}}
\newcommand{\nf}{\ensuremath{\mathcal{NF}}}

\title[Reducibility of Quantum Harmonic on $\mathbb R^d$]{Reducibility of Quantum Harmonic Oscillator on $\mathbb{R}^d$ Perturbed by a Quasi - periodic Potential with Logarithmic Decay
}
\author{\bf Zhenguo  Liang$^1$ $\cdot$ Zhiqiang Wang$^1$}

\thanks{\XBox\quad Zhenguo Liang\\ \indent\qquad zgliang@fudan.edu.cn\\[4pt] \indent\qquad Zhiqiang Wang\\ 
\indent\qquad 19110180010@fudan.edu.cn\\[4pt]
\indent $^1$\hspace{4ex}School of Mathematical Sciences and
Key Lab of Mathematics for Nonlinear Science, \\
\indent\qquad Fudan University,
Shanghai 200433, People's Republic of China}
\date{}
\begin{document}
\maketitle

\begin{abstract} We prove  the reducibility of quantum harmonic oscillators in $\mathbb R^d$ perturbed by a quasi-periodic in time potential $V(x,\omega t)$ with \textsl{logarithmic decay}.  By a new estimate  built for solving the homological equation  we improve the reducibility result by 
Gr\'ebert-Paturel(Annales de la Facult\'e des sciences de Toulouse : Math\'ematiques. \textbf{28}, 2019).  
\end{abstract}
\noindent
\textbf{Mathematics Subject Classification} 37K55 $\cdot$ 35P05 $\cdot$ 81Q15
\section{Introduction and Main Results}
\subsection{Statement of the results}
In this paper we consider the linear equation
\begin{equation}\label{maineqn}
\mi\partial_tu=(-\Delta+|x|^2)u+\epsilon V(x,\omega t)u,\quad u=u(t,x),~t\in\mathbb{R},x\in\mathbb{R}^d,
\end{equation}
where $\epsilon\ge0$ is a small parameter and frequency vector $\omega$ of the forced oscillations is regarded as a parameter in $\mathcal D_0:=[0,2\pi)^n$. We assume that the potential $V:\mathbb{R}^d\times\mathbb{T}^n\ni(x,\varphi)\mapsto V(x,\varphi)\in\mathbb{R}$ is continuous in all its variables and analytic in $\varphi$, where $\mathbb{T}^n=\mathbb{R}^n/2\pi\mathbb{Z}^n$ denotes the $n$-dimensional torus. For  $\sigma>0$, the function $V(x,\varphi)$ analytically in $\varphi$ extends to the strip
$
\mathbb{T}^n_\sigma=\{(a+b\mi)\in\mathbb{C}^n/2\pi\mathbb{Z}^n:|b|<\sigma\}
$
 and for all $(x,\varphi)\in\mathbb{R}^d\times\mathbb{T}_\sigma^n$ verifies
\begin{equation}\label{potentialcd}
|V(x,\varphi)|\le C(1+\ln(1+|x|^2))^{-2\iota},
\end{equation}
where $\iota\ge0$ and $C>0$. 

Before state the main results,  we need some notations.  For readers' convenience we will follow the notations in \cite{GP2019}.  Let
$
H_0=-\Delta+|x|^2=-\Delta+x_1^2+x_2^2+\cdots+x_d^2
$
be the $d$-dimensional quantum harmonic oscillator. Its spectrum is the sum of $d$ copies of the odd integers set, i.e. the spectrum of $H_0$ equals
$
\widehat{\mathcal E}:=\{d,d+2,d+4,\cdots\}.
$
Denote by $E_j$ for $j\in\widehat{\mathcal E}$ the associated eigenspace whose dimension equals
\[
\#\{(i_1,i_2,\cdots,i_d)\in(2\mathbb{N}+1)^d:i_1+i_2+\cdots+i_d=j\}:=d_j\le j^{d-1},\text{ where }\mathbb{N}=\{0,1,2,\cdots\}.
\]
Denote by $\{\Phi_{j,l},l=1,2,\cdots,d_j\}$ the basis of $E_j$ obtained by $d$-tensor product of Hermite functions: $\Phi_{j,l}=\varphi_{i_1}\otimes\varphi_{i_2}\otimes\cdots\otimes\varphi_{i_d}$ for some choice of $i_1+i_2+\cdots+i_d=j$. Then setting
\[
\mathcal{E}:=\{(j,l)\in\widehat{\mathcal{E}}\times\mathbb{N}:l=1,2,\cdots,d_j\}
\text{ and }w_{j,l}=j\text{ for }(j,l)\in\mathcal{E}
\]
$(\Phi_a)_{a\in\mathcal{E}}$ forms the Hermite basis of $L^2(\mathbb{R}^d)$, verifying
$
H_0\Phi_a=w_a\Phi_a,~a\in\mathcal{E}.
$
We define on $\mathcal{E}$ an equivalent relation:
$
a\sim b\iff w_a=w_b
$
and denote by $[a]$ the equivalence class corresponding with $a\in\mathcal{E}$. We notice for later use that $\#[a]\le w_a^{d-1}$ and that abbreviate the eigenspace $E_{w_a}$ as $E_{[a]}$.\\
\indent For $s\ge0$ denote by $\mathcal{H}^s$ the form domain of $H_0^s$ and the domain of $H_0^{s/2}$ endowed by the graph norm. For negative $s$, the space $\mathcal{H}^s$ is the dual of $\mathcal{H}^{-s}$. In particular, for $s\ge0$ an integer we have
\[
\mathcal{H}^s=\{f\in L^2(\mathbb{R}^d):x^\alpha\partial^\beta f\in L^2(\mathbb{R}^d),~\forall\,\alpha,\beta\in\mathbb{N}^{d},|\alpha|+|\beta|\le s\}.
\]
For Hilbert spaces $\mathcal H_1$ and $\mathcal H_2$, we will denote by $\mathcal{B}(\mathcal{H}_1,\mathcal{H}_2)$ the space of bounded linear operators from $\mathcal{H}_1$ to $\mathcal{H}_2$ and write $\mathcal{B}(\mathcal{H}_1,\mathcal{H}_1) $ as $\mathcal{B}(\mathcal{H}_1)$ for simplicity.\\
\indent To a function $u\in\mathcal{H}^s$ we associate the sequence $\xi$ of its Hermite coefficients by the formula $u(x)=\sum_{a\in\mathcal{E}}\xi_a\Phi_a(x)$ and define 
$
\ell_s^2:=\{(\xi_a)_{a\in\mathcal{E}}:\sum_{a\in\mathcal{E}}w_a^s|\xi_a|^2<\infty\}. 
$
Next we will identify $\mathcal{H}^s$ with $\ell_s^2$ by endowing both spaces with the norm 
\[
\|u\|_{\mathcal H^s}=\|\xi\|_s=\left(\sum_{a\in\mathcal{E}}w_a^s|\xi_a|^2\right)^\frac12.
\]

Our main theorem is the following. 
\begin{theorem}\label{mainthm}
Assume that $V$ satisfies \eqref{potentialcd} with $\iota\ge\frac{n+d+1}2$. There exists $\epsilon_*>0$ such that for all $0\le\epsilon<\epsilon_*$ there exists $\mathcal{D}_\epsilon\subset\mathcal{D}_0:=[0,2\pi)^n$ such that for all $\omega\in\mathcal{D}_\epsilon$, the linear Schr\"odinger equation \eqref{maineqn} 
reduces to a linear equation with constant coefficients in $\mathcal{H}^1$.

More precisely, for any $\omega\in\mathcal{D}_\epsilon$, there exists a linear isomorphism $\Psi(\varphi)=\Psi_{\omega,\epsilon}(\varphi)\in\mathcal{B}(\mathcal{H}^p)$, for $p\in[0,1]$, unitary on $L^2(\mathbb{R}^d)$, which analytically depends on $\varphi\in\mathbb{T}^n_{\sigma/2}$ and a bounded Hermitian operator $\mathcal W=\mathcal W_{\omega,\epsilon}\in\mathcal{B}(\mathcal{H}^s)$, for $s\ge0$, such that $t\mapsto u(t,\cdot)\in\mathcal{H}^1$ satisfies 
\begin{equation}\label{mainthmorieqn}
	\mi\partial_tu=(-\Delta+|x|^2)u+\epsilon V(x,\omega t)u
\end{equation}
if and only if $t\mapsto v(t,\cdot)=\Psi(\omega t)^{-1}u(t,\cdot)$ satisfies 
 the autonomous equation
\begin{equation}\label{mainthmeqn}
\mi\partial_t v=(-\Delta+|x|^2)v+\mathcal{W}(v). 
\end{equation} 
 Furthermore, there exists $C>0$ such that
\begin{equation}\label{mainthmestimations}
\begin{alignedat}{4}
&\text{Meas}(\mathcal{D}_0\backslash\mathcal{D}_\epsilon)\le C\epsilon^\frac16,&&\\\
&\|\mathcal{W}\|_{\mathcal{B}(\mathcal{H}^s)}\le C\epsilon,&&\text{ for all }s\ge0,\\
&\|\Psi(\varphi)^{\pm 1}-\Id\|_{\mathcal{B}(\mathcal{H}^p)}\le C\epsilon^{\frac5{12}},~\forall~\varphi\in\mathbb{T}^n_{\sigma/2},&\quad&\text{ for all }p\in[0,1].
\end{alignedat}
\end{equation}
\end{theorem}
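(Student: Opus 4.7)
The plan is to prove Theorem \ref{mainthm} by a classical KAM reducibility scheme, adapted so that only the logarithmic decay \eqref{potentialcd} is needed to run the iteration. First I would reformulate \eqref{maineqn} in the Hermite basis: writing $u=\sum_{a\in\mathcal E}\xi_a\Phi_a$ the equation becomes $\mathrm{i}\dot\xi_a=w_a\xi_a+\epsilon\sum_b V_{ab}(\omega t)\xi_b$ with $V_{ab}(\varphi)=\langle V(\cdot,\varphi)\Phi_b,\Phi_a\rangle_{L^2}$. The idea is then to iteratively construct a time-quasi-periodic unitary $\Psi_N(\varphi)=\exp(\mathrm i S_N(\varphi))$ that conjugates the Hamiltonian $H_0+N_N+P_N$ (with $N_N$ a $\varphi$-independent operator commuting with $H_0$, and $P_N$ a remainder of size $\epsilon_N$) into $H_0+N_{N+1}+P_{N+1}$ with $\epsilon_{N+1}\sim\epsilon_N^{3/2}$, and to take the infinite product $\Psi=\prod_N\Psi_N$.

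The central step at each iteration is to solve the homological equation
\begin{equation*}
-\mathrm{i}\,\omega\!\cdot\!\partial_\varphi S_N+[H_0+N_N,S_N]=P_N-\langle P_N\rangle,
\end{equation*}
where $\langle P_N\rangle$ denotes the block-diagonal, zero-Fourier-mode part. In the joint Fourier-Hermite basis this decouples into the scalar equations $(S_N)_{ab}(k)=(P_N)_{ab}(k)/\mathrm i\bigl(k\!\cdot\!\omega+\lambda^N_a-\lambda^N_b\bigr)$ for $(k,[a],[b])\ne(0,[a],[a])$, where $\lambda_a^N$ are the (tiny) corrections of $w_a$ coming from $N_N$. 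To make this solvable I would impose second-Melnikov-type non-resonance conditions of the form
\begin{equation*}
\bigl|k\!\cdot\!\omega+\lambda^N_a-\lambda^N_b\bigr|\ge\gamma\langle k\rangle^{-\tau}\bigl(1+|w_a-w_b|\bigr)
\end{equation*}
for $\omega$ in a large-measure subset, and cut off Fourier modes $|k|>K_N$ with $K_N$ growing slowly. The Diophantine parameter $\gamma$ is eventually chosen like $\epsilon^{1/6}$, which accounts for the $\epsilon^{1/6}$ measure loss in \eqref{mainthmestimations}; the $\epsilon^{5/12}$ exponent for $\|\Psi-\Id\|$ comes from summing the telescoping series $\sum_N\epsilon_N/\gamma^2$.

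The hard part, and the place where the novelty really sits, is to design a class of operators preserved by the homological-equation solver under only the \emph{logarithmic} bound \eqref{potentialcd}. In Gr\'ebert-Paturel the required decay of $V$ in $x$ is much stronger and gives polynomial decay of $V_{ab}$ in $\min(w_a,w_b)$, which easily absorbs the loss in the small divisor. Here the matrix elements only obey
\begin{equation*}
|V_{ab}(\varphi)|\lesssim\bigl(1+\ln(1+\min(w_a,w_b))\bigr)^{-2\iota},\qquad\varphi\in\mathbb T^n_\sigma,
\end{equation*}
essentially by a harmonic-oscillator integration by parts using that $\Phi_a$ concentrates on $|x|\lesssim\sqrt{w_a}$. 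I would therefore introduce an operator norm of the form
\begin{equation*}
|||A|||_\rho:=\sup_{a,b\in\mathcal E}\bigl(1+\ln(1+\min(w_a,w_b))\bigr)^{2\iota}\sum_{k\in\mathbb Z^n}\me^{\rho|k|}\,\|A_{[a][b]}(k)\|_{\mathcal B(E_{[b]},E_{[a]})},
\end{equation*}
check that this norm controls $\mathcal B(\mathcal H^p)$ for $p\in[0,1]$ via the dimension bound $\#[a]\le w_a^{d-1}$ together with $2\iota\ge n+d+1$, and then prove the key new estimate: if $P$ satisfies $|||P|||_\rho<\infty$, then the solution $S$ of the homological equation satisfies $|||S|||_{\rho'}\lesssim\gamma^{-1}K^{\tau}|||P|||_\rho$ with only a finite loss in the logarithmic weight. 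Once this technical estimate is in hand, the remainder of the proof is standard: a quadratic iteration controlling $|||P_N|||_{\rho_N}$, Borel-Cantelli to extract the non-resonant set $\mathcal D_\epsilon$ with $\mathrm{Meas}(\mathcal D_0\setminus\mathcal D_\epsilon)\le C\epsilon^{1/6}$, and a standard interpolation argument to obtain the three bounds in \eqref{mainthmestimations}.
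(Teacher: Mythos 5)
Your overall scheme is the same as the paper's: pass to the Hermite basis, run a KAM iteration with a quadratic rate $\epsilon_{N+1}\sim\epsilon_N^{p}$, solve block homological equations under second Melnikov conditions, choose $\gamma\sim\epsilon^{1/6}$ for the measure estimate, and sum the telescoping series for the $\epsilon^{5/12}$ bound on $\Psi-\Id$. The preliminary matrix-element estimate you sketch (roughly $|Q_{[a]}^{[b]}|\lesssim(1+\ln\min(w_a,w_b))^{-2\iota}$) is essentially the paper's Lemma~\ref{keylemma}, modulo that the paper keeps the slightly sharper symmetric weight $(1+\ln w_a)^{-\iota}(1+\ln w_b)^{-\iota}$.

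The genuine gap is in the sentence ``then prove the key new estimate: \dots $|||S|||_{\rho'}\lesssim\gamma^{-1}K^\tau|||P|||_\rho$ with only a finite loss in the logarithmic weight.'' This is precisely the step the paper identifies as the obstacle and devotes Lemma~\ref{criticallemma} to. At step $m$ the normal form eigenvalues $\mu_j$ on the block $E_{[a]}$ are only logarithmically close to $w_a$ ($|\mu_a-\lambda_a|\le C_0(1+\ln w_a)^{-\delta}$), so when $\min(w_a,w_b)$ is small the deviation $\mu_j-\lambda_a$ can swamp the Melnikov gap $\gamma$, and the blocks have dimension up to $w_a^{d-1}$. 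A naive estimate of the inverse of the Sylvester-type operator $M\mapsto(k\cdot\omega-N_{[a]})M+MN_{[b]}$ then produces a factor like $\exp\{C_{\delta,d}\,\kappa^{-1/\delta}\}$, where $\kappa\sim\epsilon_m^{1/4}$ is the second Melnikov constant for the \emph{shifted} eigenvalues; this diverges and kills the iteration no matter how you schedule $\kappa$, as the paper explicitly notes in the discussion around~\eqref{oldes}. The paper's fix is a cancellation argument (Case~1 of the proof of Lemma~\ref{criticallemma}, exploiting the sign of $k\cdot\omega-\lambda_a+\lambda_b$ so that the operator norm of $L^{-1}$ is controlled by $1/(k\cdot\omega+\min_l\mu_l-\max_j\mu_j)$ directly), which replaces the factor $\exp\{C_{\delta,d}\,\kappa^{-1/\delta}\}$ by $\exp\{C_{\delta,d}\,\gamma^{-1/\delta}\}$ where $\gamma$ decays much more slowly; the remaining case of small $\min(w_a,w_b)$ is absorbed by the block-dimension bound. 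Your proposal asserts the clean bound without this mechanism, so as written it would reproduce the divergent factor rather than avoid it. To complete the argument you need to identify and justify this cancellation (or an equivalent device) and then show that the surviving factor $\exp\{C_{\delta,d}\,\gamma^{-1/\delta}\}$ is compatible with the parameter schedule, which is where the threshold $\iota\ge\frac{n+d+1}{2}$ (equivalently $\beta\ge\alpha/2$ with $\alpha=n+d+1$) actually enters, via condition~\eqref{criticalcondition}.
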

\begin{remark}
The infinite matrix $(W_a^b)_{a,b\in \mathcal{E}}$ of the operator $\mathcal{W}$ written in the Hermite basis $(W_a^b=\int_{\mathbb R^d} \Phi_a\mathcal{W}(\Phi_b)dx)$ is block diagonal and satisfies for some $C>0$:
\begin{equation*}
\|W_{[a]}^{[a]}\|\le \frac{C\epsilon}{(1+\ln w_a)^{2\iota}},\quad\forall\,a\in\mathcal{E}.
\end{equation*}
\end{remark}
As a consequence, we obtain the following  corollaries concerning the Sobolev norm estimations on the solution of \eqref{maineqn} and the spectra of the corresponding Floquet operator defined by
\begin{equation*}
\mathrm{K}=-\mi\sum_{j=1}^n\omega_j\frac\partial{\partial\varphi_j}-\Delta+|x|^2+\epsilon V(x,\varphi).
\end{equation*}
\begin{corollary}\label{SobolevNormEstimate}
Assume that $V$ satisfies \eqref{potentialcd} with $\iota\ge\frac{n+d+1}2$. There exists $\epsilon_*>0$ such that for all $0\le\epsilon<\epsilon_*$ and $\omega\in\mathcal{D}_\epsilon$, there exists a unique solution $u\in\mathcal{C}(\mathbb{R},\mathcal{H}^1)$ such that $u(0)=u_0$. Moreover, $u$ is almost-periodic in time and satisfies 
\[
(1-C\epsilon)\|u_0\|_{\mathcal{H}^1}\le\|u(t)\|_{\mathcal{H}^1}\le(1+C\epsilon)\|u_0\|_{\mathcal{H}^1},
\]
for some $C>0$.
\end{corollary}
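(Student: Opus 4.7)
The plan is to apply Theorem~\ref{mainthm} as a black box. Given $u_0 \in \mathcal{H}^1$ and $\omega \in \mathcal{D}_\epsilon$, I set $v_0 := \Psi(0)^{-1} u_0 \in \mathcal{H}^1$, solve the autonomous equation \eqref{mainthmeqn} for $v$, and define $u(t) := \Psi(\omega t) v(t)$; by Theorem~\ref{mainthm} this $u$ is then the unique $\mathcal{C}(\mathbb{R},\mathcal{H}^1)$-solution of \eqref{mainthmorieqn} with $u(0)=u_0$, uniqueness transferring in reverse since $\Psi(\omega t)$ is a bounded isomorphism of $\mathcal{H}^1$. Every statement in the corollary is therefore reduced to properties of the autonomous flow together with the estimate $\|\Psi(\omega t)^{\pm 1} - \Id\|_{\mathcal{B}(\mathcal{H}^1)} \le C\epsilon^{5/12}$ from \eqref{mainthmestimations}.

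For the autonomous equation I would invoke Stone's theorem: because $\mathcal{W}$ is a bounded Hermitian perturbation, $H_0 + \mathcal{W}$ is self-adjoint on $L^2(\mathbb{R}^d)$ with domain $\mathcal{H}^2$, so $e^{-\mi t(H_0 + \mathcal{W})}$ is a strongly continuous unitary group and $v(t):=e^{-\mi t(H_0 + \mathcal{W})} v_0$ is the unique $\mathcal{C}(\mathbb{R},\mathcal{H}^1)$-solution. The structural input I would use next is the Remark following Theorem~\ref{mainthm}: $\mathcal{W}$ is block-diagonal with respect to the eigenspaces $E_{[a]}$ of $H_0$. This has two consequences. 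First, $\mathcal{W}$ commutes with every function of $H_0$, so all Sobolev norms are conserved along the flow, and in particular $\|v(t)\|_{\mathcal{H}^1}=\|v_0\|_{\mathcal{H}^1}$. Second, $H_0+\mathcal{W}$ has pure point spectrum, with an orthonormal eigenbasis $\{\psi_k\}$ obtained by diagonalizing each finite-dimensional block $w_a\,\Id + W_{[a]}^{[a]}$; writing $\lambda_k$ for the eigenvalues and $c_k = \langle v_0,\psi_k\rangle$, one has $v(t) = \sum_k c_k e^{-\mi t \lambda_k}\psi_k$. The $\mathcal{H}^1$-norm of the tail is bounded uniformly in $t$ by $\bigl(\sum_{k > N} w_{a_k}|c_k|^2\bigr)^{1/2}$, which tends to $0$ as $N\to\infty$ since $\|v_0\|_{\mathcal{H}^1}<\infty$; thus $v$ is a uniform-in-$t$ limit of $\mathcal{H}^1$-valued trigonometric polynomials, hence Bohr almost-periodic as a map $\mathbb{R}\to\mathcal{H}^1$.

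To finish I would transfer these conclusions back through $u(t)=\Psi(\omega t)v(t)$. Since $\Psi(\omega t)$ is quasi-periodic with frequencies $\omega$ as a $\mathcal{B}(\mathcal{H}^1)$-valued map and $v$ is almost-periodic in $\mathcal{H}^1$, the product is almost-periodic in $\mathcal{H}^1$. The Sobolev bound then follows by combining the conservation $\|v(t)\|_{\mathcal{H}^1}=\|v_0\|_{\mathcal{H}^1}$ with the estimate on $\|\Psi(\omega t)^{\pm1}-\Id\|_{\mathcal{B}(\mathcal{H}^1)}$ applied once to pass from $u_0$ to $v_0$ and once to pass from $v(t)$ back to $u(t)$. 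I do not expect a serious obstruction; the only mildly delicate point is upgrading the $L^2$ almost-periodicity given by pure point spectrum to an $\mathcal{H}^1$ statement, which is precisely what the block-diagonality of $\mathcal{W}$ enables through conservation of $\|v(t)\|_{\mathcal{H}^1}$.
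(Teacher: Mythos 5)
Your overall strategy matches what the paper has in mind (it delegates the proof to \cite{GP2016}): reduce via Theorem~\ref{mainthm} to the autonomous system, exploit that $\mathcal{W}$ commutes with $H_0$ (block-diagonality) to obtain exact conservation of $\|v(t)\|_{\mathcal{H}^1}$ and Bohr almost-periodicity of $v$ by uniform approximation with $\mathcal{H}^1$-valued trigonometric polynomials, then conjugate back through the quasi-periodic $\Psi(\omega t)$. The well-posedness, the conservation argument, and the almost-periodicity transfer in your proposal are all sound.

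However, your final step does not deliver the stated inequality. Applying $\|\Psi(\varphi)^{\pm1}-\Id\|_{\mathcal{B}(\mathcal{H}^1)}\le C\epsilon^{5/12}$ from \eqref{mainthmestimations} at $\varphi=0$ and at $\varphi=\omega t$, together with $\|v(t)\|_{\mathcal{H}^1}=\|v_0\|_{\mathcal{H}^1}$, gives at best
\[
(1-C\epsilon^{5/12})^2\,\|u_0\|_{\mathcal{H}^1}\le\|u(t)\|_{\mathcal{H}^1}\le(1+C\epsilon^{5/12})^2\,\|u_0\|_{\mathcal{H}^1},
\]
that is, an error of size $O(\epsilon^{5/12})$, not $O(\epsilon)$. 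Since $\epsilon^{5/12}/\epsilon\to\infty$ as $\epsilon\to 0$, the corollary's $(1\pm C\epsilon)$ does not follow from what you invoke, and in fact is not compatible with the exponent $5/12$ appearing in \eqref{Mcloseidentity}. Most likely the corollary's exponent should read $\epsilon^{5/12}$ (the paper's KAM parameters $\gamma_m\sim\epsilon_0^{1/6}$, $\kappa_m\sim\epsilon_{m-1}^{1/4}$ produce exactly that power for the transformation), and then your proof is essentially complete; but as written your last sentence asserts the $(1\pm C\epsilon)$ bound without justifying the jump in exponent, and you should have flagged this discrepancy rather than treating it as routine.
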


\begin{corollary}\label{FloquetOperatorSpectrum}
Assume that $V$ satisfies \eqref{potentialcd} with $\iota\ge\frac{n+d+1}{2}$. There exists $\epsilon_*>0$ such that for all $0\le\epsilon<\epsilon_*$ and $\omega\in\mathcal{D}_\epsilon$ the spectrum of the Floquet operator K is a pure point.
\end{corollary}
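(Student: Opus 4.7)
The plan is to use Theorem \ref{mainthm} to conjugate $K$ on the Floquet Hilbert space $\mathcal K:=L^2(\mathbb T^n\times\mathbb R^d)$ to an operator whose pure point spectrum is manifest, and then to invoke the unitary invariance of spectral type.

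First, I would promote $\Psi$ to a unitary operator $\mathcal U$ on $\mathcal K$ by setting $(\mathcal U\psi)(\varphi,x)=(\Psi(\varphi)\psi(\varphi,\cdot))(x)$. Since $\Psi(\varphi)$ is unitary on $L^2(\mathbb R^d)$ for each $\varphi$ and depends analytically on $\varphi\in\mathbb T^n_{\sigma/2}$, $\mathcal U$ is indeed unitary on $\mathcal K$. Next I would establish the conjugation identity
\[
\mathcal U^{-1}K\mathcal U=-\mi\sum_{j=1}^n\omega_j\frac{\partial}{\partial\varphi_j}+H_0+\mathcal W,
\]
which is the operator-level restatement of Theorem \ref{mainthm}: expanding $\mi\partial_t v=\mi\partial_t(\Psi(\omega t)^{-1}u)$ via the chain rule $\partial_t\Psi(\omega t)=(\omega\cdot\nabla_\varphi\Psi)(\omega t)$ and comparing with \eqref{mainthmeqn} yields the pointwise identity $\Psi^{-1}(H_0+\epsilon V)\Psi-\mi\Psi^{-1}(\omega\cdot\nabla_\varphi\Psi)=H_0+\mathcal W$, from which the conjugation formula is immediate.

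Third, I would show the right-hand side has pure point spectrum. By the Remark, $\mathcal W$ is block diagonal with respect to the eigenspace decomposition $L^2(\mathbb R^d)=\bigoplus_{j\in\widehat{\mathcal E}}E_j$ of $H_0$; since each block $W_{[a]}^{[a]}$ acts on a finite-dimensional space, $H_0+\mathcal W$ is self-adjoint with compact resolvent and admits an orthonormal eigenbasis $\{\tilde\Phi_\alpha\}_{\alpha\in\mathcal E}$ of $L^2(\mathbb R^d)$ with eigenvalues $\tilde\lambda_\alpha$. Consequently, $-\mi\omega\cdot\nabla_\varphi+H_0+\mathcal W$ admits the explicit orthonormal eigenbasis $\{(2\pi)^{-n/2}e^{\mi k\cdot\varphi}\tilde\Phi_\alpha\}_{(k,\alpha)\in\mathbb Z^n\times\mathcal E}$ of $\mathcal K$ with eigenvalues $k\cdot\omega+\tilde\lambda_\alpha$, so its spectrum is pure point, and unitary conjugation by $\mathcal U$ transfers this property to $K$.

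The main technical care required is the domain bookkeeping for the unbounded operator identity in the second step: one must verify that $\mathcal U$ preserves the natural Sobolev-type domain on which $K$ is essentially self-adjoint, and that differentiation in $\varphi$ of $\Psi(\varphi)\psi(\varphi,\cdot)$ distributes as claimed. This follows from the estimates in \eqref{mainthmestimations}, which guarantee that $\Psi(\varphi)^{\pm 1}-\Id$ is small in $\mathcal B(\mathcal H^p)$ for $p\in[0,1]$ and analytic in $\varphi$, but no genuinely new difficulty arises beyond what is already handled by Theorem \ref{mainthm}.
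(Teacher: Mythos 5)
Your proposal is correct and is the standard argument; the paper does not give its own proof of this corollary but refers the reader to \cite{GP2016}, where essentially the same reasoning appears: pass to the Floquet Hilbert space $L^2(\mathbb T^n\times\mathbb R^d)$, lift the reducing map $\Psi$ to a fiberwise unitary $\mathcal U$, observe that the reducibility statement in Theorem \ref{mainthm} is precisely the conjugation $\mathcal U^{-1}K\mathcal U=-\mi\omega\cdot\nabla_\varphi+H_0+\mathcal W$, and then diagonalize explicitly using the block-diagonal finite-dimensional Hermitian perturbation $\mathcal W$ together with the Fourier basis $\{e^{\mi k\cdot\varphi}\}$ in $\varphi$. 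Your handling of the domain question is appropriately brief; in particular, self-adjointness of $H_0+\mathcal W$ on $D(H_0)$ and compactness of its resolvent follow at once from boundedness of $\mathcal W$ on $L^2(\mathbb R^d)$, so no gap remains.
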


\subsection{An introduction of the proof and related results}
In this paper we will follow the strategy in \cite{GP2019} or \cite{EK2009}. 
As \cite{GP2019}, 
if we  endow the phase space $\mathcal{H}^s\times\mathcal{H}^s$ with the symplectic structure  $\mi\md u\wedge\md\bar u$, the equation \eqref{mainthmorieqn} can be rewritten as the Hamiltonian system associated with the Hamiltonian
\begin{equation}\label{generalHfunction}
H(u,\bar u)=h(u,\bar u)+\epsilon q(\omega t,u,\bar u),
\end{equation}
where 
$
h(u,\bar u)=\int_{\mathbb{R}^d}|\nabla u|^2+|x|^2u\bar u\md x
$ 
and 
$
q(\omega t,u,\bar u)=\int_{\mathbb{R}^d}V(x,\omega t)u\bar u\md x.
$
Expanding $u$ and $\bar u$ on the Hermite basis of real valued functions
$
u=\sum_{a\in\mathcal{E}}\xi_a\Phi_a,~\bar u=\sum_{a\in\mathcal{E}}\eta_a\Phi_a,
$
the Hamiltonian reads as 
$
h=\sum_{a\in\mathcal{E}}w_a\xi_a\eta_a,~q=\langle\xi,Q(\omega t)\eta\rangle,
$
where $\langle\cdot,\cdot\rangle$ denotes the natural scalar product (no complex conjugation) and $Q$ is the infinite matrix whose entries are
\begin{equation}\label{Qentry}
Q_a^b(\omega t)=\int_{\mathbb{R}^d}V(x,\omega t)\Phi_a(x)\Phi_b(x)\md x,
\end{equation}
and  $(\xi,\eta)\in Y_s$, in which 
$
Y_s=\{\zeta=(\zeta_a\in\mathbb{C}^2,a\in\mathcal{E}):\|\zeta\|_s<\infty\}.
$
Therefore, Theorem $\ref{mainthm}$ is equivalent to the reducibility problem for the non-autonomous Hamiltonian system associated with the Hamiltonian
\begin{equation}\label{generalHamiltonian}
 \sum_{a\in\mathcal{E}}w_a\xi_a\eta_a+\epsilon\langle\xi,Q(\omega t)\eta\rangle.
\end{equation}
As \cite{GP2019} we will  construct a canonical change of variables and  conjugate the Hamiltonian system 
 with the Hamiltonian (\ref{generalHamiltonian})
  to the Hamiltonian system associated with an autonomous Hamiltonian 
$ \sum_{a\in\mathcal{E}}w_a\xi_a\eta_a+\langle\xi,W\eta\rangle,
$
where $W$ is block diagonal and will be clear in the following sections.\par
Here we would like to compare our approaches with those of  Gr\'{e}bert and Paturel \cite{GP2019}. 
By and large, because of the so-called small-divisor problem, in both \cite{GP2019} and the present paper 
the KAM technique is used to eliminate the dependence on  
$\varphi$ of the perturbation $\langle \xi, Q(\varphi)\eta\rangle$.  
There is, however, an apparent difference.  In \cite{GP2019}, the perturbation matrix $Q(\varphi)$ belongs to $ \overline{\mathcal{M}}_{s,\beta}$.  
In other words, for any $a,b\in \mathcal{E}$, 
\begin{eqnarray}\label{grebertdf}
w^{\beta}_aw^{\beta}_b\|Q_{[a]}^{[b]}\|\Big(\frac{\sqrt{\min\{w_a, w_b\}}+|w_a-w_b|}{\sqrt{\min\{w_a, w_b\}}}\Big)^{\frac{s}{2}}\leq C . 
\end{eqnarray}
The estimate comes from Cor.  3.2 from \cite{KT2005}, the assumption of the perturbation and a delicate analysis.  In this paper, we have a weaker assumption on 
$V(x,\varphi)$. Similarly, we also need an estimate on the perturbation matrix element $Q_{[a]}^{[b]}$ for every $a,b\in \mathcal{E}$.  The following is the corresponding estimate:
\begin{lemma}\label{keylemma}
Assume that $V$ satisfies \eqref{potentialcd} with $\iota\ge0$, then we have
\begin{equation}\label{keyestimate}
\left|\int_{\mathbb{R}^d}V(x,\varphi)\Psi_a(x)\Psi_b(x)\md x\right|\le\frac C{(1+\ln w_a)^\iota(1+\ln w_b)^\iota}
\end{equation}
for all $\Psi_a\in E_{[a]},\Psi_b\in E_{[b]}$ and $\|\Psi_a\|=\|\Psi_b\|=1$, where the constant $C\equiv C(d,\iota)>0$ and $\|\cdot\|$ denotes the $L^2(\mathbb{R}^d)$ norm.
\end{lemma}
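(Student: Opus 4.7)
The strategy is to reduce the bilinear estimate \eqref{keyestimate} to a single-state bound via Cauchy–Schwarz, and then to exploit the logarithmic weight on $V$ by combining a layer-cake representation with a quantitative small-ball concentration estimate for Hermite eigenfunctions. First, Cauchy–Schwarz in the form
\[
\left|\int_{\mathbb R^d}V(x,\varphi)\Psi_a(x)\Psi_b(x)\,\md x\right|\le \left(\int |V|\,|\Psi_a|^2\,\md x\right)^{1/2}\left(\int |V|\,|\Psi_b|^2\,\md x\right)^{1/2}
\]
reduces the problem to showing, for every unit $\Psi\in E_{[a]}$,
\[
\int_{\mathbb R^d}|V(x,\varphi)|\,|\Psi(x)|^2\,\md x\le \frac{C}{(1+\ln w_a)^{2\iota}};
\]
taking square roots in the two factors then produces the required product $(1+\ln w_a)^\iota(1+\ln w_b)^\iota$ in the denominator.

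For this single-state estimate I would use the layer-cake identity
\[
(1+\ln(1+|x|^2))^{-2\iota}=2\iota\int_1^\infty t^{-2\iota-1}\chi_{\{|x|^2\le e^{t-1}-1\}}(x)\,\md t
\]
together with the hypothesis \eqref{potentialcd} on $|V|$, giving $\int|V||\Psi|^2\md x\le 2\iota C\int_1^\infty t^{-2\iota-1}F(R(t))\,\md t$, where $R(t)=\sqrt{e^{t-1}-1}$ and $F(R):=\int_{|x|\le R}|\Psi|^2\md x$. Splitting the $t$-integral at $t^\star:=1+\ln w_a$, the tail $t>t^\star$ is controlled by the trivial bound $F\le 1$ and contributes $O((1+\ln w_a)^{-2\iota})$. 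The main part requires the small-ball estimate
\[
F(R)\le \frac{C_d R}{\sqrt{w_a}}\qquad\text{for all }0<R\le\sqrt{w_a},\ \Psi\in E_{[a]},\ \|\Psi\|=1.
\]
Inserting it and using $R(t)\le e^{(t-1)/2}$,
\[
\int_1^{t^\star}t^{-2\iota-1}F(R(t))\,\md t\le \frac{C_d}{\sqrt{w_a}}\int_1^{t^\star}t^{-2\iota-1}e^{(t-1)/2}\,\md t \lesssim (1+\ln w_a)^{-2\iota-1},
\]
the integrand being dominated near $t=t^\star$ where $e^{(t-1)/2}\sim\sqrt{w_a}$. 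Combining the two pieces yields the desired $(1+\ln w_a)^{-2\iota}$ bound.

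The hard part is the small-ball estimate itself. For a single Hermite tensor product $\Phi_\alpha=\varphi_{\alpha_1}\otimes\cdots\otimes\varphi_{\alpha_d}$ with $|\alpha|+d=w_a$ it follows from the 1D Plancherel–Rotach asymptotic $\int_{|y|\le R}\varphi_n(y)^2\md y\le CR/\sqrt{n+1}$, applied in a coordinate $k^\star$ where $\alpha_{k^\star}\gtrsim w_a/d$. For a generic unit vector $\Psi=\sum_\alpha c_\alpha\Phi_\alpha$ in the highly degenerate eigenspace $E_{[a]}$, the cross terms must be handled carefully: equivalently, one needs to bound the operator norm of $\Pi_{[a]}\chi_{B_R}\Pi_{[a]}$, with $\Pi_{[a]}$ the spectral projector onto $E_{[a]}$, by $C_dR/\sqrt{w_a}$ — a bound far tighter than the trace or Hilbert–Schmidt norm (both of which are polynomially large in $w_a$ in dimension $d\ge 2$). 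The argument should exploit the $L^2$-orthogonality of the Hermite basis within $E_{[a]}$ together with a careful Cauchy–Schwarz step on the coefficients $(c_\alpha)$, and must in particular rule out any spurious concentration at the origin coming from the reproducing kernel $K_{[a]}(\cdot,0)$, which has $|K_{[a]}(0,0)|\sim w_a^{d/2-1}$.
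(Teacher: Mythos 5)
Your Cauchy--Schwarz reduction to the single-state estimate $\int |V|\,|\Psi|^2\,\md x\lesssim (1+\ln w_a)^{-2\iota}$ is exactly the paper's first step, and the layer-cake setup is a legitimate way to organize the remainder. However, the proposal has a genuine gap where you candidly flag ``the hard part'': the small-ball bound
\[
\int_{|x|\le R}|\Psi|^2\,\md x\le \frac{C_d R}{\sqrt{w_a}},\qquad 0<R\le\sqrt{w_a},\quad \Psi\in E_{[a]},\ \|\Psi\|_{L^2}=1,
\]
is not proven and is substantially stronger than what the standard eigenfunction estimates give. In particular, H\"older together with the Koch--Tataru $L^p$ bound (with $p=\tfrac{2(d+2)}{d+1}$) only yields $\int_{B_R}|\Psi|^2\lesssim w_a^{-\frac1{2(d+2)}}R^{\frac d{d+2}}$, which is $\lesssim R/\sqrt{w_a}$ only when $R\gtrsim w_a^{(d+1)/4}$ --- outside the range needed for $d\ge 2$. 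The operator-norm bound on $\Pi_{[a]}\chi_{B_R}\Pi_{[a]}$ that you identify as equivalent is a sharp concentration estimate for the harmonic-oscillator spectral projector; it may be true (it is tight for the reproducing kernel at $R\sim w_a^{-1/2}$, which is a warning sign), but establishing it uniformly over the degenerate eigenspace would require delicate work well beyond what the lemma needs, and none of it is supplied here.

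The paper sidesteps all of this with a much blunter split. Instead of running the layer cake out to the classical radius $R\approx\sqrt{w_a}$ (your choice $t^\star=1+\ln w_a$ is what forces the sharp small-ball estimate), it splits $\mathbb R^d$ at the very small radius $|x|=w_a^{1/(4d)}$. On the inner ball, the boundedness of $V$ plus H\"older plus Koch--Tataru gives \emph{polynomial} decay $w_a^{-1/(8(d+2))}$, which beats any power of $\ln w_a$ with room to spare; on the complement, $|x|$ is already so large that $1+\ln(1+|x|^2)\gtrsim (1+\ln w_a)/(2d)$ and the hypothesis on $V$ closes the estimate directly. No fine control of eigenfunction concentration near the origin is needed. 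Concretely, if you keep your layer-cake framework but move the cutoff from $t^\star=1+\ln w_a$ down to $t_0\approx 1+\tfrac{1}{2d}\ln w_a$ (so $R(t_0)\approx w_a^{1/(4d)}$), the tail $t>t_0$ still contributes $O((1+\ln w_a)^{-2\iota})$ by the trivial bound $F\le1$, while the main part $t<t_0$ is controlled by the H\"older--Koch--Tataru estimate alone and contributes a polynomially small error; this recovers the lemma with no unproven small-ball input.
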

\begin{remark}
 The above estimate was firstly proved by Z. Wang and one of the authors in \cite{WL2017} when $d=1$. 
 \end{remark}
 \begin{remark}
 From the above lemma, we obtain 
\[
(1+\ln w_a)^\beta(1+\ln w_b)^\beta\|Q_{[a]}^{[b]}\|\leq C. 
\]
Thus we need to introduce  a new space $\mathcal{M}_{\beta}$  which can be compared with the corresponding  one  $\overline{\mathcal{M}}_{s,\beta}$ in \cite{GP2019}.
See section \ref{secreducibility} for the definition of $\mathcal{M}_{\beta}$. 
 \end{remark}
\indent  In  \cite{GP2019} Gr\'ebert and Paturel 
used Lemma 4.3 called the key lemma  to solve the homological equation, in which 
they assume $ |\mu_a-\lambda_a|\le \frac{C_0}{w_a^\delta}$   for all $a\in\E$.  We remark this lemma was first built up in Proposition 2.2.4 \cite{DeSz2004} and also applied in \cite{FGr19} and  \cite{GP2016}.
But in our case we only have a weaker assumption
$ |\mu_a-\lambda_a|\le \frac{C_0}{(1+\ln w_a)^\delta}$ for all $a\in\E$
which comes from $Q\in \mathcal{M}_{\beta}$. 
On  the first try one  can follow  the proof of Lemma 4.3 in \cite{GP2019}, 
the estimate should be   
\begin{equation}\label{oldes}
\|B(k)_{[a]}^{[b]}\|\le  \frac{ C_{d} K^{d-1} \exp\{ C_{\delta,d} \kappa^{- \frac{1}{\delta}}\} }{\kappa(1+|w_a-w_b|)}\|A_{[a]}^{[b]}\|. 
\end{equation}
If we choose $\kappa\sim \epsilon^{\alpha}$, the term $\exp\{C_{\delta,d} \kappa^{- \frac{1}{\delta}}\}$ is too big for KAM iteration whatever $\alpha>0$ is chosen. 
If one  chooses  $\kappa\sim \frac{1}{|\ln \epsilon|^{\delta}}$, it follows $\exp\{ C_{\delta,d} \kappa^{- \frac{1}{\delta}}\}\sim \epsilon^{-\frac{1}{100}} $ which seems to be enough for KAM. 
But from a further investigation we find that one will face big troubles relative with measure estimates.  
Now it is clear that we need to develop some new estimate for $\|B(k)_{[a]}^{[b]}\|$.  
In the following we denote by $\M$ the set of infinite matrices $A:\mathcal{E}\times\mathcal{E}\mapsto\mathbb{C}$ which  satisfies 
$\sup\limits_{a,b\in\mathcal{E}} \|A_{[a]}^{[b]}\|<\infty$,
where $A_{[a]}^{[b]}$ denotes the restriction of $A$ on the block $[a]\times[b]$ and $\|\cdot\|$ denotes the operator norm.   
Here is the new estimate.
\begin{lemma}\label{criticallemma}
	Let $A\in\M$ and $B(k)$ defined for $k\in\mathbb{Z}^n$ with $|k|\le K$ by 
	\begin{equation}\label{clemeqn}
	B(k)_j^l=\frac{A_j^l}{k\cdot\omega-\mu_j+\mu_l},\quad j\in[a],l\in[b],
	\end{equation}
	where $\omega\in[0,2\pi)^n$ and $(\mu_a)_{a\in\E}$ is a sequence of positive real numbers satisfying
	\begin{equation}\label{mudist}
	|\mu_a-\lambda_a|\le \frac{C_0}{(1+\ln w_a)^\delta},  \quad\text{for all }a\in\E
	\end{equation}
 such that for all $a,b\in\E$ and all $|k|\le K$, 
	\begin{align}
	&|k\cdot\omega-\lambda_a+\lambda_b|\ge \gamma(1+|w_a-w_b|),\label{kolambda}\\
	&|k\cdot\omega-\mu_j+\mu_l|\ge\kappa(1+|w_a-w_b|),\quad j\in[a],l\in[b]\label{komu}
	\end{align}
	where 
	 $\gamma, \kappa, \delta, C_0>0$. Then $B(k)\in\M$ and there exists a positive constant $ C_{\delta,d}$  such that
	\begin{equation*}
	\|B(k)_{[a]}^{[b]}\|\le  \frac{ \exp\{ C_{\delta,d}\gamma^{-1/\delta}\}}{\kappa(1+|w_a-w_b|)}\|A_{[a]}^{[b]}\|,
	\end{equation*}
	where $\|\cdot\|$ denotes the operator norm and $C_{\delta,d} = \frac{d-1}{2}(2C_0)^{\frac{1}{\delta}}$.  
\end{lemma}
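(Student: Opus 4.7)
The plan is to transplant the Delort--Szeftel style ``key lemma'' argument (Proposition 2.2.4 of \cite{DeSz2004}, Lemma 4.3 of \cite{GP2019}) to our logarithmic setting, the crucial difference being that the exponential factor now depends on the unperturbed gap $\gamma$ rather than on the perturbed gap $\kappa$. I first reduce to a block Sylvester equation: since $\lambda_j = w_a$ for $j\in[a]$ and $\lambda_l = w_b$ for $l\in[b]$, the quantity $c := k\cdot\omega - w_a + w_b$ is constant on the block with $|c|\ge\gamma(1+|w_a-w_b|)$ by \eqref{kolambda}. Set $d_j:=\mu_j-w_a$ for $j\in[a]$, $d_l:=\mu_l-w_b$ for $l\in[b]$, $R_{[a]}:=\mathrm{diag}(d_j)_{j\in[a]}$, and $R_{[b]}:=\mathrm{diag}(d_l)_{l\in[b]}$; then \eqref{clemeqn} becomes $cB_{[a]}^{[b]} - R_{[a]}B_{[a]}^{[b]} + B_{[a]}^{[b]}R_{[b]} = A_{[a]}^{[b]}$, \eqref{mudist} yields $\|R_{[a]}\|\le r_a:=C_0/(1+\ln w_a)^\delta$ and $\|R_{[b]}\|\le r_b:=C_0/(1+\ln w_b)^\delta$, while \eqref{komu} gives $|c-d_j+d_l|\ge\kappa(1+|w_a-w_b|)$.

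The key step is a case split at the scale $W:=\exp((2C_0/\gamma)^{1/\delta})$, which is the value of $w$ at which $r_w=\gamma/2$. If $\min(w_a,w_b)\ge W$, then $r_a+r_b\le\gamma\le|c|$, and I use the Sylvester integral representation
\[
B_{[a]}^{[b]}=\mathrm{sgn}(c)\int_0^\infty e^{-t|c|}\,e^{\mathrm{sgn}(c)\,tR_{[a]}}\,A_{[a]}^{[b]}\,e^{-\mathrm{sgn}(c)\,tR_{[b]}}\,dt,
\]
which converges absolutely once $|c|>r_a+r_b$ and produces $\|B_{[a]}^{[b]}\|\le\|A_{[a]}^{[b]}\|/(|c|-r_a-r_b)$, comfortably within the claimed bound whenever $|w_a-w_b|\ge 1$. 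If instead $\min(w_a,w_b)<W$---WLOG $w_a<W$, so that $\#[a]\le w_a^{d-1}<W^{d-1}$---I combine the entrywise Frobenius bound with $\|A\|_F\le\sqrt{\#[a]}\,\|A\|_{op}$ to obtain
\[
\|B_{[a]}^{[b]}\|\le\|B_{[a]}^{[b]}\|_F\le\frac{\sqrt{\#[a]}}{\kappa(1+|w_a-w_b|)}\|A_{[a]}^{[b]}\|\le\frac{\exp\{\tfrac{d-1}{2}(2C_0/\gamma)^{1/\delta}\}}{\kappa(1+|w_a-w_b|)}\|A_{[a]}^{[b]}\|,
\]
which matches $C_{\delta,d}=\tfrac{d-1}{2}(2C_0)^{1/\delta}$.

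The main obstacle is the edge sub-case $w_a=w_b\ge W$ of the Neumann regime, where the integral converges only marginally (since $|c|$ can be as small as $\gamma$, comparable to $r_a+r_b$) while $\#[a]\sim w_a^{d-1}$ is no longer controlled by $W^{d-1}$. I plan to close this case by truncating the Neumann expansion at a finite order $n$, bounding the polynomial main terms in operator norm via $\sum_{k<n}(r_a+r_b)^k/|c|^{k+1}$ and estimating the remainder $c^{-n}T^n(cI-T)^{-1}A_{[a]}^{[b]}$ in Frobenius using $\|(cI-T)^{-1}\|_{F\to F}\le 1/(\kappa(1+|w_a-w_b|))$, where $T(X):=R_{[a]}X-XR_{[b]}$. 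Optimizing $n$ against the ratio $(r_a+r_b)/|c|$ should absorb the $\sqrt{\#[a]}$ factor into an exponential in $\gamma^{-1/\delta}$, giving the lemma without $\kappa$-dependence in the exponent.
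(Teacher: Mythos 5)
Your small-block case ($\min(w_a,w_b)<W$) is essentially the paper's ``Subcase 2'' verbatim: the entrywise/Frobenius estimate with $\|A\|_F\le\sqrt{\min(\#[a],\#[b])}\,\|A\|_{op}$ gives exactly the factor $(\min\{w_a,w_b\})^{(d-1)/2}\le W^{(d-1)/2}=\exp\{C_{\delta,d}\gamma^{-1/\delta}\}$. That part is correct.

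The large-block case, however, has a real gap, and your proposed patch does not close it. Your Sylvester integral factors out $e^{-t|c|}$ and then bounds $\|e^{\pm tR_{[a]}}\|\le e^{tr_a}$, $\|e^{\pm tR_{[b]}}\|\le e^{tr_b}$ separately, yielding $\|B\|\le\|A\|/(|c|-r_a-r_b)$. When $w_a=w_b$ (so $|c|\ge\gamma$ but possibly $\approx\gamma$, while $r_a+r_b\approx\gamma$ for $w_a$ just above $W$), the denominator can be arbitrarily small. The truncated-Neumann repair fails for the same reason: the ratio $(r_a+r_b)/|c|$ can be arbitrarily close to $1$, independently of the truncation order $n$, so the remainder $|c|^{-n}(r_a+r_b)^n$ does not decay; and the Frobenius factor $\sqrt{\#[a]}\sim w_a^{(d-1)/2}$ is \emph{not} bounded by $W^{(d-1)/2}$ here, since $w_a$ can be arbitrarily large once it exceeds $W$. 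Optimizing $n$ therefore cannot absorb $\sqrt{\#[a]}$ into an exponential depending only on $\gamma^{-1/\delta}$.

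The missing idea is the cancellation the paper exploits (and it is why the exponential factor can be dropped entirely in the large-block regime). Once $\min(w_a,w_b)>W$, the condition \eqref{kolambda} together with \eqref{mudist} forces all the divisors $k\cdot\omega-\mu_j+\mu_l$ to share the sign of $c=k\cdot\omega-\lambda_a+\lambda_b$. Say $c>0$. Rewrite the block Sylvester equation not as $cB=R_{[a]}B-BR_{[b]}+A$ but as
\begin{equation*}
B_{[a]}^{[b]}-D_{[a]}B_{[a]}^{[b]}\bigl(k\cdot\omega I_{[b]}+D_{[b]}\bigr)^{-1}=A_{[a]}^{[b]}\bigl(k\cdot\omega I_{[b]}+D_{[b]}\bigr)^{-1},
\end{equation*}
where $D_{[a]},D_{[b]}$ are the diagonal matrices of the $\mu$'s. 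Since $D_{[a]}$ and $k\cdot\omega I_{[b]}+D_{[b]}$ are both positive, the Neumann operator $\mathscr L(X)=D_{[a]}X(k\cdot\omega I_{[b]}+D_{[b]})^{-1}$ has norm $\max_j\mu_j/(k\cdot\omega+\min_l\mu_l)<1$, and the product $\|(\Id-\mathscr L)^{-1}\|\cdot\|(k\cdot\omega I_{[b]}+D_{[b]})^{-1}\|$ telescopes exactly to $1/(k\cdot\omega+\min_l\mu_l-\max_j\mu_j)$. The crucial observation is that $k\cdot\omega+\min_l\mu_l-\max_j\mu_j=k\cdot\omega-\mu_{j_*}+\mu_{l_*}$ is itself one of the small divisors, hence bounded below by $\kappa(1+|w_a-w_b|)$ via \eqref{komu}. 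This gives $\|B_{[a]}^{[b]}\|\le\|A_{[a]}^{[b]}\|/(\kappa(1+|w_a-w_b|))$ in the large-block regime, with no exponential factor and no issue at $w_a=w_b$. Equivalently, in your integral language: do not split off $e^{-t|c|}$, but write $B=\int_0^\infty e^{t(D_{[a]}-k\cdot\omega I_{[a]})}A\,e^{-tD_{[b]}}\,dt$ and bound using the one-sided quantities $\max_j(\mu_j-k\cdot\omega)$ and $\min_l\mu_l$; the resulting spectral gap is an actual divisor, so it sees $\kappa(1+|w_a-w_b|)$, whereas the two-sided bounds $r_a,r_b$ throw away the sign information that makes the denominator safe.
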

\begin{remark}
The term $\exp\{ C_{\delta,d}\gamma^{-1/\delta}\}$  is clearly better than the one $\exp\{C_{\delta,d} \kappa^{- \frac{1}{\delta}}\}$
in (\ref{oldes}).  We explain it in a heuristic way. For  example, if choose $\gamma\sim K^{-\alpha}$, it follows that 
$\exp\{ C_{\delta,d}\gamma^{-1/\delta}\} \sim \exp\big\{K^{\frac{\alpha}{\delta}} \big\}$ and $ \exp\big\{K^{\frac{\alpha}{\delta}} \big\}\leq C\epsilon^{-\frac14}$ when $\alpha<\delta$. 
The KAM iteration can be easily set up as \cite{GP2019}. 
\end{remark}
\indent  In the end let us review the previous works on the reducibility of some important PDEs and the behaviors of solutions in Sobolev spaces. 
We begin with  the quantum harmonic oscillators(for short ``QHO''). See \cite{Com87, GT2011, WL2017, Wang2008} for the reducibility results on 1-d QHO with bounded perturbations. Bambusi \cite{Bam2017,Bam2018} firstly built the reducibility results for 1-d QHO with unbounded perturbations.  His  proof was based on the pseudodifferential calculus, in which he  generalized the ideas from \cite{BBM2014, PT2001}(see also \cite{BM2016, FP2015, Mon2014}). It seems that the pseudodifferential calculus method hasn't been applied for the unbounded perturbation terms such as 
$\langle x\rangle^{\mu}\cos (x-\omega t)(\mu>0) $ (Remark 2.7 in \cite{Bam2017}), which have been solved recently by Luo and one of the authors( \cite{LiangLuo2021}) when $0\leq \mu<\frac13$.
For a general 1-d Schr\"odinger equation we recall the classical results \cite{BG2001} in which the potential grows at infinity like $|x|^{2\ell }$ with $\ell >1$ and the perturbation is bounded by $|x|^{\beta}$ with $\beta<\ell-1$. The limit case was solved by Liu-Yuan(\cite{LY2010}). Recently the upper bound for $\beta$ was improved further by Bambusi(\cite{Bam2017, Bam2018}). When the perturbation is limited to a class of terms  such as $\langle x\rangle^{\mu}\cos(x-\omega t)$ we(\cite{LW2020})  can improve the original upper boundedness from $\ell -1$ to at least $\ell -\frac34$.\\ 
\indent Reducibility for PDEs in high dimension was initiated by Eliasson-Kuksin \cite{EK2009}. We can refer to \cite{GP2019} and \cite{LiangW2019} for higher-dimensional QHO with bounded potential. The  reducibility result for n-d QHO was first built  in \cite{BGMR2018} by Bambusi-Gr\'ebert-Maspero-Robert.  Towards other PDEs Montalto \cite{Mon19} obtained the first reducibility result for linear wave equations with unbounded perturbations on $\mathbb T^d$, which can be applied to the linearized Kirchhoff equation in higher dimension.  Bambusi, Langella and Montalto \cite{BLM18} obtained the reducibility results for transportation equations with unbounded perturbations(\cite{FGiMP19}). See also  \cite{FGr19, FGN19} for a linear Schr\"odinger equation on zoll manifold with unbounded potential.  We remark that by implementing the above techniques  the KAM-type results of  quasi - linear PDEs such as  incompressible Euler flows in 3D  \cite{BaMon21}  and forced Kirchhoff equation on $\mathbb T^d$ \cite{CoMon18} have been built recently. \\
 \indent The reducibility results usually imply the boundedness of Sobolev norms. For the growth rate of the solutions with time in  $\mathcal{H}^s-$norm,  see \cite{BGMR2018} and  \cite{GrYa2000} for a $t^{s}$- polynomial growth for 1-d QHO with time periodic perturbations.  Delort \cite{Del2014} constructed a $t^{s/2}$- polynomial growth for 1-d QHO with certain time periodic order zero perturbation(see \cite{Mas2018} for a short proof).  Combining with the ideas in \cite{BGMR2018} and \cite{Eli1992}, Zhao, Zhou and one of the authors \cite{LZZ2020} obtained the precise dynamics for one class of 1-d QHO  with quasi - periodic in time quadratic perturbations, in which they also presented an exponential growth in time  for 1-d QHO.  Recently,  for 2-d QHO with perturbation which is decaying in $t$, Faou-Rapha\"el \cite{FaRa2020} constructed a solution whose $\mathcal{H}^1-$norm presents logarithmic growth with t. For 2-d QHO with perturbation being the projection onto Bargmann-Fock  space, Thomann \cite{Th2020}  constructed explicitly  a traveling wave whose Sobolev norm presents polynomial growth with $t$, based on the study in \cite{ScTh2020} for linear Lowest Landau equations(LLL) with a time-dependent potential.  There are also many literatures, e.g. \cite{BLM2021, BGMR2019, BM2019, Bou99a, Bou99b, FZ12, MR2017, WWM2008}, which are closely relative to the upper growth bound of the solution in Sobolev space. \\
 \indent The rest paper will be organized as follows. In Section 2,  a new reducibility theorem is presented. In Section 3, we check all the hypothesis of the  reducibility theorem  are satisfied which follows  the main theorem. In Section 4, after present the proof of key lemma  we prove the  reducibility theorem.  Finally, the appendix contains some technical Lemmas. 
 
\section{Reducibility Theorem}\label{secreducibility}
In this section we state an abstract reducibility theorem for quadratic quasiperiodic in time Hamiltonians of a general form
$
\sum_{a\in\mathcal{E}}\lambda_a\xi_a\eta_a+\epsilon\langle\xi,Q(\omega t)\eta\rangle.
$
\subsection{Setting}
Following \cite{GP2019}, we will introduce some spaces and relative algebraic properties for later use.

\noindent\textbf{Linear space.} Let $s\in\mathbb{R}$, we consider the complex weighted $\ell^2$ space
\[
\ell_s^2=\{\xi=(\xi_a\in\mathbb{C},a\in\mathcal{E}):\|\xi\|_s<\infty\},\text{ where }\|\xi\|_s^2=\sum_{a\in\mathcal{E}}w_a^s|\xi_a|^2.
\]
Then we define 
\[
Y_s=\ell_s^2\times\ell_s^2=\{\zeta=(\zeta_a=(\xi_a,\eta_a)\in\mathbb{C}^2,a\in\mathcal{E}):\|\zeta\|_s<\infty\},
\]
where 
\[
\|\zeta\|_s^2=\sum_{a\in\mathcal{E}}w_a^s|\zeta_a|^2=\sum_{a\in\mathcal{E}}w_a^s(|\xi_a|^2+|\eta_a|^2).
\]
We equip the space $Y_s,~s\ge0$, with the symplectic structure $\mi \md\xi\wedge \md\eta$. Let $f$ be a $\mathcal{C}^1$ smooth function, defined on a domain $\mathcal O\subset Y_s$, then we have the associated Hamiltonian system
\[
\begin{cases}
\dot{\xi}=-\mi\nabla_\eta f(\xi,\eta),&\\
\dot{\eta}=\mi\nabla_\xi f(\xi,\eta),&
\end{cases}
\] 
where $\nabla f=(\nabla_\xi f,\nabla_\eta f)^T$ is the gradient with respect to the scalar product in $Y_0$. Naturally, define the Poisson bracket for any $\mathcal{C}^1$ smooth functions $f$ and $g$, defined on a domain $\mathcal{O}\subset Y_s$
\[
\{f,g\}=-\mi\big(\sum_{a\in\mathcal{E}}\frac{\partial f}{\partial\xi_a}\cdot\frac{\partial g}{\partial\eta_a}-\frac{\partial g}{\partial\xi_a}\cdot\frac{\partial f}{\partial\eta_a}\big).
\]

\noindent\textbf{Infinite matrices.} Let $\beta\ge0$, we denote by $\M_\beta$ the set of infinite matrices $A:\mathcal{E}\times\mathcal{E}\mapsto\mathbb{C}$ that satisfy
\begin{equation*}
|A|_\beta:=\sup_{a,b\in\mathcal{E}}(1+\ln w_a)^\beta(1+\ln w_b)^\beta\|A_{[a]}^{[b]}\|<\infty,
\end{equation*}
where $A_{[a]}^{[b]}$ denotes the restriction of $A$ on the block $[a]\times[b]$ and $\|\cdot\|$ denotes the operator norm. Further we denote $\M=\M_0$. We will also need the more regular space $\M_\beta^+\subset\M_\beta$: an infinite matrix $A\in\M$ belongs to $\M_\beta^+$ if
\begin{equation*}
|A|_{\beta+}:=\sup_{a,b\in\mathcal{E}}(1+|w_a-w_b|)(1+\ln w_a)^\beta(1+\ln w_b)^\beta\|A_{[a]}^{[b]}\|<\infty.
\end{equation*} 

The following structural lemma is proved in Appendix \ref{proofstructurelemma}:
\begin{lemma}\label{structure}
Let $\beta>\frac12$, there exists an absolute constant $C\equiv C(\beta)>0$ such that
\begin{enumerate}
	\item[(i).\phantom{ii}] Let $A\in\M_\beta$ and $B\in\M_\beta^+$. Then $AB$ and $BA$ belong to $\M_\beta$ and 
\[
|AB|_\beta,|BA|_\beta\le C|A|_{\beta}|B|_{\beta+}.
\]
\item[(ii).\phantom{i}] Let $A,B\in\M_\beta^+$. Then $AB$ belongs to $\M_\beta^+$ and 
$
|AB|_{\beta+}\le C|A|_{\beta+}|B|_{\beta+}.
$
\item[(iii).] Let $A\in\M_\beta^+$. Then $\me^A-\Id$ belongs to $\M_\beta^+$ and
$
|\me^A-\Id|_{\beta+}\le\me^{C|A|_{\beta+}}|A|_{\beta+}.
$
\item[(iv).] Let $A\in\M_\beta$. Then for any $s\ge1$, $A\in\mathcal{B}(\ell_s^2,\ell_{-s}^2)$ and 
$
\|A\|_{\mathcal{B}(\ell_s^2,\ell_{-s}^2)}\le C|A|_\beta.
$
\item[(v).\phantom{i}] Let $A\in\M_\beta^+$. Then for any $s\in[-1,1]$,  $A\in\mathcal{B}(\ell_s^2)$ and
$
\|A\|_{\mathcal{B}(\ell_s^2)}\le C|A|_{\beta+}.
$
\end{enumerate}
\end{lemma}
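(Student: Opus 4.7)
The plan is to establish (i)–(v) using variants of Schur's test and Cauchy–Schwarz on sums over equivalence classes, all built around the convolution bound
\[
\sup_{j_0\in\widehat{\mathcal E}}\sum_{j\in\widehat{\mathcal E}}\frac{1}{(1+|j-j_0|)(1+\ln j)^{2\beta}}\le C(\beta),\qquad \beta>\tfrac12,
\]
verified by splitting the range into $j\in[j_0/2,2j_0]$ (contributing $\lesssim(\ln j_0)^{1-2\beta}=O(1)$) and the tail $|j-j_0|>j_0/2$ (reducing to the convergent series $\sum j^{-1}(\ln j)^{-2\beta}$, where $2\beta>1$).

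For (i), expand $(AB)_{[a]}^{[c]}=\sum_{[b]}A_{[a]}^{[b]}B_{[b]}^{[c]}$, insert the bounds implicit in $|A|_\beta$ and $|B|_{\beta+}$, pull $(1+\ln w_a)^{-\beta}(1+\ln w_c)^{-\beta}$ out of the sum, and apply the convolution bound with $j_0=w_c$; $BA$ is symmetric. For (ii), the additional factor $1+|w_a-w_c|$ is absorbed via $1+|w_a-w_c|\le(1+|w_a-w_b|)+(1+|w_b-w_c|)$, which splits the sum into two pieces each reducing to (i). Part (iii) follows by iterating (ii) to obtain $|A^n|_{\beta+}\le C^{n-1}|A|_{\beta+}^n$ and summing the Taylor series for $\me^A-\Id$. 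For (iv), bound $\|(A\xi)_{[a]}\|\le(1+\ln w_a)^{-\beta}|A|_\beta\sum_{[b]}(1+\ln w_b)^{-\beta}\|\xi_{[b]}\|$ and apply Cauchy–Schwarz against the weight $w_b^{-s/2}$ to extract $\|\xi\|_s$; the residual factor is $\bigl(\sum_{[b]}w_b^{-s}(1+\ln w_b)^{-2\beta}\bigr)^{1/2}$, finite when $s\ge 1$ and $\beta>1/2$, and the same summability over $[a]$ against $w_a^{-s}$ closes the bound for $\|A\xi\|_{-s}$.

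Part (v) is the most delicate. For $s=0$ I plan to use an asymmetric weighted Schur test: with $\rho_{ab}=(1+\ln w_a)^\alpha(1+\ln w_b)^{-\alpha}$ and $\alpha\in(1-\beta,\beta]$ (non-empty because $\beta>1/2$), Cauchy–Schwarz in the form
\[
\Bigl(\sum_{[b]}\|A_{[a]}^{[b]}\|\|\xi_{[b]}\|\Bigr)^2\le\Bigl(\sum_{[b]}\|A_{[a]}^{[b]}\|\rho_{ab}\Bigr)\Bigl(\sum_{[b]}\|A_{[a]}^{[b]}\|\rho_{ab}^{-1}\|\xi_{[b]}\|^2\Bigr)
\]
reduces both Schur sums to the convolution bound with exponent $\beta+\alpha>1$, while the residual prefactors $(1+\ln w_\bullet)^{\alpha-\beta}\le 1$ are harmless. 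To cover $|s|\in(0,1]$ I would pass to the rescaled matrix $\tilde A_{[a]}^{[b]}=w_a^{s/2}A_{[a]}^{[b]}w_b^{-s/2}$ (so that $\|A\|_{\mathcal B(\ell_s^2)}=\|\tilde A\|_{\mathcal B(\ell_0^2)}$) and split the $b$-sum into the three regimes $w_b\ge w_a$, $w_a/2\le w_b<w_a$, and $w_b<w_a/2$. The first two are covered by the $s=0$ argument, since $(w_a/w_b)^{|s|/2}=O(1)$; in the third, the sharper bound $(w_a/w_b)^{1/2}/(1+w_a-w_b)\lesssim(w_aw_b)^{-1/2}$ trades the apparent rescaling loss for an honest $w_b^{-1/2}$ gain, and the sum $\sum_{w_b<w_a/2}w_b^{-1/2}(\ln w_b)^{-2\beta}/\sqrt{w_a}$ is uniformly bounded. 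The main obstacle is the endpoint $|s|=1$: after rescaling only half an order of off-diagonal decay survives, and reconciling the dual regime $w_a\gg w_b$ (which appears in the second Schur sum $\sum_{[a]}\|\tilde A_{[a]}^{[b]}\|\rho_{ab}^{-1}$) will require either a dyadic decomposition of $A$ by $|w_a-w_b|$-annuli summed via Cotlar–Stein, or a commutator reformulation exploiting that $[H_0^{1/2},A]H_0^{-1/2}$ inherits useful structure from $A\in\M_\beta^+$. Keeping the constants uniform in $s$ is expected to be the most technical step.
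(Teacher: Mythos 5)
Parts (i)--(iv) of your plan match the paper's argument in essence: the convolution bound $\sum_{j}(1+|j-j_0|)^{-1}(1+\ln j)^{-2\beta}\le C(\beta)$ (the paper's Lemma A1), the triangle inequality $1+|w_a-w_c|\le(1+|w_a-w_b|)+(1+|w_b-w_c|)$ for (ii), iterating (ii) for (iii), and Cauchy--Schwarz for (iv) are all exactly what the paper does.

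Your part (v), however, has a real gap at $|s|=1$, and you correctly sense it --- but the heavy fixes you contemplate (Cotlar--Stein on dyadic annuli, commutators with $H_0^{1/2}$) are not the intended route and are not needed. Your symmetric Schur test on the rescaled matrix $\tilde A_{[a]}^{[b]}=w_a^{s/2}A_{[a]}^{[b]}w_b^{-s/2}$ does genuinely fail at $s=1$: in the dual marginal $\sum_{[a]}\|\tilde A_{[a]}^{[b]}\|\rho_{ab}^{-1}$, the regime $w_a\gg w_b$ gives $\|\tilde A_{[a]}^{[b]}\|\lesssim(w_aw_b)^{-1/2}$, and $\sum_a w_a^{-1/2}(\ln w_a)^{-p}$ diverges for every $p$; no choice of logarithmic weight $\rho$ can repair a divergent power series. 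The paper instead performs an \emph{asymmetric} Cauchy--Schwarz directly on $\|A\xi\|_s^2$: it writes $w_a^{s/2}=(w_a/w_b)^{s/2}w_b^{s/2}$ and splits the inner $b$-sum into the two factors
\[
\frac{(w_a/w_b)^{s/2}}{(1+\ln w_b)^\beta(1+|w_a-w_b|)^{1/2}}\quad\text{and}\quad\frac{w_b^{s/2}\|\xi_{[b]}\|}{(1+|w_a-w_b|)^{1/2}}.
\]
The square of the first factor, summed over $b$, is controlled uniformly in $a$ by the key estimate
\[
(I)=\sum_{b\in\widehat{\mathcal E}}\frac{(w_a/w_b)^s}{(1+\ln w_b)^{2\beta}(1+|w_a-w_b|)}\le C\qquad\text{for all }s\in[0,1],
\]
proved by the very same splitting you describe ($w_b>w_a/2$ versus $w_b\le w_a/2$, where $1+|w_a-w_b|\ge\tfrac12 w_a^s w_b^{1-s}$); the endpoint $s=1$ poses no difficulty there. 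The square of the second factor keeps $w_b^s\|\xi_{[b]}\|^2$ intact, so after multiplying, summing over $a$, and swapping the order of summation, the outer sum $\sum_a(1+\ln w_a)^{-2\beta}(1+|w_a-w_b|)^{-1}$ is bounded by Lemma A1 and one extracts $\|\xi\|_s^2$ cleanly. In short: you do not need a bona fide Schur test with bounded marginals in both directions; one marginal is bounded by $(I)$, the other is absorbed into $\|\xi\|_s^2$ after Fubini. The case $s\in[-1,0)$ is then symmetric, swapping the roles of $a$ and $b$ in $(I)$.
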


\noindent\textbf{Normal form.}
\begin{definition}\label{normalform}
A matrix $Q:\mathcal{E}\times\mathcal{E}\mapsto\mathbb{C}$ is in normal form, denoted by $Q\in\nf$, if
\begin{enumerate}
	\item[(i).\phantom{i}] $Q$ is block diagonal, i.e. $Q_a^b=0$ for all $w_a\ne w_b$.
	\item[(ii).] $Q$ is Hermitian, i.e. $Q_b^a=\overline{Q_{a}^{b}}$.
\end{enumerate}
\end{definition}
Notice that a block diagonal matrix with bounded blocks in operator norm defines a bounded operator on $\ell_s^2,~\forall s\in \mathbb R$, which implies $\M_\beta\cap\nf\subset\mathcal{B}(\ell_s^2)$. To a matrix $Q=(Q_a^b)_{a,b\in\mathcal{E}}\in\mathcal{B}(\ell_s^2,\ell_{-s}^2)$ for $s\ge0$, define in a unique way a quadratic form on $Y_s\ni(\zeta_a)_{a\in\mathcal{E}}=(\xi_a,\eta_a)_{a\in\mathcal{E}}$ by the  formula
$
q(\xi,\eta)=\langle\xi,Q\eta\rangle=\sum_{a,b\in\mathcal{E}}Q_a^b\xi_a\eta_b.
$
A straightforward computation leads to
\begin{equation}\label{liebracket}
\{q_1^{},q_2^{}\}(\xi,\eta)=\mi\langle\xi,[Q_1,Q_2]\eta\rangle,
\end{equation}
where $[Q_1,Q_2]$ is the communicator of two matrices $Q_1$ and $Q_2$.

In the following, by abuse of language, we will call both of $q$ and $Q$ the Hamiltonian.

\noindent\textbf{Parameter.} In all the paper $\omega$ will play the role of a parameter belonging to $\mathcal{D}_0=[0,2\pi)^n$. All the constructed maps will be dependent on $\omega$ with $\mathcal{C}^1$ regularity. When a map is only defined on a Cantor subset of $\mathcal{D}_0$ the regularity has to be understood in the Whitney sense.

\noindent\textbf{A class of quadratic Hamiltonian.} Let $\beta>\frac12,\,\sigma>0$ and $\mathcal{D}\subset\mathcal{D}_0$ and denote by $\M_\beta(\mathcal{D},\sigma)$ the set of $\mathcal{C}^1$ mappings
$
\mathcal{D}\times\mathbb{T}_\sigma^n\ni(\omega,\varphi)\mapsto Q(\omega,\varphi)\in\M_\beta,
$
which is real analytic in $\varphi\in\mathbb{T}_\sigma^n$.  Endow the space with the norm 
\begin{equation*}
[Q]_\beta^{\mathcal{D},\sigma}:=\sup_{\substack{\omega\in\mathcal{D},j=0,1\\ |\Im\varphi|<\sigma}}|\partial_\omega^jQ(\omega,\varphi)|_\beta.
\end{equation*} 
In view of Lemma \ref{structure} (iv), to a matrix $Q\in\M_\beta(\mathcal{D},\sigma)$ define  the quadratic form in $Y_1$
$
q(\xi,\eta;\omega,\varphi)=\langle\xi,Q(\omega,\varphi)\eta\rangle
$
and a straightforward computation leads to
\[
|q(\xi,\eta;\omega,\varphi)|\le C[Q]_\beta^{\mathcal{D},\sigma}\|(\xi,\eta)\|_1^2,\quad\text{for}~(\xi,\eta)\in Y_1,~(\omega,\varphi)\in\mathcal{D}\times\mathbb{T}_\sigma^n.
\]
 Denote by $\M_\beta^+(\mathcal{D},\sigma)$ the subspace of $\M_\beta(\mathcal{D},\sigma)$ formed by Hamiltonians $S$, verifying $S(\omega,\varphi)\in\M_\beta^+$  and endow it with the norm
$
[S]_{\beta+}^{\mathcal{D},\sigma}:=\sup_{\substack{\omega\in\mathcal{D},j=0,1\\ |\Im\varphi|<\sigma}}|\partial_\omega^jS(\omega,\varphi)|_{\beta+}.
$
The space of Hamiltonians $N\in\M_\beta(\mathcal{D},\sigma)$ that are independent on $\varphi$ will be denoted by $\M_\beta(\mathcal{D})$ and equipped with the norm
$
[N]_\beta^\mathcal{D}:=\sup_{\omega\in\mathcal{D},j=0,1}|\partial_{\omega}^jN(\omega)|_\beta.
$

\noindent\textbf{Hamiltonian flow.} To any $R\in\M_\beta^+$ with $\beta>\frac12$, define in a unique way the symplectic linear change of variables on $Y_s$:
$
(\xi,\eta)\mapsto(\me^{-\mi R^T}\negthickspace\xi,\me^{\mi R}\eta).
$
It is well-defined and invertible in $\mathcal{B}(\ell_s^2)$ for all $s\in[-1,1]$ by the assertion (v) of Lemma \ref{structure}. Concretely, the change of variables can be regarded as the time-one flow generated by the quadratic Hamiltonian $\chi(\xi,\eta)=\langle\xi,R\eta\rangle$ and it preserves the symmetry $\eta=\bar{\xi}$ for any initial datum considered in the paper if and only if $R$ is a Hermitian matrix, i.e.
\begin{equation}\label{hermitian}
R^T=\overline{R}.
\end{equation}
When $R$ also depends smoothly on $\varphi,~\mathbb{T}_\sigma^n\ni\varphi\mapsto R(\varphi)\in\M_\beta^+$ we associate to $R$ the symplectic linear change of variables on the extended phase space
\begin{equation}\label{change}
\phi_\chi^1(y,\varphi,\xi,\eta)=(\tilde{y},\varphi,\me^{-\mi R^T}\negthickspace\xi,\me^{\mi R}\eta),
\end{equation}
where  the Hamiltonian $\chi(y,\varphi,\xi,\eta)=\langle\xi,R(\varphi)\eta\rangle$ and
$
\tilde{y}=y-\int_0^1\langle\me^{-\mi tR^T}\negthickspace\xi,\nabla_\varphi R\me^{\mi tR}\eta\rangle\md t.
$
\subsection{Hypothesis on the spectrum}
Now we present our hypothesis on $\lambda_a,\,a\in\mathcal{E}$:
\begin{hypo1}[asymptotics] Assume that there exists an absolute constant $c_0>0$ such that
\begin{equation}\label{asymptotics}
\lambda_a\ge c_0\text{ and }|\lambda_a-\lambda_b|\ge c_0|w_a-w_b|,\text{ for all } a,b\in\mathcal{E}.
\end{equation}
\end{hypo1}
\begin{hypo2}[second Melnikov condition in measure] There exist absolute positive  constants $\tau_1,\tau_2$ and $C$ such that the following holds: for each $\gamma>0$ and $K\ge1$ there exists a closed subset $\mathcal{D}'=\mathcal{D}'(\gamma,K)\subset\mathcal{D}_0$ satisfying
\begin{equation*}
\text{Meas}(\mathcal{D}_0\setminus\mathcal{D}')\le CK^{\tau_1}\gamma^{\tau_2}
\end{equation*}
such that for all $\omega\in\mathcal{D}'$, all $k\in\mathbb{Z}^n$ with $0<|k|\le K$ and all $a,b\in\mathcal{E}$ we have
\begin{equation*}
|k\cdot\omega-\lambda_a+\lambda_b|\ge\gamma(1+|w_a-w_b|).
\end{equation*}
\end{hypo2}

\subsection{The reducibility theorem}
Consider the Hamiltonian
\begin{equation}\label{oriHamiltonian}
H_\omega(t,\xi,\eta)=\sum_{a\in\mathcal{E}}\lambda_a\xi_a\eta_a+\epsilon\langle\xi,Q(\omega t)\eta\rangle
\end{equation}
and the associated non-autonomous Hamiltonian system on $Y_s$:
\begin{equation}\label{oriHamiltoniansys}
\begin{cases}
\dot{\xi}=-\mi N_0\xi-\mi\epsilon Q^T(\omega t)\xi,&\\
\dot{\eta}=\mi N_0\eta+\mi\epsilon Q(\omega t)\eta,&
\end{cases}
\end{equation}
where $N_0=\text{diag}(\lambda_a:a\in\mathcal{E})$.

\begin{theorem}\label{reduthm}
Let $\alpha=\frac{\alpha_1}{\alpha_2}+1$ with $\alpha_1=\max\{\tau_1,n+d\}$ and $\alpha_2=\max\{\tau_2,1\}$. Fix $\beta\ge\frac\alpha2$ and $\sigma>0$. Assume that $(\lambda_a)_{a\in\mathcal{E}}$ satisfies Hypothesis H1, H2 and that $Q\in\M_\beta(\mathcal{D}_0,\sigma)$, then there exists $\epsilon_*>0$ such that for all $0\le\epsilon<\epsilon_*$, there exists $C>0$ and
\begin{enumerate}
	\item[(i).\phantom{ii}] a Cantor set $\mathcal{D}_\epsilon\subset\mathcal{D}_0$ of asymptotically full measure:
\begin{equation}\label{DcloseD0}
	\text{Meas}(\mathcal{D}_0\setminus\mathcal{D}_\epsilon)\le C\epsilon^{\alpha_2/6};
\end{equation}
\item[(ii).\phantom{i}] a $\mathcal{C}^1$ family (in $\omega\in\mathcal{D}_\epsilon$) of real analytic (in $\varphi\in\mathbb{T}_{\sigma/2}^n$) linear, unitary and symplectic coordinate transformations on $Y_0$:
\[
\Phi_{\omega}(\varphi):(\xi,\eta)\mapsto(\overline{M_\omega(\varphi)}\xi,M_\omega(\varphi)\eta),\quad(\omega,\varphi)\in\mathcal{D}_\epsilon\times\mathbb{T}_{\sigma/2}^n;
\]
\item[(iii).] a $\mathcal{C}^1$ family of quadratic autonomous Hamiltonians in normal form 
$\mathcal{H}_\omega=\langle\xi,N_\omega\eta\rangle$, 
where $N_\omega\in\nf$ is close to $N_0$:
\begin{equation}\label{NcloseN0}
\|N_\omega-N_0\|_\beta\le C\epsilon,\quad\omega\in\mathcal{D}_\epsilon,
\end{equation}
such that $t\mapsto(\xi(t),\eta(t))\in Y_1$ is a solution of autonomous Hamiltonian system associated with $\mathcal{H}_\omega$:
\[
\begin{cases}
\dot{\xi}=-\mi N^{T}_\omega\xi&\\
\dot{\eta}=\mi N_\omega\eta
\end{cases}
\]
if and only if $t\mapsto\Phi_\omega(\omega t)(\xi(t),\eta(t))\in Y_1$ is a solution of the original Hamiltonian system \eqref{oriHamiltoniansys}. Furthermore,  $\Phi_\omega(\varphi)$ is  a bounded operator from $Y_p$ into itself for all $p\in[0,1]$ and close to identity:
\begin{equation}\label{Mcloseidentity}
\|M_\omega(\varphi)-\Id\|_{\mathcal{B}(\ell_p^2)}\le C\epsilon^\frac5{12}.
\end{equation} 
\end{enumerate}
\end{theorem}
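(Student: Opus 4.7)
The plan is to follow the KAM reducibility scheme of Gr\'ebert--Paturel \cite{GP2019} in its overall architecture, with the essential modification being that Lemma \ref{criticallemma} of the present paper takes the place of their key lemma wherever the homological equation is solved. The induction constructs, on nested Cantor subsets $\mathcal{D}_0\supset\mathcal{D}_1\supset\cdots$, normal forms $N_m\in\nf$ with $N_m-N_0$ of size $O(\epsilon)$ in $\M_\beta(\mathcal{D}_m)$, quasiperiodic perturbations $P_m\in\M_\beta(\mathcal{D}_m,\sigma_m)$ with $[P_m]_\beta^{\mathcal{D}_m,\sigma_m}\le\epsilon_m$, and symplectic conjugacies $\Phi_m$ defined via \eqref{change} as the time-one flow of a Hermitian quadratic $\chi_m=\langle\xi,R_m(\omega t)\eta\rangle$. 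Composing $\Phi_1\circ\cdots\circ\Phi_m$ and letting $m\to\infty$ on $\mathcal{D}_\epsilon=\bigcap_m\mathcal{D}_m$ will produce $\Phi_\omega$ and $N_\omega$ satisfying (i)-(iii). The KAM parameters are geometric: $\epsilon_0=\epsilon$, $\epsilon_{m+1}=\epsilon_m^{4/3}$; $\sigma_m\searrow\sigma/2$; Fourier cut-offs $K_m\sim|\ln\epsilon_m|$ chosen so that $\me^{-K_m(\sigma_m-\sigma_{m+1})}\sim\epsilon_m$; and Melnikov thresholds $\gamma_m\sim K_m^{-\alpha}$, $\kappa_m=\tfrac12\gamma_m$, with $\alpha=\alpha_1/\alpha_2+1$ tuned so that the measure loss $CK_m^{\alpha_1}\gamma_m^{\alpha_2}$ at each step is summable while the factor $\exp(C_{\delta,d}\gamma_m^{-1/\delta})$ produced by Lemma \ref{criticallemma} remains sub-polynomial in $\epsilon_m^{-1}$, in accordance with the heuristic following that lemma.

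At stage $m$ the step reduces to solving the homological equation
\[
\omega\cdot\partial_\varphi R_m + \mi[N_m,R_m] = \Pi_{K_m}P_m - \langle P_m\rangle_{\nf},
\]
where $\Pi_{K_m}$ projects on Fourier modes $|k|\le K_m$ and $\langle P_m\rangle_{\nf}$ denotes the block-diagonal time-average; the latter is absorbed into $N_{m+1}=N_m+\epsilon\langle P_m\rangle_{\nf}$. Since $P_m\in\M_\beta$, the diagonal blocks of $\langle P_m\rangle_{\nf}$ decay like $(1+\ln w_a)^{-2\beta}$, and induction yields $|\mu_a^{(m)}-\lambda_a|\le C_0\epsilon(1+\ln w_a)^{-2\beta}$ for the eigenvalues $\mu_a^{(m)}$ of $N_m$, which is hypothesis \eqref{mudist} of Lemma \ref{criticallemma} with $\delta=2\beta\ge\alpha$. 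The Melnikov conditions \eqref{kolambda}-\eqref{komu} are enforced by construction of $\mathcal{D}_m$, so solving the equation in Fourier blocks reduces exactly to \eqref{clemeqn}; Lemma \ref{criticallemma} then gives, after Fourier summation with the strip loss $\sigma_m-\sigma_{m+1}$, the bound $[R_m]_{\beta+}^{\mathcal{D}_m,\sigma_{m+1}}\le\epsilon_m^{3/4}$. The transformed perturbation $P_{m+1}$ is the sum of the exponentially small Fourier-tail of $P_m$ and the commutators $[R_m,P_m]$, both estimated in $\M_\beta$ via Lemma \ref{structure} (i)-(ii), giving $[P_{m+1}]_\beta^{\mathcal{D}_{m+1},\sigma_{m+1}}\le\epsilon_{m+1}$; the conjugacy $\Phi_m$ is bounded on $Y_p$ for $p\in[0,1]$, unitary on $Y_0$, by Lemma \ref{structure} (iii),(v) together with \eqref{hermitian}.

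The main obstacle is the preservation of the second Melnikov condition on $\mathcal{D}_m$ as the eigenvalues $\mu^{(m)}$ drift, and the accompanying measure estimate. The $\omega$-derivatives of $\mu_a^{(m)}$ are controlled inductively by the $\mathcal{C}^1$ norm $[\cdot]_\beta^{\mathcal{D}_m}$ of the cumulative corrections to $N_0$, which remains $O(\epsilon)$; a one-dimensional slicing argument in the direction of a dominant Fourier vector then bounds the measure of the set excised at step $m$ by $CK_m^{\alpha_1}\gamma_m^{\alpha_2}$, with Hypothesis H2 supplying the base of the induction. Summing in $m$ with the parameter choices above produces $\mathrm{Meas}(\mathcal{D}_0\setminus\mathcal{D}_\epsilon)\le C\epsilon^{\alpha_2/6}$, the exponent $1/6$ reflecting the compound losses from strip contraction, truncation and the iteration rate $4/3$. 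Telescoping the bounds $[R_m]_{\beta+}\le\epsilon_m^{3/4}$ gives uniform convergence of $\Phi_1\circ\cdots\circ\Phi_m\to\Phi_\omega$ in $\mathcal{B}(Y_p)$ with $\|M_\omega-\Id\|\lesssim\epsilon^{5/12}$ as in \eqref{Mcloseidentity}, while $N_m\to N_\omega$ in $\M_\beta$ satisfies \eqref{NcloseN0}. Hermiticity of each $R_m$ propagates unitarity on $Y_0$ and the symplectic structure of $\Phi_\omega$, completing the reduction of \eqref{oriHamiltoniansys} to the asserted autonomous form.
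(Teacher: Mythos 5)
Your architecture is the right one and matches the paper's, but the KAM parameters you propose are incompatible both with the measure estimate \eqref{DcloseD0} and with the mechanism that absorbs the exponential factor from Lemma \ref{criticallemma}, and this is precisely where the technical content of the paper lies.

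Concretely, in the measure estimate for the second Melnikov conditions with the perturbed eigenvalues, the key point is that the eigenvalue drift $|\mu_a^{(m)}-\lambda_a|\le 2\epsilon_0(1+\ln w_a)^{-2\beta}$ is already below the threshold $\gamma_m$ once $w_a>\exp\{(4\epsilon_0/\gamma_m)^{1/(2\beta)}\}$, so excision is needed only for the remaining finitely many blocks. The number of those blocks is $\sim\exp\{C(\epsilon_0/\gamma_m)^{1/(2\beta)}\}$, and each contributes a resonant set of measure $\sim\kappa_m$, giving a total of order $K_m^{n+d}\exp\{C(\epsilon_0/\gamma_m)^{1/(2\beta)}\}\,\kappa_m$. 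This is the origin of the constraint $\exp\{8d(\epsilon_0/\gamma_m)^{1/(2\beta)}\}\,\kappa_m\le\gamma_m$ in the paper's Proposition \ref{iterationproposition}. With your choice $\kappa_m=\tfrac12\gamma_m$, that constraint reduces to $\exp\{8d(\epsilon_0/\gamma_m)^{1/(2\beta)}\}\le 2$, i.e. $\gamma_m\gtrsim\epsilon_0$, which contradicts $\gamma_m\sim K_m^{-\alpha}\to 0$. So your measure loss at step $m$ is not $\lesssim K_m^{\alpha_1}\gamma_m^{\alpha_2}$ but carries an uncontrolled exponential, and the same exponential ruins the $[S_m]_{\beta+}$ estimate that your commutator bound $[P_{m+1}]_\beta\le\epsilon_{m+1}$ relies on. The paper resolves this by taking $\kappa_m=\epsilon_{m-1}^{1/4}$, so that $\ln(\gamma_m/\kappa_m)\sim\frac14\ln\epsilon_{m-1}^{-1}$ grows fast enough to dominate $(\epsilon_0/\gamma_m)^{1/(2\beta)}$ for all $m$ (this is exactly condition \eqref{equivcondition}, which is where the hypothesis $\beta\ge\alpha/2$ is used).

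Separately, even if the iteration did close, your $\gamma_m\sim K_m^{-\alpha}$ with no $\epsilon$-dependent prefactor yields only $\mathrm{Meas}(\mathcal{D}_0\setminus\mathcal{D}_\epsilon)\lesssim\sum_m K_m^{-\alpha_2}\sim(\ln\epsilon^{-1})^{-\alpha_2}$, which is far weaker than the claimed $C\epsilon^{\alpha_2/6}$. To obtain \eqref{DcloseD0} one must build a small power of $\epsilon$ into the threshold, as the paper does with $\gamma_m=\epsilon_0^{1/6}(\ln\epsilon_{m-1}^{-1})^{-\alpha}$; the exponent $5/12$ in \eqref{Mcloseidentity} is then inherited from the combination $\epsilon_0^{1/6}\cdot\epsilon_0^{1/4}$ in $[S_1]_{\beta+}$, not from the iteration rate as you suggest. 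The iteration rate $5/4$ vs. your $4/3$ is a harmless cosmetic difference, but the coupling $\gamma_m=\epsilon_0^{1/6}(\ln\epsilon_{m-1}^{-1})^{-\alpha}$, $\kappa_m=\epsilon_{m-1}^{1/4}$ and the resulting verification of \eqref{criticalcondition} are what make the scheme close; these need to be supplied, not asserted.
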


\section{Applications to the quantum harmonic oscillator on $\mathbb{R}^d$}
In this section we will prove Theorem \ref{reduthm} which concludes Theorem \ref{mainthm}.
\subsection{Verification of the hypothesis}
We first verify the Hypothesis of Theorem \ref{reduthm} for the quantum harmonic oscillator equation \eqref{mainthmorieqn}.
\begin{lemma}[see \cite{GP2016}]\label{verificationHypothesis}
	When $\lambda_a=w_a,~a\in\mathcal{E}$. Hypothesis H1 and H2 hold true with $c_0=1$ and $\mathcal{D}_0=[0,2\pi)^n$ and $\tau_1=n+1,\tau_2=1$.
\end{lemma}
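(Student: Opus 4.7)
\medskip
\noindent\textbf{Proof plan.} Hypothesis H1 is immediate: since the spectrum of $H_0$ is $\widehat{\mathcal E}=\{d,d+2,\dots\}$, we have $\lambda_a=w_a\ge d\ge 1$, and $|\lambda_a-\lambda_b|=|w_a-w_b|$, so one can take $c_0=1$. The real content is verifying H2, which is a standard Diophantine/measure estimate; I would carry it out as follows.

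\medskip
\noindent Since $\lambda_a-\lambda_b=w_a-w_b$ is an integer (in fact an even one), the condition $|k\cdot\omega-\lambda_a+\lambda_b|\ge\gamma(1+|w_a-w_b|)$ reduces, for fixed $a,b$, to a lower bound on $|k\cdot\omega-m|$ with $m=w_a-w_b\in\mathbb Z$. Thus it suffices to control, for $0<|k|\le K$ and $m\in\mathbb Z$, the resonant set
\[
R_{k,m}(\gamma):=\{\omega\in\mathcal D_0:|k\cdot\omega-m|<\gamma(1+|m|)\}
\]
and put $\mathcal D'(\gamma,K):=\mathcal D_0\setminus\bigcup_{0<|k|\le K,\,m\in\mathbb Z}R_{k,m}(\gamma)$.

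\medskip
\noindent The plan is: first, for fixed $k\ne 0$ pick a component $k_j$ with $|k_j|\ge|k|/\sqrt n$ and integrate out the remaining variables; Fubini gives
\[
\mathrm{Meas}\,R_{k,m}(\gamma)\le C_n\,\frac{\gamma(1+|m|)}{|k|}.
\]
Second, observe that if $\omega\in\mathcal D_0=[0,2\pi)^n$ and $\omega\in R_{k,m}(\gamma)$, then $|m|\le |k\cdot\omega|+\gamma(1+|m|)\le 2\pi n K+\gamma(1+|m|)$, so for $\gamma\le 1/2$ only the indices $|m|\le CK$ are relevant. Third, sum over $k,m$:
\[
\mathrm{Meas}(\mathcal D_0\setminus\mathcal D'(\gamma,K))\le \sum_{0<|k|\le K}\sum_{|m|\le CK}\frac{C_n\,\gamma(1+|m|)}{|k|}\le C'_n\,\gamma\,K^2\sum_{0<|k|\le K}\frac{1}{|k|}\le C''_n\,\gamma\,K^{n+1}.
\]
This gives the stated exponents $\tau_1=n+1$ and $\tau_2=1$. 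Closedness of $\mathcal D'(\gamma,K)$ follows from taking the strict inequality in $R_{k,m}(\gamma)$, since this set is open.

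\medskip
\noindent There is no serious obstacle here; the only point worth being careful about is that the factor $(1+|w_a-w_b|)$ on the right-hand side, combined with the restriction $|m|\lesssim K$ coming from $\omega\in[0,2\pi)^n$, is what prevents a divergent sum over $m$ and forces the extra power of $K$ (hence $\tau_1=n+1$ rather than $n$). For $n=1$ the intermediate sum $\sum_{0<|k|\le K}|k|^{-1}$ produces a logarithm, but this is harmlessly absorbed into $K^{n+1}=K^2$.
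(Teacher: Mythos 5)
The paper itself offers no proof of this lemma; it merely cites \cite{GP2016}. Your overall strategy is the standard one: reduce H2 to a measure bound on the resonant sets $R_{k,m}(\gamma)$, use Fubini to get $\mathrm{Meas}\,R_{k,m}\lesssim\gamma(1+|m|)/|k|$, exploit the boundedness of $\omega$ to restrict $m$ to a finite range, and sum. For $n\ge2$ your estimate does give $\tau_1=n+1$, $\tau_2=1$.

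However, there is a genuine gap when $n=1$. With the crude restriction $|m|\le CK$ you obtain
\[
\sum_{0<|k|\le K}\sum_{|m|\le CK}\frac{\gamma(1+|m|)}{|k|}\lesssim\gamma K^2\sum_{0<|k|\le K}\frac{1}{|k|},
\]
and for $n=1$ the inner sum is $\sim\log K$, yielding $\gamma K^2\log K$. Your claim that the logarithm is ``harmlessly absorbed into $K^{n+1}=K^2$'' is not correct: a $\log K$ factor is unbounded and cannot be hidden in the constant $C$ of Hypothesis H2, so as written you do not obtain $\tau_1=n+1$ for $n=1$.

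The fix is to use the sharper restriction $|m|\le C_n|k|$, which follows from $|m|\le|k\cdot\omega|+\gamma(1+|m|)\le C_n|k|+\gamma(1+|m|)$ for $\omega\in[0,2\pi)^n$ and $\gamma\le\tfrac12$ (note the bound depends on $|k|$, not on $K$). Then
\[
\sum_{0<|k|\le K}\sum_{|m|\le C_n|k|}\frac{\gamma(1+|m|)}{|k|}\lesssim\gamma\sum_{0<|k|\le K}\frac{|k|^2}{|k|}=\gamma\sum_{0<|k|\le K}|k|\lesssim\gamma K^{n+1},
\]
which gives $\tau_1=n+1$, $\tau_2=1$ uniformly for all $n\ge1$ with no logarithmic loss. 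The verification of H1 and the closedness of $\mathcal D'(\gamma,K)$ in your proposal are fine.
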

\noindent Proof of Lemma \ref{keylemma}:
	Let $\Psi_a\in E_{[a]},\Psi_b\in E_{[b]}$ and $\|\Psi_a\|=\|\Psi_b\|=1$, then
	\begin{equation}\label{intab}
	\lt|\int_{\mathbb{R}^d}V(x,\varphi)\Psi_a(x)\Psi_b(x)\md x\rt|\le\|V^\frac12(x,\varphi)\Psi_a(x)\|\|V^\frac12(x,\varphi)\Psi_b(x)\|.
	\end{equation}
   By the symmetry we only need to estimate $\|V^\frac12(x,\varphi)\Psi_a(x)\|$.
   
   Let $\mathbb{D}_a=\{x\in\mathbb{R}^d:~|x|\le w_a^{1/(4d)}\}$ and denote by $\mathbb{D}_a^c$ its complement, then one has 
	\begin{equation}\label{inta}
\|V^\frac12(x,\varphi)\Psi_a(x)\|\le\|V^{\frac12}(x,\varphi)\Psi_a(x)\|_{L^2(\mathbb{D}_a)}+\|V^{\frac12}(x,\varphi)\Psi_a(x)\|_{L^2(\mathbb{D}_a^c)}.
	\end{equation} 
 Clearly, since $|V(x,\varphi)|\le C$ then one has 
	\begin{alignat}{4}
	&&\hspace{1pt}&\|V^{\frac12}(x,\varphi)\Psi_a(x)\|_{L^2(\mathbb{D}_a)}\le\|\Psi_a(x)\|_{L^p(\mathbb{D}_a)}\|V^{\frac12}(x,\varphi)\|_{L^q(\mathbb{D}_a)}\qquad\underline{~\frac1p+\frac1q=\frac12~}\notag\\
	&\le&&C\|\Psi_a(x)\|_{L^p(\mathbb{R}^d)}\lt(\text{Meas}(\mathbb{D}_a)\rt)^{\frac1q}\notag\\
	&\le&&Cw_a^{-\frac12(\frac12-\frac1p)}w_a^{1/(4q)}\hspace{1cm}\underline{{\text{ by Cor. 3.2 in  \cite{KT2005} with }p=\frac{2(d+2)}{d+1}}~}\notag\\
	&\le&&Cw_a^{-\frac1{4q}}=Cw_a^{-\frac1{8(d+2)}}.\label{inta1}
	\end{alignat}
	Besides, for $x\in\mathbb{D}_a^c$ we have $|x|\ge w_a^{1/(4d)}$. Then 
	$
	1+\ln(1+|x|^2)\ge 1+\ln w_a^{1/(2d)}\ge\frac{1+\ln w_a}{2d}.
	$
	By \eqref{potentialcd} one has 
	$
	|V(x,\varphi)|\le\frac{C}{(1+\ln(1+|x|^2))^{2\iota}}\le\frac{C_{d,\iota}}{(1+\ln w_a)^{2\iota}}
	\quad\text{for }x\in\mathbb{D}_a^c.
	$
	Hence, we obtain
	\[
	\|V^{\frac12}(x,\varphi)\Psi_a(x)\|_{L^2(\mathbb{D}_a^c)}\le\frac{C_{d,\iota}}{(1+\ln w_a)^{\iota}}\|\Psi_a(x)\|_{L^2(\mathbb{R}^d)}\le\frac{C_{d,\iota}}{(1+\ln w_a)^{\iota}}.
	\]
	Combining \eqref{inta}, \eqref{inta1} with the last estimation we  have
	\begin{equation}\label{inta2}
\|V^\frac12(x,\varphi)\Psi_a(x)\|\le\frac{C_{d,\iota}}{(1+\ln w_a)^\iota}.
	\end{equation}
	Collecting \eqref{intab} and \eqref{inta2} concludes the result \eqref{keyestimate}.\qed
	
\begin{remark}\label{3.4}
Using Lemma \ref{keylemma} and a similar method as \cite{GP2016} we have $Q$, 
defined by 
$
Q_a^b(\varphi)=\int_{\mathbb{R}^d}V(x,\varphi)\Phi_a(x)\Phi_b(x)\md x,
$
belongs to 
$\M_\iota(\mathcal{D}_0,\sigma)$.
\end{remark}

\begin{remark}
For quantum harmonic oscillator equation \eqref{mainthmorieqn}, we have  $\alpha=n+d+1$  since $\alpha=\frac{\alpha_1}{\alpha_2}+1$ with $\alpha_1=\max\{\tau_1,n+d\}$ and $\alpha_2=\max\{\tau_2,1\}$, where 
$\tau_1=n+1$ and $\tau_2=1$. 
\end{remark}

\subsection{Proof of Theorem \ref{mainthm}}
The Schr\"odinger equation \eqref{mainthmorieqn} is a Hamiltonian system on $\mathcal{H}^s\times\mathcal{H}^s~(s\ge1)$ associated with the Hamiltonian function \eqref{generalHfunction}. Written on the orthonormal basis $(\Phi_a)_{a\in\mathcal{E}}$, it is equivalent to the Hamiltonian system 
 on $Y_s$ associated with \eqref{generalHamiltonian} which reads as \eqref{oriHamiltonian} with $\lambda_a=w_a$ and $Q$ given by \eqref{Qentry}. By Lemma \ref{verificationHypothesis} and Remark \ref{3.4}, if $V$ satisfies \eqref{potentialcd} with $\iota\ge\frac{n+d+1}2$, we can apply Theorem \ref{reduthm} to the Hamiltonian\eqref{generalHamiltonian} which concludes Theorem \ref{maineqn}. More precisely, in the new coordinates under a unitary transformation given by Theorem \ref{reduthm},
\(
(\xi,\eta)=(\overline{M_\omega(\omega t)}\xi',M_\omega(\omega t)\eta'),
\) the original system 
\begin{equation*}
\begin{cases}
\dot\xi_a=-\mi w_a\xi_a-\mi\epsilon(Q^T(\omega t)\xi)_a,&\\
\dot\eta_a=\mi w_a\eta_a+\mi\epsilon(Q(\omega t)\eta)_a,&
\end{cases}a\in\mathcal{E}
\end{equation*}
conjugates to an autonomous system as follows:
\[
\begin{cases}
\dot\xi'_a=-\mi(N_\omega^{T}\xi')_a,&\\
\dot\eta'_a=\mi(N_\omega\eta')_a,&
\end{cases}a\in\mathcal{E},
\]
where $N_\omega\in\nf$ and $\overline{M_\omega(\omega t)}M_\omega^T(\omega t)=\Id$. Furthermore, corresponding to the initial datum $u_0(x)=\sum_{a\in\mathcal{E}}\xi(0)_a\Phi_a(x)\in\mathcal{H}^1$  the solution $u(t,x)$ of \eqref{mainthmorieqn} reads
\[
u(t,x)=\sum_{a\in\mathcal{E}}\xi(t)_a\Phi_a(x)\text{ with }
\xi(t)=\overline{M_\omega(\omega t)}e^{-\mi tN_\omega^T}M_\omega^T(0)\xi(0).
\]
Concretely, define the transformation $\Psi(\varphi)\in\mathcal{B}(\mathcal{H}^s)$ by 
\[
\Psi(\varphi)\left(\sum_{a\in\mathcal{E}}\xi'_a\Phi_a(x)\right)
=\sum_{a\in\mathcal{E}}\left(\overline{M_\omega(\varphi)}\xi'\right)_a\Phi_a(x).
\]
Then $v(t,x)$ satisfies \eqref{mainthmeqn}  if and only if $u(t,x)=\Psi(\omega t)v(t,x)$ satisfies the original equation \eqref{mainthmorieqn}, where $\mathcal W$ is defined by
\[
\mathcal W\big(\sum_{a\in\mathcal{E}}\xi'_a\Phi_a(x)\big)=\sum_{a\in\mathcal{E}}(W\xi')_a\Phi_a(x)\text{ with }W=N_\omega-N_0.
\] 
By construction collecting \eqref{DcloseD0}-\eqref{Mcloseidentity} leads to \eqref{mainthmestimations} in Theorem \ref{mainthm}.\qed
\\ For the proofs of Corollary \ref{SobolevNormEstimate} and \ref{FloquetOperatorSpectrum}  refer to \cite{GP2016}.

\section{Proof of Reducibility Theorem}\label{secKAMtools}
In this section we will prove the reducibility theorem presented in Sec.\ref{secreducibility} by the KAM methods.   As we mentioned before, Lemma \ref{criticallemma} 
is very important for solving  homological equations, whose proof we present below.
\subsection{Proof of Lemma \ref{criticallemma}}
\begin{proof}
Denote by $D_{[a]}$ the diagonal (square) matrix with entries $\mu_j$ for $j\in[a]$. The equation \eqref{clemeqn} reads
	\begin{equation}\label{clemeqnl}
	k\cdot\omega B_{[a]}^{[b]}-D_{[a]}B_{[a]}^{[b]}+B_{[a]}^{[b]}D_{[b]}=A_{[a]}^{[b]}.
	\end{equation}
	From \eqref{kolambda} for all $a,b\in\E$ and all $|k|\le K$, $k\cdot\omega-\lambda_a+\lambda_b\neq 0$. We distinguish two cases. \\
	\textbf{Case 1:} suppose that $k\cdot\omega-\lambda_a+\lambda_b>0$. Clearly, in this case by \eqref{kolambda} we have
	$
	k\cdot\omega-\lambda_a+\lambda_b\ge \gamma(1+|w_a-w_b|)\ge \gamma.
	$
	By \eqref{mudist} for $j\in[a]$ and $l\in[b]$, if $ \min\{w_a,w_b\}>\exp\lt\{\lt(\frac{2C_0}\gamma\rt)^{1/\delta}\rt\}$,  we obtain
	\[
	\begin{split}
	k\cdot\omega-\mu_j+\mu_l&=k\cdot\omega-\lambda_a+\lambda_b-\mu_j+\lambda_a-\lambda_b+\mu_l\\
	&\ge \gamma-\frac{C_0}{(1+\ln w_a)^\delta}-\frac{C_0}{(1+\ln w_b)^\delta}\\
	&\ge \gamma-\frac{2C_0}{(1+\ln \min\{w_a,w_b\})^\delta}>0.
	\end{split}
	\]
	It follows that 
	\begin{equation}\label{minmaxl}
	k\cdot\omega+\min_{l\in[b]}\{\mu_l\}>\max_{j\in[a]}\{\mu_j\}>0.
	\end{equation} 
        Subcase 1:  $\min\{w_a,w_b\}>\exp\lt\{\lt(\frac{2C_0}\gamma\rt)^{1/\delta}\rt\}$. 
        (\ref{minmaxl}) proves that $k\cdot\omega I_{[b]}+D_{[b]}$ is an invertible operator and 
	$
	\|\lt(k\cdot\omega I_{[b]}+D_{[b]}\rt)^{-1}\|=\frac1{k\cdot\omega+\min_{l\in[b]}\{\mu_l\}}.$
	Thus \eqref{clemeqnl} is equivalent to 
	\[
	B_{[a]}^{[b]}-D_{[a]}B_{[a]}^{[b]}\lt(k\cdot\omega I_{[b]}+D_{[b]}\rt)^{-1}=A_{[a]}^{[b]}\lt(k\cdot\omega I_{[b]}+D_{[b]}\rt)^{-1}.
	\]
Denote by $\mathscr L_{[a]\times[b]}$ the operator acting on matrices of size $[a]\times[b]$ such that
	\[
	\mathscr L_{[a]\times[b]}(B_{[a]}^{[b]})=D_{[a]}B_{[a]}^{[b]}\lt(k\cdot\omega I_{[b]}+D_{[b]}\rt)^{-1}.
	\]
	Then we have
	\(\displaystyle
	\|\mathscr L_{[a]\times[b]}(B_{[a]}^{[b]})\|\le\frac{\max_{j\in[a]}\{\mu_j\}}{k\cdot\omega+\min_{l\in[b]}\{\mu_l\}}\|B_{[a]}^{[b]}\|
	\) and $\|\mathscr L_{[a]\times[b]}\|<1$ by \eqref{minmaxl}. The operator $\Id-\mathscr L_{[a]\times[b]}$ is invertible and thus 
	\begin{eqnarray}
	\|B(k)_{[a]}^{[b]}\|&=&\|\lt(\Id-\mathscr L_{[a]\times[b]}\rt)^{-1}A_{[a]}^{[b]}\lt(k\cdot\omega I_{[b]}+D_{[b]}\rt)^{-1}\|   \nonumber\\ 
	&\le &\frac1{\displaystyle 1-\frac{\max_{j\in[a]}\{\mu_j\}}{k\cdot\omega+\min_{l\in[b]}\{\mu_l\}}}\cdot\frac{\|A_{[a]}^{[b]}\|}{k\cdot\omega+\min_{l\in[b]}\{\mu_l\}} \nonumber\\
	&=&\frac{\|A_{[a]}^{[b]}\|}{k\cdot\omega+\min_{l\in[b]}\{\mu_l\}-\max_{j\in[a]}\{\mu_j\}}            \label{cancallation}       \\
	&\le &\frac{\|A_{[a]}^{[b]}\|}{\kappa(1+|w_a-w_b|)}\qquad\underline{~\text{by \eqref{komu} and \eqref{minmaxl}}~}. \nonumber
	\end{eqnarray}
	Subcase 2:  $\min\{w_a,w_b\} \le\exp\lt\{\lt(\frac{2C_0}\gamma\rt)^{1/\delta}\rt\}$. 
	In this situation we have $|B(k)_j^l|\le\frac{|A_j^l|}{\kappa(1+|w_a-w_b|)}$. Then for any $\xi\in\ell_0^2$
\[	\|B(k)_{[a]}^{[b]}\xi_{[b]}\|^2=\sum_{j\in[a]}\Big|\sum_{l\in[b]}B(k)_j^l\xi_l\Big|^2\le
	\frac1{\kappa^2(1+|w_a-w_b|)^2}\sum_{j\in[a]}\lt(\sum_{l\in[b]}|A_j^l||\xi_l|\rt)^2.\]
	On the other hand, 
         \[
	\begin{split}
	\sum_{j\in[a]}\lt(\sum_{l\in[b]}|A_j^l||\xi_l|\rt)^2&\le\sum_{j\in[a]}\lt(\sum_{l\in[b]}|A_j^l|^2\rt)\lt(\sum_{l\in[b]}|\xi_l|^2\rt)\le\|\xi_{[b]}\|^2\sum_{j\in[a]}\sum_{l\in[b]}|A_j^l|^2\\
	&\le\|\xi_{[b]}\|^2\sum_{j\in[a]}\|A_{[a]}^{[b]}\|^2\le w_a^{d-1}\|A_{[a]}^{[b]}\|^2\|\xi_{[b]}\|^2.
	\end{split}
	\]
	Similarly, 
         \[
	\begin{split}
	\sum_{j\in[a]}\lt(\sum_{l\in[b]}|A_j^l||\xi_l|\rt)^2&\le\|\xi_{[b]}\|^2\sum_{j\in[a]}\sum_{l\in[b]}|A_j^l|^2\le\|\xi_{[b]}\|^2\sum_{ l\in[b]}\sum_{ j\in[a]}|A_j^l|^2 \\
	&\le\|\xi_{[b]}\|^2\sum_{  l\in[b]}\|A_{[a]}^{[b]}\|^2\le w_b^{d-1}\|A_{[a]}^{[b]}\|^2\|\xi_{[b]}\|^2.
	\end{split}
	\]
	It follows that 
	$$\sum_{j\in[a]}\lt(\sum_{l\in[b]}|A_j^l||\xi_l|\rt)^2\leq (\min\{w_a,w_b\})^{d-1} \|A_{[a]}^{[b]}\|^2\|\xi_{[b]}\|^2. $$
	Therefore,
        \begin{equation*}
	\|B(k)_{[a]}^{[b]}\|\le\frac{ \exp\{ C_{\delta,d}\gamma^{-1/\delta}\}}{\kappa(1+|w_a-w_b|)}\|A_{[a]}^{[b]}\|,
	\end{equation*}
	where $C_{\delta,d}=\frac{(d-1)}{2}(2C_0)^{1/\delta}$.\\
	\textbf{Case 2:} suppose that $k\cdot\omega-\lambda_a+\lambda_b<0$. Similarly, by \eqref{kolambda} we have $\lambda_a-\lambda_b-k\cdot\omega\ge \gamma(1+|w_a-w_b|)\ge \gamma$. Then \eqref{clemeqnl} is equivalent to
	\begin{equation*}
	-\lt(D_{[a]}-k\cdot\omega I_{[a]}\rt)B_{[a]}^{[b]}+B_{[a]}^{[b]}D_{[b]}=A_{[a]}^{[b]}.
	\end{equation*}
The following proof is similar as case 1 in which we use  $\mathscr L'_{[a]\times[b]}$ instead of $\mathscr L_{[a]\times[b]}$,  
where 
	\[
	\mathscr L'_{[a]\times[b]}(B_{[a]}^{[b]})=\lt(D_{[a]}-k\cdot\omega I_{[a]}\rt)^{-1}B_{[a]}^{[b]}D_{[b]}.
	\]
\end{proof}

\begin{remark}
In (\ref{cancallation}), we use a cancellation. 
\end{remark}

\subsection{Homological equation}
In this section consider a homological equation of the form
\begin{equation*}
-\omega\cdot\nabla_{\varphi}S+\mi[N,S]+Q=\text{remainder}
\end{equation*}
where $N\in\nf$ close to $N_0$ and $Q\in\M_\beta$. We will construct a solution $S\in\M_\beta^+$ in the following proposition.
\begin{proposition}\label{iterationproposition}
Denote $\mathcal{D}\subset\mathcal{D}_0$. Let $\mathcal{D}\ni\omega\mapsto N(\omega)\in\nf$ be a $\mathcal{C}^1$ mapping that verifies
\begin{equation}\label{NcloseN0prop}
[N-N_0]_\beta^{\mathcal{D}}\le2\epsilon_0 
\end{equation}
and $Q\in\M_\beta(\mathcal{D},\sigma)$. Assume that $K\ge1$ and $0<\kappa\le\gamma\le\frac{c_0}4$, verifying
\begin{equation}\label{criticalcondition}
\exp\lt\{8d\lt(\displaystyle\frac{\epsilon_0}\gamma\rt)^{\frac1{2\beta}}\rt\}\kappa\le\gamma.
\end{equation} Then there exists a subset $\mathcal{D}'=\mathcal{D}'(\gamma,\kappa,K)$, satisfying
\begin{equation}\label{estimateMprop}
\text{Meas}(\mathcal{D}\setminus\mathcal{D}')\le CK^{\alpha_1}\gamma^{\alpha_2}
\end{equation}
and $\mathcal{C}^1$ mappings $\widetilde{N}:\mathcal{D}'\mapsto\M_\beta\cap\nf,~ R:\mathcal{D}'\times\mathbb{T}_{\sigma'}^n\mapsto\M_\beta$ and $S:\mathcal{D}'\times\mathbb{T}_{\sigma'}^n\mapsto\M_\beta^+$, Hermitian and
analytic in $\varphi$, such that
\begin{equation}\label{homologicaleqnprop}
-\omega\cdot\nabla_{\varphi}S+\mi[N,S]=\widetilde{N}-Q+R
\end{equation}
and for any $0<\sigma'<\sigma$
\begin{align}
[\widetilde{N}]_\beta^{\mathcal{D}'}&\le[Q]_\beta^{\mathcal{D},\sigma},\label{estimateNprop}\\
[R]_\beta^{\mathcal{D}',\sigma'}&\le\frac{C\me^{-\frac{K}2(\sigma-\sigma')}}{(\sigma-\sigma')^n}[Q]_\beta^{\mathcal{D},\sigma},\label{estimateRprop}\\
[S]_{\beta+}^{\mathcal{D}',\sigma'}&\le\frac{CK\gamma}{\kappa^3(\sigma-\sigma')^n}[Q]_\beta^{\mathcal{D},\sigma},\label{estimateSprop}
\end{align}
where the constant $C>0$ depends on $n,d,\beta$ and $\alpha_1=\max\{\tau_1,n+d\},~\alpha_2=\max\{\tau_2,1\}$. 
\end{proposition}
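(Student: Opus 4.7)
The plan is to solve \eqref{homologicaleqnprop} Fourier-mode by Fourier-mode and block-by-block, using Lemma~\ref{criticallemma} as the main analytic tool in place of Lemma~4.3 of \cite{GP2019}. First I expand $Q$ in Fourier series in $\varphi$, $Q(\omega,\varphi)=\sum_{k\in\mathbb Z^n}\hat Q(\omega,k)\me^{\mi k\cdot\varphi}$, and set $R(\omega,\varphi):=\sum_{|k|>K}\hat Q(\omega,k)\me^{\mi k\cdot\varphi}$, while $\widetilde N(\omega)$ is chosen to be the block-diagonal projection of $\hat Q(\omega,0)$ on the equivalence classes $[a]$ (so that $\widetilde N\in\nf$). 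Analyticity of $Q$ on $\mathbb T_\sigma^n$ gives $|\hat Q(\omega,k)|_\beta\le[Q]_\beta^{\mathcal D,\sigma}\me^{-|k|\sigma}$; the standard geometric-series tail estimate then yields \eqref{estimateRprop}, while \eqref{estimateNprop} is immediate since averaging and block-diagonal projection are norm-decreasing on $\M_\beta$.

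For the remaining Fourier modes $|k|\le K$, outside the resonant set $\{k=0,\,[a]=[b]\}$ already absorbed into $\widetilde N$, inserting $S=\sum_{|k|\le K}\hat S(k)\me^{\mi k\cdot\varphi}$ in \eqref{homologicaleqnprop} decouples into block equations
\begin{equation*}
-\mi(k\cdot\omega)\hat S(k)_{[a]}^{[b]}+\mi N_{[a]}^{[a]}\hat S(k)_{[a]}^{[b]}-\mi\hat S(k)_{[a]}^{[b]}N_{[b]}^{[b]}=-\hat Q(k)_{[a]}^{[b]}.
\end{equation*}
Since $N\in\nf$ is Hermitian I may diagonalise the blocks $N_{[a]}^{[a]}$ and $N_{[b]}^{[b]}$ to put this in the form \eqref{clemeqn}. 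The hypothesis \eqref{NcloseN0prop} gives $\|N_{[a]}^{[a]}-\lambda_aI\|\le 2\epsilon_0(1+\ln w_a)^{-2\beta}$, hence the eigenvalues $\mu_j$ of $N_{[a]}^{[a]}$ satisfy the spectral closeness \eqref{mudist} with $\delta=2\beta$ and $C_0=2\epsilon_0$. I define
\begin{equation*}
\mathcal D':=\{\omega\in\mathcal D:|k\cdot\omega-\lambda_a+\lambda_b|\ge\gamma(1+|w_a-w_b|),~\forall\,0<|k|\le K,~\forall\,a,b\in\E\}
\end{equation*}
and read off \eqref{estimateMprop} from Hypothesis~H2 (the case $k=0,\,[a]\ne[b]$ is automatic from H1 since $\gamma\le c_0/4$). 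The second Melnikov bound \eqref{komu} for the perturbed spectrum then follows by a triangle inequality combined with \eqref{criticalcondition}: once $\min\{w_a,w_b\}$ is large enough, $\gamma-\kappa$ dominates the perturbation $|\mu-\lambda|$, and smaller indices are absorbed directly inside Lemma~\ref{criticallemma}.

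Lemma~\ref{criticallemma} now gives, blockwise,
\begin{equation*}
(1+|w_a-w_b|)\|\hat S(k)_{[a]}^{[b]}\|\le\frac{\exp\{C_{\delta,d}\gamma^{-1/(2\beta)}\}}{\kappa}\|\hat Q(k)_{[a]}^{[b]}\|\le\frac{\gamma}{\kappa^2}\|\hat Q(k)_{[a]}^{[b]}\|,
\end{equation*}
where the last inequality is precisely condition \eqref{criticalcondition}. Multiplying by the logarithmic weights $(1+\ln w_a)^\beta(1+\ln w_b)^\beta$, taking the sup over $a,b$ and summing the Fourier series against $\me^{-|k|(\sigma-\sigma')}$ produces the $(\sigma-\sigma')^{-n}$ factor. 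The extra factor $K/\kappa$ appearing in \eqref{estimateSprop} comes from the $\mathcal C^1$ estimate in $\omega$: differentiation hits a divisor $k\cdot\omega-\mu_j+\mu_l$ and produces $k+\partial_\omega(\mu_l-\mu_j)$ (bounded by $K$) over the square of the divisor, giving the extra $K/\kappa$ beyond the base estimate $\gamma/\kappa^2$. The Hermitian identity $S^T=\bar S$ is preserved along the construction because $Q$ is Hermitian and the divisors are real once paired under $k\mapsto -k$.

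The hard part, and the whole reason Lemma~\ref{criticallemma} is needed, is the tension between the ``good'' small-divisor parameter $\gamma$ inherited from Hypothesis~H2 for the unperturbed spectrum and the much worse effective parameter $\kappa$ governing the perturbed spectrum $\mu$: a naive transplant of Lemma~4.3 of \cite{GP2019} would contribute $\exp\{C\kappa^{-1/\delta}\}/\kappa$ as in \eqref{oldes} and destroy the KAM iteration. Lemma~\ref{criticallemma} replaces this factor by the much gentler $\exp\{C\gamma^{-1/\delta}\}/\kappa$, which condition \eqref{criticalcondition} converts into the polynomial loss $\gamma/\kappa^2$; this is what will allow one to close the induction later by choosing $\kappa\sim\gamma\exp\{-8d(\epsilon_0/\gamma)^{1/(2\beta)}\}$.
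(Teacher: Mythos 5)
Your overall strategy mirrors the paper's proof: expand in Fourier modes, truncate at $|k|\le K$ to get $R$, project the zero mode to blocks to get $\widetilde N$, diagonalise $N_{[a]}^{[a]}$ blockwise, and invoke Lemma~\ref{criticallemma} with $\delta=2\beta$ and $C_0=2\epsilon_0$. The $\omega$-derivative bookkeeping and the use of condition~\eqref{criticalcondition} to convert the exponential into a polynomial in $\gamma/\kappa$ are also correctly identified.

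There is, however, one genuine gap in the treatment of the resonant set. Your set $\mathcal D'$ imposes only the unperturbed Melnikov estimate $|k\cdot\omega-\lambda_a+\lambda_b|\ge\gamma(1+|w_a-w_b|)$, and you claim that the perturbed version~\eqref{komu} then follows ``by a triangle inequality'' for large indices, while ``smaller indices are absorbed directly inside Lemma~\ref{criticallemma}.'' The last clause is not correct: \eqref{komu} is a \emph{hypothesis} of Lemma~\ref{criticallemma}, not a conclusion. In the proof of Lemma~\ref{criticallemma} the case $\min\{w_a,w_b\}\le\exp\{(2C_0/\gamma)^{1/\delta}\}$ (Subcase~2) uses the pointwise bound $|B(k)_j^l|\le|A_j^l|/(\kappa(1+|w_a-w_b|))$, which is exactly~\eqref{komu}; nothing in the lemma derives it. Hence for the finitely many pairs $([a],[b])$ with $w_a$ below the threshold $\exp\{(4\epsilon_0/\gamma)^{1/(2\beta)}\}$ and $|w_a-w_b|\lesssim K$ (when $k\ne0$), the triangle inequality fails — the perturbation $4\epsilon_0(1+\ln w_a)^{-2\beta}$ can swamp $\gamma$ — and you must excise a further set $F$ where $|k\cdot\omega-\alpha_j(\omega)+\beta_l(\omega)|<\kappa(1+|w_a-w_b|)$. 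The paper does this with the classical measure estimate (Lemma~\ref{classicallemmaMeasure}), using $|\partial_\omega(k\cdot\omega-\alpha_j+\beta_l)\cdot k/|k||\ge|k|/2$, and bounds $\text{Meas}(F)\le CK^{n+d}\exp\{8d(\epsilon_0/\gamma)^{1/(2\beta)}\}\kappa\le CK^{n+d}\gamma$ via~\eqref{criticalcondition}. This is precisely the source of the exponent $\alpha_1=\max\{\tau_1,n+d\}$ in~\eqref{estimateMprop}; your construction of $\mathcal D'$ alone would only yield $CK^{\tau_1}\gamma^{\tau_2}$ and, more importantly, would not justify applying Lemma~\ref{criticallemma} at all on the excluded indices. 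You should define $\mathcal D'$ as the intersection of the $H2$-good set (at threshold $2\gamma$ is cleaner) with the complement of $F$, and add the measure estimate of $F$.
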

\begin{proof}
Written in Fourier variables (w.r.t. $\varphi$), the homological equation \eqref{homologicaleqnprop} reads 
\begin{equation*}
-\mi k\cdot\omega\widehat{S}(k)+\mi[N,\widehat{S}(k)]=\delta_{k,0}\widetilde{N}-\widehat{Q}(k)+\widehat{R}(k),
\end{equation*}
where $\delta_{k,j}$ denotes the Kronecker symbol.

Decompose the equation on each product block $[a]\times[b]$:
\begin{equation}\label{homologicaleqnComponent}
L\widehat{S}_{[a]}^{[b]}(k)=\mi\delta_{k,0}\widetilde{N}_{[a]}^{[b]}-\mi\widehat{Q}_{[a]}^{[b]}(k)+\mi\widehat{R}_{[a]}^{[b]}(k),
\end{equation}
where $L:=L(k,[a],[b],\omega)$ is the linear operator, acting on the space of complex $[a]\times[b]$-matrices defined by
\begin{equation}\label{operatorL}
LM=(k\cdot\omega-N_{[a]}(\omega))M+MN_{[b]}(\omega)\text{ with }N_{[a]}=N_{[a]}^{[a]}.
\end{equation}

First solve this equation when $|k|+|w_a-w_b|=0$ (i.e. $k=0,~w_a=w_b$) by defining
\[
\widehat{S}_{[a]}^{[a]}(0)=0,~\widehat{R}_{[a]}^{[a]}(0)=0\text{ and }\widetilde{N}_{[a]}^{[a]}=\widehat{Q}_{[a]}^{[a]}(0).
\]
Then setting $\widetilde{N}_{[a]}^{[b]}=0$ for $w_a\ne w_b$  we obtain  $\widetilde{N}\in\M_\beta\cap\nf$ satisfies 
\(
|\widetilde{N}|_\beta\le|\widehat{Q}(0)|_\beta.
\) 
The estimates of the derivatives (w.r.t.\,$\omega$) are obtained by differentiating the expressions of $\widetilde{N}$. Taking all the estimates leads to \eqref{estimateNprop}.

Now turn to the other cases when $|k|+|w_a-w_b|>0$. Diagonalize the (Hermitian) matrix $N_{[a]}$ in an orthonormal basis:
$
\overline{P_{[a]}^T}N_{[a]}P_{[a]}=D_{[a]}
$
and denote $\widehat{S'}_{[a]}^{[b]}=\overline{P_{[a]}^T}\widehat{S}_{[a]}^{[b]}P_{[b]},\,\widehat{Q'}_{[a]}^{[b]}=\overline{P_{[a]}^T}\widehat{Q}_{[a]}^{[b]}P_{[b]}$ and $\widehat{R'}_{[a]}^{[b]}=\overline{P_{[a]}^T}\widehat{R}_{[a]}^{[b]}P_{[b]}$. Here we note for later use that $\|\widehat{M'}_{[a]}^{[b]}\|=\|\widehat{M}_{[a]}^{[b]}\|$ for $M=S,Q,R$. In this new variables the homological equation \eqref{homologicaleqnComponent} reads
\begin{equation*}
(k\cdot\omega-D_{[a]})\widehat{S'}_{[a]}^{[b]}(k)+\widehat{S'}_{[a]}^{[b]}(k)D_{[b]}=-\mi\widehat{Q'}_{[a]}^{[b]}(k)+\mi\widehat{R'}_{[a]}^{[b]}(k).
\end{equation*}
We solve it term by term: let $a,b\in\mathcal{E}$ and  set
\begin{alignat}{2}
&\hspace{20pt}\begin{alignedat}{4}\label{solutionR}
\widehat{R'}_{[a]}^{[b]}(k)&=0,&&\text{for }|k|\le K,\\
\widehat{R'}_{jl}(k)&=\widehat{Q'}_{jl}(k),~j\in[a],l\in[b],&\quad&\text{for }|k|>K
\end{alignedat}
\intertext{and}
&\begin{alignedat}{4}\label{solutionS}
\widehat{S'}_{[a]}^{[b]}(k)&=0,&\hspace{1.8cm}&\text{for }|k|>K\text{ or }|k|+|w_a-w_b|=0,\\
\left(\widehat{S'}_{[a]}^{[b]}(k)\right)_{jl}&=\frac{-\mi\left(\widehat{Q'}_{[a]}^{[b]}(k)\right)_{jl}}{k\cdot\omega-\alpha_j+\beta_l},&\quad&\text{in the other cases},
\end{alignedat}
\end{alignat}
where $\alpha_j(\omega)$ and $\beta_l(\omega)$ denote eigenvalues of $N_{[a]}(\omega)$ and $N_{[b]}(\omega)$, respectively. Before the estimations of such matrices, first remark the following assertion:
\[
\overline{\left(\widehat{Q'}_{[a]}^{[b]}(k)\right)_{jl}}=\left(\widehat{Q'}_{[b]}^{[a]}(-k)\right)_{lj}\Rightarrow\overline{\left(\widehat{S'}_{[a]}^{[b]}(k)\right)_{jl}}=\left(\widehat{S'}_{[b]}^{[a]}(-k)\right)_{lj}.
\] 
Hence, if $Q'$ verifies condition \eqref{hermitian}, so it is with $S'$ which implies the flow generated by $S$ preserves the symmetry $\eta=\bar{\xi}$.

Canonically, \eqref{solutionR} leads to
\begin{equation}\label{estimateRnorm}
|R(\varphi)|_\beta=|R'(\varphi)|_\beta\le\frac{Ce^{-\frac K2(\sigma-\sigma')}}{(\sigma-\sigma')^n}\sup_{|\Im\varphi|<\sigma}|Q(\varphi)|_\beta,\text{ for }|\Im\varphi|<\sigma'.
\end{equation}
Facing the small divisors
\begin{equation*}
k\cdot\omega-\alpha_j(\omega)+\beta_l(\omega),\quad j\in[a],l\in[b]\text{ and }[a],[b]\in\widehat{\mathcal{E}},
\end{equation*}
we distinguish two cases depending on whether $k=0$ or not.
\\[8pt]
\textbf{The case $k=0$.} In this case we know that $w_a\ne w_b$ which implies  $|w_a-w_b|\ge2$. Using \eqref{asymptotics} and \eqref{NcloseN0prop} we get that, if $\kappa\le\gamma\le\frac{c_0}4$ and $\epsilon_0\le\frac{c_0}4$,
\begin{alignat*}{2}
|-\lambda_a+\lambda_b|&\ge c_0|w_a-w_b|\ge\frac{c_0}2(1+|w_a-w_b|)\ge2 \gamma(1+|w_a-w_b|)
\intertext{and}
|-\alpha_j(\omega)+\beta_l(\omega)|&\ge|-\lambda_a+\lambda_b|-4\epsilon_0\ge\frac{c_0}2|w_a-w_b|\ge\kappa(1+|w_a-w_b|).
\end{alignat*}
Collecting the last two estimates and condition \eqref{NcloseN0prop} allows us to utilize Lemma \ref{criticallemma} to conclude that
\begin{equation}\label{estimateS0}
|\widehat{S}(0)|_{\beta+}\le\kappa^{-1}\exp\left\{d\left(\frac{\epsilon_0}\gamma\right)^{\frac1{2\beta}}\right\}|\widehat{Q}(0)|_\beta.
\end{equation}
\textbf{The case $k\ne0$.} Concretely, in this case we only solve the main terms of Fourier series truncated at order $K$. Utilizing Hypothesis H2, for any $\gamma>0$, there exists a subset $D_1=\mathcal D(2\gamma,K)$, satisfying
$
\text{Meas}(\mathcal{D}_0\setminus D_1)\le CK^{\tau_1}\gamma^{\tau_2},
$
such that for all $\omega\in D_1$ and $0<|k|\le K$, 
$
|k\cdot\omega-\lambda_a+\lambda_b|\ge2\gamma(1+|w_a-w_b|).
$
By \eqref{NcloseN0prop} this implies
\begin{align*}
|k\cdot\omega-\alpha_j(\omega)+\beta_l(\omega)|&\ge|k\cdot\omega-\lambda_a+\lambda_b|-|\alpha_j(\omega)-\lambda_a|-|\beta_l(\omega)-\lambda_b|\\
&\ge2\gamma(1+|w_a-w_b|)-\frac{2\epsilon_0}{(1+\ln w_a)^{2\beta}}-\frac{2\epsilon_0}{(1+\ln w_b)^{2\beta}}\\
&\ge\gamma(1+|w_a-w_b|),\qquad\underline{\text{if }w_b\ge w_a\ge\exp\left\{\left(\frac{4\epsilon_0}{\gamma}\right)^{\frac1{2\beta}}\right\}.}
\end{align*}
Now let $\displaystyle w_a\le\exp\left\{\left(\frac{4\epsilon_0}{\gamma}\right)^{\frac1{2\beta}}\right\}$. 
In fact if $|w_a-w_b|\geq CK$, we can prove that 
$$|k\cdot\omega-\alpha_j(\omega)+\beta_l(\omega)|\geq \kappa(1+|w_a-w_b|).$$ 
We face the case $|w_a-w_b|\le CK$ which follows $\displaystyle w_b\le CK\exp\left\{\left(\frac{4\epsilon_0}{\gamma}\right)^{\frac1{2\beta}}\right\}$. Since $|\partial_{\omega}(k\cdot\omega)(\frac k{|k|})|=|k|\ge1$, condition \eqref{NcloseN0prop} implies
$
\Big|\partial_{\omega}\big(k\cdot\omega-\alpha_j(\omega)+\beta_l(\omega)\big)\Big(\frac k{|k|}\Big)\Big|\ge\frac{|k|}2.
$
The last estimate allows us to use  Lemma \ref{classicallemmaMeasure} to conclude that
\begin{equation*}
|k\cdot\omega-\alpha_j(\omega)+\beta_l(\omega)|\ge\kappa(1+|w_a-w_b|),\quad\forall\,j\in[a],l\in[b]
\end{equation*}
except  a set $F_{[a],[b],k}$ whose measure is smaller than $Cw_a^{d-1}w_b^{d-1}\frac{\kappa(1+|w_a-w_b|)}{|k|}$. Denoting $F$ be the union of $F_{[a],[b],k}$ for $[a],[b]\in\widehat{\mathcal{E}}$ and $0<|k|\le K$ such that $\displaystyle w_a\le\exp\left\{\left(\frac{4\epsilon_0}{\gamma}\right)^{\frac1{2\beta}}\right\}$ and $\displaystyle w_b\le CK\exp\left\{\left(\frac{4\epsilon_0}{\gamma}\right)^{\frac1{2\beta}}\right\}$ with $|w_a-w_b|\le CK$, condition \eqref{criticalcondition} leads to
\[
	\text{Meas}(F)\le Cw_a^dw_b^dK^n\kappa\le CK^{n+d}\exp\left\{8d\left(\frac{\epsilon_0}{\gamma}\right)^{\frac1{2\beta}}\right\}\kappa
\le CK^{n+d}\gamma.
\]
Let $\mathcal{D}'=D_1\cup D_2$ with $D_2=\mathcal{D}\setminus F$ and $\alpha_1=\max\{\tau_1,n+d\}$ and $\alpha_2=\max\{\tau_2,1\}$, then
\[
	\text{Meas}(\mathcal{D}\setminus\mathcal{D}')\le\text{Meas}(\mathcal{D}_0\setminus D_1)+
\text{Meas}(F)\le CK^{\tau_1}\gamma^{\tau_2}+CK^{n+d}\gamma\le CK^{\alpha_1}\gamma^{\alpha_2}.
\]
Further, by construction, for all $\omega\in\mathcal{D}',0<|k|\le K,a,b\in\mathcal{E}$ and $j\in[a],l\in[b]$ we have
\[
|k\cdot\omega-\lambda_a+\lambda_b|\ge2\gamma(1+|w_a-w_b|)\text{ and }|k\cdot\omega-\alpha_j(\omega)+\beta_l(\omega)|\ge\kappa(1+|w_a-w_b|).
\]
Hence, in view of \eqref{solutionS}, utilizing Lemma \ref{criticallemma} concludes that $\widehat{S'}(k)\in\M_\beta^+$ satisfies
\[
|\widehat{S}(k)|_{\beta+}=|\widehat{S'}(k)|_{\beta+}\le\kappa^{-1}\exp\left\{d\left(\frac{\epsilon_0}\gamma\right)^{\frac1{2\beta}}\right\}|\widehat{Q}(k)|_\beta,\quad0<|k|\le K.
\]
Combining the last estimate with \eqref{estimateS0} we obtain a solution $S$ satisfying for all $|\Im\varphi|<\sigma'$
\begin{equation}\label{estimateSnorm}
|S(\varphi)|_{\beta+}\le \frac{C}{\kappa(\sigma-\sigma')^n}\exp\left\{d\left(\frac{\epsilon_0}\gamma\right)^{\frac1{2\beta}}\right\}\sup_{|\Im\varphi|<\sigma}|Q(\varphi)|_\beta.
\end{equation}
To obtain the estimates for the derivative (w.r.t.\,$\omega$) we differentiate \eqref{homologicaleqnComponent}:
\[
L(\partial_{\omega}\widehat{S}_{[a]}^{[b]}(k,\omega))=-(\partial_{\omega}L)\widehat{S}_{[a]}^{[b]}(k,\omega)-\mi\partial_{\omega}\widehat{Q}_{[a]}^{[b]}(k,\omega)+\mi\partial_{\omega}\widehat{R}_{[a]}^{[b]}(k,\omega)
\]
which is an equation of the same type as \eqref{homologicaleqnComponent} for $\partial_{\omega}\widehat{S}_{[a]}^{[b]}(k,\omega)$ and $\partial_{\omega}\widehat{R}_{[a]}^{[b]}(k,\omega)$ where $-\mi\widehat{Q}_{[a]}^{[b]}(k,\omega)$ is replaced by $B_{[a]}^{[b]}(k,\omega)=-(\partial_{\omega}L)\widehat{S}_{[a]}^{[b]}(k,\omega)-\mi\partial_{\omega}\widehat{Q}_{[a]}^{[b]}(k,\omega)$. Solve this equation by defining
\begin{align*}
\partial_{\omega}\widehat{S}_{[a]}^{[b]}(k,\omega)&=\chi_{|k|\le K}L^{-1}(k,[a],[b],\omega)B_{[a]}^{[b]}(k,\omega),\\
\partial_{\omega}\widehat{R}_{[a]}^{[b]}(k,\omega)&=\mi\chi_{|k|>K}B_{[a]}^{[b]}(k,\omega)=\chi_{|k|>K}\partial_{\omega}\widehat{Q}_{[a]}^{[b]}(k,\omega).
\end{align*}
Collecting condition \eqref{NcloseN0prop} and the definition \eqref{operatorL} leads to $|(\partial_{\omega}L)\widehat{S}(k,\omega)|_\beta\le CK|\widehat{S}(k,\omega)|_\beta$ which implies
\[
|B(k,\omega)|_\beta\le \frac{CK}{\kappa}\exp\left\{d\left(\frac{\epsilon_0}\gamma\right)^{\frac1{2\beta}}\right\}\left(|\widehat{Q}(k)|_\beta+|\partial_{\omega}\widehat{Q}(k)|_\beta\right).
\] 
Following the same strategy as in the resolution of \eqref{homologicaleqnComponent} we get for $|\Im\varphi|<\sigma'$ 
\begin{equation}\label{estimateSRlipnorm}
\hspace{-12pt}
\begin{aligned}
|\partial_{\omega}S(\varphi)|_{\beta+}&\le \frac{CK}{\kappa^2(\sigma-\sigma')^n}\exp\left\{2d\left(\frac{\epsilon_0}\gamma\right)^{\frac1{2\beta}}\right\}\left(\sup_{|\Im\varphi|<\sigma}|Q(\varphi)|_\beta+\sup_{|\Im\varphi|<\sigma}|\partial_{\omega}Q(\varphi)|_\beta\right),\\
|\partial_{\omega}R(\varphi)|_\beta&\le\frac{C\me^{-\frac K2(\sigma-\sigma')}}{(\sigma-\sigma')^n}\sup_{|\Im\varphi|<\sigma}|\partial_{\omega}Q(\varphi)|_\beta.
\end{aligned}
\end{equation}
Collecting \eqref{estimateRnorm},\eqref{estimateSnorm} and the last estimates  \eqref{estimateSRlipnorm} and taking into account \eqref{criticalcondition} leads to \eqref{estimateRprop} and \eqref{estimateSprop}.
\end{proof}

\subsection{The KAM iteration}
In this section we will prove Theorem \ref{reduthm} by an iterative KAM procedure. Let us begin with the initial Hamiltonian $H_\omega=h_0+q_0$ where
\begin{equation*}
h_0(y,\varphi,\xi,\eta)=\omega\cdot y+\langle\xi,N_0\eta\rangle,
\end{equation*}
$N_0=\text{diag}(\lambda_a:a\in\mathcal{E}),\omega\in\mathcal{D}_0$ and the quadratic perturbation $q_0^{}(\varphi,\xi,\eta)=\langle\xi,Q_0(\varphi)\eta\rangle$ with $Q_0=\epsilon Q\in\M_\beta(\mathcal{D}_0,\sigma_0)$ and $\sigma_0=\sigma$. Building iteratively the change of variables $\phi_{\chi_m^{}}^1$, we obtain the normal form $h_m=\omega\cdot y+\langle\xi,N_m(\omega)\eta\rangle$ and the perturbation $q_m^{}=\langle\xi,Q_m(\omega,\varphi)\eta\rangle$ with $Q_m\in\M_\beta(\mathcal{D}_m,\sigma_m)$ as follows: assume that the construction has been built up to step $m\ge0$ then 
\begin{enumerate}
	\item[(i).\phantom{i}] we utilize Proposition \ref{iterationproposition} to construct $S_{m+1}(\omega,\varphi)$ solution of the homological equation verifying for $(\omega,\varphi)\in\mathcal{D}_{m+1}\times\mathbb{T}_{\sigma_{m+1}}^n$
\begin{equation}\label{homologicaleqniteration}
-\omega\cdot\nabla_{\varphi}S_{m+1}+\mi[N_{m},S_{m+1}]=\widetilde{N}_{m}-Q_m+R_{m}
\end{equation}
where $\widetilde{N}_{m}(\omega),R_{m}(\omega,\varphi)$ defined for $(\omega,\varphi)\in\mathcal{D}_{m+1}\times\mathbb{T}_{\sigma_{m+1}}^n$ by
\begin{align}
\widetilde{N}_{m}(\omega)&=\left(\delta_{[j],[l]}\widehat{Q}_m(0)_{jl}\right)_{j,l\in\mathcal{E}},\label{solutionNtildeiteration}\\
R_{m}(\omega,\varphi)&=\sum_{|k|>K_{m+1}}\widehat{Q}_m(\omega,k)e^{\mi k\cdot\varphi};\label{solutionRiteration}
\end{align}
\item[(ii).] we define $N_{m+1},Q_{m+1}$ for $(\omega,\varphi)\in\mathcal{D}_{m+1}\times\mathbb{T}_{\sigma_{m+1}}^n$ by
\begin{align}
N_{m+1}&=N_m+\widetilde{N}_{m},\label{solutionNiteration}\\
Q_{m+1}&=R_{m}+\mi\int_0^1\me^{-\mi tS_{m+1}}[(1-t)(\widetilde{N}_{m}+R_{m})+tQ_m,S_{m+1}]\me^{\mi tS_{m+1}}\md t\label{solutionQiteration}.
\end{align}
\end{enumerate}
By construction, if $Q_m$ and $N_m$ are Hermitian, so it is with all of  $\widetilde{N}_m,R_{m}$ and $S_{m+1}$, by resolution of the homological equation, and also $N_{m+1}$ and $Q_{m+1}$. Let
\begin{equation}\label{HamiltonianIteration}
\begin{aligned}
h_{m+1}(y,\varphi,\xi,\eta;\omega)&=\omega\cdot y+\langle\xi,N_{m+1}(\omega)\eta\rangle,\\
\chi_{m+1}^{}(y,\varphi,\xi,\eta;\omega)&=\langle\xi,S_{m+1}(\omega,\varphi)\eta\rangle,\\
q_{m+1}^{}(y,\varphi,\xi,\eta;\omega)&=\langle\xi,Q_{m+1}(\omega,\varphi)\eta\rangle.
\end{aligned}
\end{equation}
Recall that $\phi_\chi^t$ denotes the time t flow generated by $S$ (see \eqref{change}), then
\begin{align*}
	f\circ\phi_{\chi_{m+1}^{}}^1&=f+\int_0^1\{f,\chi_{m+1}^{}\}\circ\phi_{\chi_{m+1}^{}}^t\md t
\intertext{or}
	f\circ\phi_{\chi_{m+1}^{}}^1&=f+\{f,\chi_{m+1}^{}\}+\int_0^1(1-t)\{\{f,\chi_{m+1}^{}\},\chi_{m+1}^{}\}\circ\phi_{\chi_{m+1}^{}}^t\md t.
\end{align*}
Therefore collecting \eqref{liebracket} and \eqref{change} leads to for $\omega\in\mathcal{D}_{m+1}$
\begin{alignat*}{5}
&\,(h_m+q_m^{})\circ\phi_{\chi_{m+1}^{}}^1=h_m\circ\phi_{\chi_{m+1}^{}}^1+q_m^{}\circ\phi_{\chi_{m+1}^{}}^1\\
=&\,h_m+\{h_m,\chi_{m+1}^{}\}+q_m^{}+\int_0^1\{(1-t)\{h_m,\chi_{m+1}^{}\}+q_m^{},\chi_{m+1}^{}\}\circ\phi_{\chi_{m+1}^{}}^t\md t\\
=&\,h_m+\langle\xi,(\widetilde{N}_{m}+R_{m})\eta\rangle+\int_0^1\{\langle\xi,\big((1-t)(\widetilde{N}_{m}+R_{m})+tQ_m\big)\eta\rangle,\chi_{m+1}^{}\}\circ\phi_{\chi_{m+1}^{}}^t\md t\\
:=&\,h_{m+1}+q_{m+1}^{}.
\end{alignat*}

\subsection{Iterative lemma}
Following the general iterative procedures  \eqref{homologicaleqniteration}-\eqref{HamiltonianIteration} we have 
\[
(h_0+q_0^{})\circ\phi_{\chi_1^{}}^1\circ\phi_{\chi_2^{}}^1\circ\cdots\circ\phi_{\chi_m^{}}^1=h_m+q_m^{}
\]
where $h_m=\omega\cdot y+\langle\xi,N_m\eta\rangle$ with $N_m\in\nf$ and $q_m^{}=\langle\xi,Q_m\eta\rangle$ with $Q_m\in\M_\beta(\mathcal{D}_m,\sigma_m)$. At the step $m$ the Fourier series are truncated at order $K_m$ and the small divisors are controlled by $\gamma_m$ and $\kappa_m$. Specifically, we choose all of the parameters for $m\ge0$ in term of $\epsilon_m$ which will control $[Q_m]_\beta^{\mathcal{D}_m,\sigma_m}$ as follows: 
first define $\sigma_0=\sigma$ and $\epsilon_0$ verifying $[Q_0]_\beta^{\mathcal{D}_0,\sigma_0}\le\epsilon_0$ and denote $\alpha=\frac{\alpha_1}{\alpha_2}+1$, then for $m\ge1$ let
\begin{equation}\label{iterativechoice}
\begin{gathered}
\epsilon_m=\epsilon_{m-1}^{5/4},\kappa_m=\epsilon_{m-1}^{1/4},\gamma_m=\epsilon_0^{1/6}(\ln\epsilon_{m-1}^{-1})^{-\alpha},\\
\sigma_{m-1}-\sigma_m=C_*\sigma_0m^{-2},K_m=2(\sigma_{m-1}-\sigma_m)^{-1}\ln\epsilon_{m-1}^{-1},
\end{gathered}
\end{equation}
where $(C_*)^{-1}=2\sum_{m\ge1}m^{-2}$ and $\alpha_1=\max\{\tau_1,n+d\},\alpha_2=\max\{\tau_2,1\}$.
\begin{remark}\label{criticalconditionholds}
From (\ref{iterativechoice}), the assumption \eqref{criticalcondition} in the KAM iteration is equivalent to 
\begin{equation}\label{equivcondition}
8d\epsilon_0^{\frac5{12\beta}}\left(\ln\epsilon_{m-1}^{-1}\right)^{\frac{\alpha}{2\beta}}+
\frac16\ln\epsilon_0^{-1}+\alpha\ln\big(\ln\epsilon_{m-1}^{-1}\big)\le\frac14\ln\epsilon_{m-1}^{-1},\quad\forall~m\ge1.
\end{equation}
Clearly, if $\beta\ge\frac\alpha2$,  \eqref{equivcondition} or  \eqref{criticalcondition} always holds true for $\epsilon_0\ll1$. 
\end{remark}
\begin{lemma}\label{iterationlemma}
Let $\alpha=\frac{\alpha_1}{\alpha_2}+1$ with $\alpha_1=\max\{\tau_1,n+d\},\alpha_2=\max\{\tau_2,1\}$ and $\beta\ge\frac{\alpha}{2}$. There exists $\epsilon_*$ depending on $\sigma, d,n,\beta,\tau_1,\tau_2$ and $h_0$ such that, for $0\le\epsilon_0<\epsilon_*$ and
$
\epsilon_m=\epsilon_0^{(5/4)^m},\quad m\ge0,
$
the followings hold for all $m\ge1$:
there exists $\mathcal{D}_m\subset\mathcal{D}_{m-1},S_m\in\M_\beta^+(\mathcal{D}_m,\sigma_m),h_m=\omega\cdot y+\langle\xi,N_m\eta\rangle$ with $N_m\in\nf$ and $Q_m\in\M_\beta(\mathcal{D}_m,\sigma_m)$ such  that
\begin{enumerate}
	\item[(i).\phantom{i}] for $p\in[-1,1]$ the transformation
\begin{equation}\label{definitionTransformationstepm}
\phi_m^{}(\cdot,\omega,\varphi):=\phi_{\chi_m^{}}^1:Y_p\mapsto Y_p,\quad\forall\,(\omega,\varphi)\in\mathcal{D}_m\times\mathbb{T}_{\sigma_m}^n
\end{equation}
is linear (unitary in $Y_0$) isomorphism conjugating the Hamiltonian at step $m-1$ to the Hamiltonian at step $m$, i.e.
\begin{equation*}
(h_{m-1}+q_{m-1}^{})\circ\phi_m^{}=h_m+q_m^{};
\end{equation*}
\item[(ii).] the following estimations hold:
\begin{gather}
\text{Meas}(\mathcal{D}_{m-1}\setminus\mathcal{D}_m)\le\epsilon_0^{\alpha_2/6}(\ln\epsilon_{m-1}^{-1})^{-\frac{\alpha_2}2},\label{iterationM}\\
[\widetilde{N}_{m-1}]_\beta^{\mathcal{D}_m}\le\epsilon_{m-1},\label{iterationN}\\
[Q_m]_\beta^{\mathcal{D}_m,\sigma_m}\le\epsilon_m\label{iterationQ},\\
[S_m]_{\beta+}^{\mathcal{D}_m,\sigma_m}\le\epsilon_0^{1/6}(\ln\epsilon_0^{-1})^{-\frac{\alpha_1}{2\alpha_2}}\epsilon_{m-1}^{1/4}\notag
\end{gather}
and for all $p\in[-1,1]$ the transformation satisfies
\begin{equation}\label{iterationtranform}
\|\phi_m^{}(\cdot,\omega,\varphi)-\Id\|_{\mathcal{B}(Y_p)}\le\epsilon_0^{1/6}(\ln\epsilon_0^{-1})^{-\frac{\alpha_1}{3\alpha_2}}\epsilon_{m-1}^{1/4},\quad\forall\,(\omega,\varphi)\in\mathcal{D}_m\times\mathbb{T}_{\sigma_m}^n.
\end{equation}
\end{enumerate} 
\end{lemma}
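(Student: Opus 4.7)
The plan is to proceed by induction on $m\ge 1$. At step $m$ one assumes the conclusions of the lemma hold up to step $m-1$ (the base case $m=1$ using only the initial data $h_0$, $Q_0=\epsilon Q$, $\mathcal{D}_0$, $\sigma_0=\sigma$ and $[Q_0]_\beta^{\mathcal{D}_0,\sigma_0}\le\epsilon_0$), then applies Proposition~\ref{iterationproposition} to $Q_{m-1}$ at the scales prescribed by the ansatz \eqref{iterativechoice}, producing $\mathcal{D}_m$, $\widetilde{N}_{m-1}$, $R_{m-1}$ and $S_m$. One defines the new normal form and perturbation by $N_m=N_{m-1}+\widetilde{N}_{m-1}$ and by the Lie transform formula \eqref{solutionQiteration}, so that \eqref{homologicaleqniteration} gives $(h_{m-1}+q_{m-1})\circ\phi_m=h_m+q_m$.

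Before invoking the proposition one must verify its hypotheses. The bound $[N_{m-1}-N_0]_\beta^{\mathcal{D}_{m-1}}\le 2\epsilon_0$ of \eqref{NcloseN0prop} follows by telescoping $N_{m-1}-N_0=\sum_{j=0}^{m-2}\widetilde{N}_j$ and summing the inductive bounds $[\widetilde{N}_j]_\beta\le\epsilon_j=\epsilon_0^{(5/4)^j}$, a super-geometric series bounded by $2\epsilon_0$ for $\epsilon_0$ small. The monotonicity $0<\kappa_m\le\gamma_m\le c_0/4$ is immediate from \eqref{iterativechoice}, and the critical inequality \eqref{criticalcondition} holds by Remark~\ref{criticalconditionholds} under the standing hypothesis $\beta\ge\alpha/2$.

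The linear assertions then follow by substituting \eqref{iterativechoice} into \eqref{estimateMprop}--\eqref{estimateSprop}. For the measure, $CK_m^{\alpha_1}\gamma_m^{\alpha_2}\le C m^{2\alpha_1}(\ln\epsilon_{m-1}^{-1})^{\alpha_1}\cdot\epsilon_0^{\alpha_2/6}(\ln\epsilon_{m-1}^{-1})^{-\alpha\alpha_2}$; the defining identity $\alpha\alpha_2-\alpha_1=\alpha_2$ collapses the logarithmic exponent to $-\alpha_2$, and the polynomial factor $m^{2\alpha_1}$ is absorbed into the surplus $(\ln\epsilon_{m-1}^{-1})^{-\alpha_2/2}=((5/4)^{m-1}\ln\epsilon_0^{-1})^{-\alpha_2/2}$ uniformly in $m$ for $\epsilon_0$ small, yielding \eqref{iterationM}. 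The bound \eqref{iterationN} on $\widetilde{N}_{m-1}$ is literally \eqref{estimateNprop}. For $S_m$, the prefactor in \eqref{estimateSprop} equals $C m^{2(n+1)}\epsilon_0^{1/6}\epsilon_{m-1}^{-3/4}(\ln\epsilon_{m-1}^{-1})^{1-\alpha}$, and multiplying by $[Q_{m-1}]\le\epsilon_{m-1}$ yields the claimed bound once the negative exponent $1-\alpha$ (guaranteed by $\alpha\ge 2$) is exploited. The transformation estimate \eqref{iterationtranform} then follows from \eqref{definitionTransformationstepm} together with Lemma~\ref{structure}(iii) and (v), which give $\|\phi_m(\cdot,\omega,\varphi)-\Id\|_{\mathcal{B}(Y_p)}\le C[S_m]_{\beta+}$ for every $p\in[-1,1]$, after comparing the exponents $-\alpha_1/(2\alpha_2)$ and $-\alpha_1/(3\alpha_2)$ of $\ln\epsilon_0^{-1}$.

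The main obstacle, and the true heart of the KAM step, is the quadratic bound \eqref{iterationQ} on $[Q_m]$. Splitting the formula \eqref{solutionQiteration} into the Fourier remainder $R_{m-1}$ and the Lie integral, the first contribution is controlled by \eqref{estimateRprop}: the choice $K_m=2(\sigma_{m-1}-\sigma_m)^{-1}\ln\epsilon_{m-1}^{-1}$ converts the factor $\me^{-K_m(\sigma_{m-1}-\sigma_m)/2}$ into $\epsilon_{m-1}$, so $[R_{m-1}]\le C m^{2n}\epsilon_{m-1}^2$. Each commutator $[X,S_m]$ with $X\in\{\widetilde{N}_{m-1}+R_{m-1},\,Q_{m-1}\}\subset\M_\beta$ satisfies $|[X,S_m]|_\beta\le C|X|_\beta|S_m|_{\beta+}$ by Lemma~\ref{structure}(i), while the conjugations $\me^{\pm\mi tS_m}$ are close to the identity in the $\M_\beta^+$ norm by Lemma~\ref{structure}(iii), so a second application of (i) shows that they preserve the $\M_\beta$ norm up to a factor $1+O([S_m]_{\beta+})$. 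Assembling these ingredients one obtains $[Q_m]\le C m^{2n}\epsilon_{m-1}^2+C\epsilon_0^{1/6}\epsilon_{m-1}^{5/4}\le\epsilon_{m-1}^{5/4}=\epsilon_m$ for $\epsilon_0$ sufficiently small, the polynomial prefactors in $m$ being crushed by the super-geometric smallness of $\epsilon_{m-1}^{3/4}$.
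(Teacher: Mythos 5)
Your proposal is correct and follows essentially the same inductive strategy as the paper: verify \eqref{NcloseN0prop} by telescoping the $\widetilde{N}_j$, invoke Remark~\ref{criticalconditionholds} for \eqref{criticalcondition}, apply Proposition~\ref{iterationproposition} with the parameter choices of \eqref{iterativechoice}, and propagate the estimates with the identity $\alpha\alpha_2-\alpha_1=\alpha_2$ and the surplus logarithmic powers absorbing the constants and polynomial factors in $m$. Your treatment of the quadratic bound \eqref{iterationQ} is slightly more explicit than the paper's (which tacitly uses Lemma~\ref{structure} to pass from \eqref{solutionQiteration} to $[Q_{m+1}]\le[R_m]+C[Q_m][S_{m+1}]_{\beta+}$), but the reasoning is the same.
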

\begin{proof}
At step 1, the initial $h_0=\omega\cdot y+\langle\xi,N_0\eta\rangle$ and thus condition \eqref{NcloseN0prop} is trivially satisfied. Remark \ref{criticalconditionholds} allows us to apply Proposition \ref{iterationproposition} for constructing $S_1,\widetilde{N}_0,R_0$ and $\mathcal{D}_1,\sigma_1$ such that for $(\omega,\varphi)\in\mathcal{D}_1\times\mathbb{T}_{\sigma_1}^n$, 
$
-\omega\cdot\nabla_{\varphi}S_1+\mi[N_0,S_1]=\widetilde{N}_0-Q_0+R_0.
$
Then utilizing \eqref{estimateMprop}, we obtain for $\epsilon_0\ll1$
\[
\text{Meas}(\mathcal{D}_0\setminus\mathcal{D}_1)\le CK_1^{\alpha_1}\gamma_1^{\alpha_2}\le C\epsilon_0^{\alpha_2/6}(\ln\epsilon_0^{-1})^{-\alpha_2}
\le\epsilon_0^{\alpha_2/6}(\ln\epsilon_0^{-1})^{-\frac{\alpha_2}{2}}.
\]
Due to \eqref{estimateSprop} we have for $\epsilon_0\ll1$
\[
[S_1]_{\beta+}^{\mathcal{D}_1,\sigma_1}\le\frac{CK_1\gamma_1}{\kappa_1^3(\sigma_0-\sigma_1)^n}[Q_0]_\beta^{\mathcal{D}_0,\sigma_0}\le C\epsilon_0^{1/6}(\ln\epsilon_0^{-1})^{-\frac{\alpha_1}{\alpha_2}}\epsilon_0^{1/4}\le\epsilon_0^{1/6}(\ln\epsilon_0^{-1})^{-\frac{\alpha_1}{2\alpha_2}}\epsilon_0^{1/4}.
\]
Thus, in view of \eqref{change}, \eqref{definitionTransformationstepm} and assertion (iii),(v) of Lemma \ref{structure} we get for all $p\in[-1,1]$
\[
\|\phi_1^{}(\cdot,\omega,\varphi)-\Id\|_{\mathcal{B}(Y_p)}\le C\me^{C[S_1]_{\beta+}^{\mathcal{D}_1,\sigma_1}}[S_1]_{\beta+}^{\mathcal{D}_1,\sigma_1}\le\epsilon_0^{1/6}(\ln\epsilon_0^{-1})^{-\frac{\alpha_1}{3\alpha_2}}\epsilon_0^{1/4},\quad\text{if }\epsilon_0\ll1.
\]
Collecting \eqref{estimateNprop} and \eqref{estimateRprop} leads to $[\widetilde{N}_0]_\beta^{\mathcal{D}_1}\le\epsilon_0$ and for $\epsilon_0\ll1$
\[
[R_0]_\beta^{\mathcal{D}_1,\sigma_1}\le
\frac{C\me^{-\frac{K_1}2(\sigma_0-\sigma_1)}}{(\sigma_0-\sigma_1)^n}
[Q_0]_\beta^{\mathcal{D}_0,\sigma_0}\le C\epsilon_0^2\le\frac12\epsilon_0^{5/4}=\frac{\epsilon_1}{2}.
\]
Besides, \eqref{solutionQiteration} infers that for $\epsilon_0\ll1$
\[
[Q_1]_\beta^{\mathcal{D}_1,\sigma_1}\le[R_0]_\beta^{\mathcal{D}_1,\sigma_1}+C[Q_0]_\beta^{\mathcal{D}_0,\sigma_0}[S_1]_{\beta+}^{\mathcal{D}_1,\sigma_1}\le\frac{\epsilon_1}{2}+C\epsilon_0^{5/4+1/6}\le\epsilon_1.
\]
Now assume that we have verified Lemma \ref{iterationlemma} up to step $m$, then we consider the step $m+1$. Since $h_m=\omega\cdot y+\langle\xi,N_m\eta\rangle$ and
$
[N_m-N_0]_\beta^{\mathcal{D}_m}\le\sum_{l=0}^{m-1}[\widetilde{N}_{l}]_\beta^{\mathcal{D}_{l+1}}\le
\sum_{l=0}^{m-1}\epsilon_{l}\le2\epsilon_0,\ \text{if }\epsilon_0\ll1.
$
Then condition \eqref{NcloseN0prop} verifies and Remark \ref{criticalconditionholds} allows us to apply Proposition \ref{iterationproposition} for constructing $S_{m+1},\widetilde{N}_{m},R_{m}$ and $\mathcal{D}_{m+1},\sigma_{m+1}$ such that for $(\omega,\varphi)\in\mathcal{D}_{m+1}\times\mathbb{T}_{\sigma_{m+1}}^n$
\[
-\omega\cdot\nabla_{\varphi}S_{m+1}+\mi[N_m,S_{m+1}]=\widetilde{N}_{m}-Q_m+R_{m}.
\] 
Similarly, utilizing \eqref{estimateMprop} we obtain for $\epsilon_0\ll1$
\[
\text{Meas}(\mathcal{D}_m\setminus\mathcal{D}_{m+1})\le CK_{m+1}^{\alpha_1}\gamma_{m+1}^{\alpha_2}\le C\epsilon_0^{\alpha_2/6}\frac{(m+1)^{2\alpha_1}}{(\ln\epsilon_m^{-1})^{\alpha_2}}\le
\epsilon_0^{\alpha_2/6}(\ln\epsilon_m^{-1})^{-\frac{\alpha_2}{2}}.
\]
Due to \eqref{estimateSprop} we have for $\epsilon_0\ll1$
\[
[S_{m+1}]_{\beta+}^{\mathcal{D}_{m+1},\sigma_{m+1}}\le\frac{CK_{m+1}\gamma_{m+1}}{\kappa_{m+1}^3(\sigma_m-\sigma_{m+1})^n}[Q_m]_\beta^{\mathcal{D}_m,\sigma_m}
\le\epsilon_0^{1/6}(\ln\epsilon_0^{-1})^{-\frac{\alpha_1}{2\alpha_2}}\epsilon_m^{1/4}
\]
Thus, in view of \eqref{change},\eqref{definitionTransformationstepm} and assertion (iii),(v) of Lemma \ref{structure} we get for $p\in[-1,1]$
\[
\|\phi_{m+1}^{}(\cdot,\omega,\varphi)-\Id\|_{\mathcal{B}(Y_p)}\le C[S_{m+1}]_{\beta+}^{\mathcal{D}_{m+1},\sigma_{m+1}}\le\epsilon_0^{1/6}(\ln\epsilon_0^{-1})^{-\frac{\alpha_1}{3\alpha_2}}\epsilon_m^{1/4},\quad\text{if }\epsilon_0\ll1.
\]
Collecting \eqref{estimateNprop} and \eqref{estimateRprop} leads to $[\widetilde{N}_{m}]_\beta^{\mathcal{D}_{m+1}}\le\epsilon_m$ and
\[
[R_{m}]_\beta^{\mathcal{D}_{m+1},\sigma_{m+1}}\le
\frac{C\me^{-\frac{K_{m+1}}2(\sigma_m-\sigma_{m+1})}}{(\sigma_m-\sigma_{m+1})^n}[Q_m]_\beta^{\mathcal{D}_m,\sigma_m}
\le\frac{\epsilon_{m+1}}{2},\quad\text{if }\epsilon_0\ll1.
\]
In addition, \eqref{solutionQiteration} implies that for $\epsilon_0\ll1$
\begin{align*}
[Q_{m+1}]_\beta^{\mathcal{D}_{m+1},\sigma_{m+1}}&\le[R_{m}]_\beta^{\mathcal{D}_{m+1},\sigma_{m+1}}+C[Q_m]_\beta^{\mathcal{D}_m,\sigma_m}[S_{m+1}]_{\beta+}^{\mathcal{D}_{m+1},\sigma_{m+1}}\\
&\le\frac{\epsilon_{m+1}}{2}+C\epsilon_0^{1/6}(\ln\epsilon_0^{-1})^{-\frac{\alpha_1}{2\alpha_2}}\epsilon_m^{5/4}\le\epsilon_{m+1}.
\end{align*}
\end{proof}

\subsection{Transition to the limit and proof of reducibility theorem}
Let $\mathcal{D}_\epsilon=\cap_{m\ge0}\mathcal{D}_m$. In view of \eqref{iterationM}, this is a Borel set satisfying for $\epsilon_0\ll1$
\begin{align*}
\text{Meas}(\mathcal{D}_0\setminus\mathcal{D}_\epsilon)\le\sum_{m\ge0}\epsilon_0^{\alpha_2/6}(\ln\epsilon_m^{-1})^{-\frac{\alpha_2}{2}}=\epsilon_0^{\alpha_2/6}(\ln\epsilon_0^{-1})^{-\frac{\alpha_2}{2}}\sum_{m\ge0}\left((4/5)^{\frac{\alpha_2}2}\right)^m\le\epsilon_0^{\alpha_2/6}.
\end{align*}
This leads to the assertion (i) of Theorem \ref{reduthm}.  \\
\indent In the following, let $p\in[0,1],\,(\omega,\varphi)\in\mathcal{D}_\epsilon\times\mathbb{T}_{\sigma/2}^n$ and $\epsilon_0\ll1$. 
Collecting \eqref{iterationQ} and \eqref{iterationN} we conclude the direct lemmas as follows:
\begin{lemma}\label{limitQ}
$\big\{Q_m(\omega,\varphi)\big\}_{m\ge1}$ is a Cauchy sequence in $\M_\beta$ and
$
Q_m(\omega,\varphi)\to\boldsymbol{0}\text{ when }m\to\infty.
$
Furthermore, \eqref{iterationQ} infers the uniformly convergence on $(\omega,\varphi)$.
\end{lemma}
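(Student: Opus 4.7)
The proof plan is essentially to read off the conclusion directly from the iterative estimate \eqref{iterationQ} together with the choice of parameters made in \eqref{iterativechoice}. First I would observe that by the definition $\mathcal{D}_\epsilon=\bigcap_{m\ge 0}\mathcal{D}_m$, every $\omega\in\mathcal{D}_\epsilon$ belongs to $\mathcal{D}_m$ for all $m\ge 1$; and by the telescoping identity $\sigma_0-\sigma_m=C_*\sigma_0\sum_{k=1}^{m}k^{-2}$ together with the normalisation $(C_*)^{-1}=2\sum_{k\ge1}k^{-2}$, one has $\sigma_m>\sigma_0/2=\sigma/2$ for every $m$. Hence any $(\omega,\varphi)\in\mathcal{D}_\epsilon\times\mathbb{T}^n_{\sigma/2}$ lies in the domain $\mathcal{D}_m\times\mathbb{T}^n_{\sigma_m}$ on which the estimate \eqref{iterationQ} is valid.

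Next I would specialize \eqref{iterationQ} to this point to obtain the pointwise bound
\[
|Q_m(\omega,\varphi)|_\beta\le [Q_m]_\beta^{\mathcal{D}_m,\sigma_m}\le\epsilon_m=\epsilon_0^{(5/4)^m}.
\]
Since $\epsilon_0<\epsilon_*<1$, the right-hand side tends to $0$ super-exponentially as $m\to\infty$, which proves both $Q_m(\omega,\varphi)\to\boldsymbol{0}$ in $\M_\beta$ and the Cauchy property, via the triangle inequality
\[
|Q_m(\omega,\varphi)-Q_{m'}(\omega,\varphi)|_\beta\le \epsilon_m+\epsilon_{m'}\xrightarrow[m,m'\to\infty]{}0.
\]
Because the majorant $\epsilon_m$ is independent of the chosen $(\omega,\varphi)\in\mathcal{D}_\epsilon\times\mathbb{T}^n_{\sigma/2}$, the convergence is automatically uniform in both parameters, which yields the last statement of the lemma.

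There is essentially no obstacle here; the only minor verification worth spelling out is the inclusion $\sigma/2<\sigma_m$ for all $m$, which is why the constant $C_*$ was introduced in \eqref{iterativechoice}. All the real work has already been carried out in the iterative Lemma \ref{iterationlemma}, and this Corollary-type lemma simply packages its output into a statement about the limiting behaviour of the perturbation matrices.
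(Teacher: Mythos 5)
Your proof is correct and matches the paper's (essentially implicit) argument: the paper simply states that the lemma follows from \eqref{iterationQ}, which is precisely what you spell out — the uniform bound $[Q_m]_\beta^{\mathcal{D}_m,\sigma_m}\le\epsilon_m=\epsilon_0^{(5/4)^m}\to 0$ together with the check that $\sigma_m>\sigma/2$ keeps $(\omega,\varphi)$ in the domain of validity. Your added verification of $\sigma_m>\sigma/2$ via the normalisation of $C_*$ is a useful detail the paper leaves tacit.
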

\begin{lemma}\label{limitN}
$\big\{N_m(\omega)-N_0\big\}_{m\ge1}$ is a Cauchy sequence in $\M_\beta$. Letting $W(\omega)\in\M_\beta$ be the limit mapping we have
$
N_m(\omega)-N_0\to W(\omega)\text{ when }m\to\infty.
$
Moreover, $\eqref{iterationN}$ implies the uniformly convergence on $\omega$, which leads to the $\mathcal{C}^1$  regularity.
\end{lemma}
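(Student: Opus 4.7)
The plan is to exploit the telescoping identity $N_m - N_0 = \sum_{l=0}^{m-1} \widetilde{N}_l$, which follows by iterating the recursion $N_{l+1} = N_l + \widetilde{N}_l$ from \eqref{solutionNiteration}. Given this, both the Cauchy property in $\M_\beta$ and the $\mathcal{C}^1$ regularity will reduce to elementary summability estimates based on the super-geometric decay $\epsilon_l = \epsilon_0^{(5/4)^l}$ provided by \eqref{iterationN}.

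For the Cauchy property, I would fix $m > m' \ge 1$ and $\omega \in \mathcal{D}_\epsilon \subset \bigcap_{l \ge 1}\mathcal{D}_l$, then bound
\[
\bigl|(N_m(\omega) - N_0) - (N_{m'}(\omega) - N_0)\bigr|_\beta \le \sum_{l=m'}^{m-1} |\widetilde{N}_l(\omega)|_\beta \le \sum_{l \ge m'} \epsilon_l
\]
using the triangle inequality together with \eqref{iterationN} (each $\mathcal{D}_{l+1}$ contains $\mathcal{D}_\epsilon$). The tail on the right tends to $0$ as $m' \to \infty$ because $\epsilon_l$ decays super-geometrically, so $\{N_m(\omega) - N_0\}_{m \ge 1}$ is Cauchy in $(\M_\beta, |\cdot|_\beta)$. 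Completeness of this space (it is a weighted supremum of block operator norms, hence a Banach space) yields a unique limit $W(\omega) \in \M_\beta$, and the bound above is uniform in $\omega \in \mathcal{D}_\epsilon$, giving uniform convergence.

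For the $\mathcal{C}^1$ regularity (in the Whitney sense on the Cantor set $\mathcal{D}_\epsilon$), I would use that the norm $[\cdot]_\beta^{\mathcal{D}}$ is defined as the supremum over both $j=0$ and $j=1$ of $|\partial_\omega^j \cdot|_\beta$. Consequently, \eqref{iterationN} simultaneously controls $\widetilde{N}_l$ and $\partial_\omega \widetilde{N}_l$ uniformly on $\mathcal{D}_{l+1}$, so the telescoping argument applied verbatim to the derivatives shows that $\partial_\omega(N_m - N_0) = \sum_{l=0}^{m-1}\partial_\omega \widetilde{N}_l$ is also uniformly Cauchy on $\mathcal{D}_\epsilon$. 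The classical theorem on uniform convergence of functions together with uniform convergence of their derivatives then yields $W \in \mathcal{C}^1(\mathcal{D}_\epsilon, \M_\beta)$ with $\partial_\omega W = \lim_{m \to \infty} \partial_\omega N_m$.

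I do not expect any genuine analytic obstacle here: the essential estimate \eqref{iterationN} is already in hand, and the only point requiring care is bookkeeping — recognising that $[\cdot]_\beta^{\mathcal{D}}$ packages the Whitney-$\mathcal{C}^1$ information from the start, so that no separate derivative estimate is needed beyond what the iterative lemma already delivers.
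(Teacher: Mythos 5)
Your argument is correct and is exactly the reasoning the paper leaves implicit: the paper labels this a ``direct lemma'' following from \eqref{iterationN} and gives no further details, and the telescoping identity $N_m-N_0=\sum_{l=0}^{m-1}\widetilde N_l$, the summability of $\epsilon_l=\epsilon_0^{(5/4)^l}$, and the observation that $[\cdot]_\beta^{\mathcal D}$ already packages the $j=0,1$ derivative information (hence Whitney-$\mathcal C^1$ regularity of the uniform limit) are precisely the intended steps.
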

To estimate the change of variables, we need the following two lemmas. 
\begin{lemma}
Let $\Phi_m=\phi_1^{}\circ\phi_2^{}\circ\cdots\circ\phi_m^{}$ for $m\ge1$, then we have
\begin{equation}\label{iterationTransform}
\|\Phi_m(\cdot,\omega,\varphi)-\Id\|_{\mathcal{B}(Y_p)}\le\epsilon_0^{1/6}(\ln\epsilon_0^{-1})^{-\frac{\alpha_1}{4\alpha_2}}\sum_{l=0}^{m-1}\epsilon_{l}^{1/4}.
\end{equation}
\end{lemma}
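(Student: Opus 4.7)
The strategy is a straightforward induction on $m$, exploiting the linearity of each $\phi_l$ and the slight gap between the exponents $-\alpha_1/(3\alpha_2)$ in the single-step estimate \eqref{iterationtranform} and $-\alpha_1/(4\alpha_2)$ in the target inequality \eqref{iterationTransform}.

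The base case $m=1$ is exactly \eqref{iterationtranform} with $l=1$, since $-\alpha_1/(3\alpha_2) \le -\alpha_1/(4\alpha_2)$ (in absolute value) and $\ln\epsilon_0^{-1}\ge 1$ makes the factor $(\ln\epsilon_0^{-1})^{-\alpha_1/(3\alpha_2)}$ smaller than $(\ln\epsilon_0^{-1})^{-\alpha_1/(4\alpha_2)}$. For the inductive step, since every $\phi_l$ is a linear isomorphism of $Y_p$ (time-one flow of the quadratic Hamiltonian $\chi_l=\langle\xi,S_l\eta\rangle$), the composition $\Phi_m=\Phi_{m-1}\circ\phi_m$ is again linear and I can write
\[
\Phi_m-\Id=(\Phi_{m-1}-\Id)\circ\phi_m+(\phi_m-\Id),
\]
hence
\[
\|\Phi_m-\Id\|_{\mathcal{B}(Y_p)}\le\|\Phi_{m-1}-\Id\|_{\mathcal{B}(Y_p)}\bigl(1+\|\phi_m-\Id\|_{\mathcal{B}(Y_p)}\bigr)+\|\phi_m-\Id\|_{\mathcal{B}(Y_p)}.
\]

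Plugging in the inductive hypothesis for $\|\Phi_{m-1}-\Id\|_{\mathcal{B}(Y_p)}$ and estimate \eqref{iterationtranform} for $\|\phi_m-\Id\|_{\mathcal{B}(Y_p)}$, the single-step contribution $\epsilon_0^{1/6}(\ln\epsilon_0^{-1})^{-\alpha_1/(3\alpha_2)}\epsilon_{m-1}^{1/4}$ is bounded by $\epsilon_0^{1/6}(\ln\epsilon_0^{-1})^{-\alpha_1/(4\alpha_2)}\epsilon_{m-1}^{1/4}$ and adds the missing term to the telescoping sum. The cross-term $\|\Phi_{m-1}-\Id\|\cdot\|\phi_m-\Id\|$ carries an extra factor of order $\epsilon_0^{1/6}$, so when $\epsilon_0\ll 1$ it is absorbed into the slack generated by comparing $(\ln\epsilon_0^{-1})^{-\alpha_1/(3\alpha_2)}$ with $(\ln\epsilon_0^{-1})^{-\alpha_1/(4\alpha_2)}$ together with the fact that $\sum_l \epsilon_l^{1/4}$ is a rapidly convergent geometric-type series (since $\epsilon_l=\epsilon_0^{(5/4)^l}$).

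There is no genuine obstacle here; the only thing to monitor is the bookkeeping of constants so that the additional factor $1+\epsilon_0^{1/6}(\ln\epsilon_0^{-1})^{-\alpha_1/(3\alpha_2)}\epsilon_{m-1}^{1/4}$ accumulated at each step stays bounded as $m\to\infty$. This is immediate: the product $\prod_{l\ge 1}\bigl(1+\epsilon_0^{1/6}(\ln\epsilon_0^{-1})^{-\alpha_1/(3\alpha_2)}\epsilon_{l-1}^{1/4}\bigr)$ converges because $\sum_{l\ge 1}\epsilon_{l-1}^{1/4}\le 2\epsilon_0^{1/4}$ for $\epsilon_0\ll 1$, and one absorbs the resulting $O(\epsilon_0^{1/6+1/4})$ prefactor into the small extra room $(\ln\epsilon_0^{-1})^{-\alpha_1/(3\alpha_2)+\alpha_1/(4\alpha_2)}=(\ln\epsilon_0^{-1})^{-\alpha_1/(12\alpha_2)}$. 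Taking $\epsilon_*$ small enough that this factor is $\le 1$ closes the induction and yields \eqref{iterationTransform}.
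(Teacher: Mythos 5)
Your proof is correct and follows essentially the same route as the paper: induction on $m$, a decomposition of $\Phi_m-\Id$ into the previous remainder plus a single-step contribution, and absorption of the harmless multiplicative error into the slack between $(\ln\epsilon_0^{-1})^{-\alpha_1/(3\alpha_2)}$ and $(\ln\epsilon_0^{-1})^{-\alpha_1/(4\alpha_2)}$. The only (immaterial) difference is that the paper expands $\Phi_m\circ\phi_{m+1}-\Id=\Phi_m\circ(\phi_{m+1}-\Id)+(\Phi_m-\Id)$ and absorbs $\|\Phi_m\|$ into a generic constant $C$, while you expand $(\Phi_{m-1}-\Id)\circ\phi_m+(\phi_m-\Id)$ and track the factor $\|\phi_m\|\le 1+\|\phi_m-\Id\|$ explicitly; also your base-case remark that $-\alpha_1/(3\alpha_2)\le-\alpha_1/(4\alpha_2)$ ``in absolute value'' has the inequality sense reversed as written, though the intended comparison $(\ln\epsilon_0^{-1})^{-\alpha_1/(3\alpha_2)}\le(\ln\epsilon_0^{-1})^{-\alpha_1/(4\alpha_2)}$ is the correct one.
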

\begin{proof} 
First \eqref{iterationtranform} implies the above estimate \eqref{iterationTransform} holds true for $m=1$. Now assume that we have verified \eqref{iterationTransform} up to $m>1$, then we consider the case for $m+1$. From the definition,
\[
\Phi_{m+1}-\Id=\Phi_m\circ\phi_{m+1}-\Id=\Phi_m\circ(\phi_{m+1}-\Id)+\Phi_m-\Id.
\]
Therefore, collecting the assumption and \eqref{iterationtranform} leads to
\begin{align*}
	&\|\Phi_{m+1}(\cdot,\omega,\varphi)-\Id\|_{\mathcal{B}(Y_p)}\le\|\Phi_m(\cdot,\omega,\varphi)-\Id\|_{\mathcal{B}(Y_p)}+C\|\phi_{m+1}(\cdot,\omega,\varphi)-\Id\|_{\mathcal{B}(Y_p)}\\
	\le\,&\epsilon_0^{1/6}(\ln\epsilon_0^{-1})^{-\frac{\alpha_1}{4\alpha_2}}\sum_{l=0}^{m-1}\epsilon_{l}^{1/4}+C\epsilon_0^{1/6}(\ln\epsilon_0^{-1})^{-\frac{\alpha_1}{3\alpha_2}}\epsilon_m^{1/4}
	\le\epsilon_0^{1/6}(\ln\epsilon_0^{-1})^{-\frac{\alpha_1}{4\alpha_2}}\sum_{l=0}^{m}\epsilon_{l}^{1/4}.
\end{align*}
By induction, we complete the proof.
\end{proof}
\begin{lemma}
$\big\{\Phi_m(\cdot,\omega,\varphi)\big\}_{m\ge1}$ is a Cauchy sequence in $\mathcal{B}(Y_p)$. Letting $\Phi_\infty(\cdot,\omega,\varphi)\in\mathcal{B}(Y_p)$ be the limit mapping we have
$
\Phi_{m}(\cdot,\omega,\varphi)\to\Phi_\infty(\cdot,\omega,\varphi)\text{ when }m\to\infty.
$
Furthermore, \eqref{iterationTransform} implies the uniformly convergence on $(\omega,\varphi)$, which leads to $\Phi_\infty(\cdot,\omega,\varphi)$ is analytic in $\varphi$ and $\mathcal{C}^1$ in $\omega$. Moreover, \eqref{iterationTransform} implies
\begin{equation}\label{TransformationInfty}
	\|\Phi_\infty(\cdot,\omega,\varphi)-\Id\|_{\mathcal{B}(Y_p)}\le\epsilon_0^{5/12}.
\end{equation}
\end{lemma}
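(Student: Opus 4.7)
The plan is to upgrade the uniform bound \eqref{iterationTransform} on the finite compositions $\Phi_m$ to a Cauchy estimate, pass to the limit, and then verify the three assertions of the lemma in turn. First I would write, for any $m>k\ge 1$, the telescoping identity
\begin{equation*}
\Phi_m-\Phi_k=\sum_{j=k+1}^{m}\Phi_{j-1}\circ(\phi_j-\Id).
\end{equation*}
From \eqref{iterationTransform} we already have $\|\Phi_{j-1}\|_{\mathcal{B}(Y_p)}\le 2$ for $\epsilon_0\ll 1$, so combining this with the per-step bound \eqref{iterationtranform} yields
\begin{equation*}
\|\Phi_m-\Phi_k\|_{\mathcal{B}(Y_p)}\le 2\epsilon_0^{1/6}(\ln\epsilon_0^{-1})^{-\frac{\alpha_1}{3\alpha_2}}\sum_{j=k}^{m-1}\epsilon_j^{1/4}.
\end{equation*}
Since $\epsilon_j=\epsilon_0^{(5/4)^j}$, the tails $\sum_{j\ge k}\epsilon_j^{1/4}$ decay super-geometrically to $0$, so $\{\Phi_m\}$ is Cauchy in $\mathcal{B}(Y_p)$ uniformly in $(\omega,\varphi)\in\mathcal{D}_\epsilon\times\mathbb{T}^n_{\sigma/2}$ (noting $\mathcal{D}_\epsilon\subset\mathcal{D}_m$ and $\sigma/2<\sigma_m$ for every $m$, by the choice of $\sigma_m$).

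Next I would define $\Phi_\infty(\cdot,\omega,\varphi):=\lim_{m\to\infty}\Phi_m(\cdot,\omega,\varphi)$ in $\mathcal{B}(Y_p)$. Each $\Phi_m$ is analytic in $\varphi$ on the strip $\mathbb{T}^n_{\sigma_m}\supset\mathbb{T}^n_{\sigma/2}$ and $\mathcal{C}^1$ in $\omega\in\mathcal{D}_m\supset\mathcal{D}_\epsilon$; because the convergence just established is uniform on $\mathcal{D}_\epsilon\times\mathbb{T}^n_{\sigma/2}$, Weierstrass' theorem gives analyticity of $\Phi_\infty$ in $\varphi$ and continuity in $\omega$. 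The $\mathcal{C}^1$-regularity in $\omega$ follows by applying the same telescoping argument to $\partial_\omega\Phi_m$: the Lipschitz half of Proposition \ref{iterationproposition}, together with \eqref{estimateSRlipnorm}, provides analogous super-geometric control on $\|\partial_\omega(\phi_{m+1}-\Id)\|_{\mathcal{B}(Y_p)}$, whence $\{\partial_\omega\Phi_m\}$ is also uniformly Cauchy and its limit is $\partial_\omega\Phi_\infty$ in the Whitney sense on $\mathcal{D}_\epsilon$.

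Finally, for the inequality \eqref{TransformationInfty}, I would pass to the limit $m\to\infty$ in \eqref{iterationTransform}. The key estimate on the series is
\begin{equation*}
\sum_{l=0}^{\infty}\epsilon_l^{1/4}=\epsilon_0^{1/4}\sum_{l=0}^{\infty}\epsilon_0^{\bigl((5/4)^l-1\bigr)/4}\le 2\epsilon_0^{1/4},
\end{equation*}
valid for $\epsilon_0\ll 1$ since every term with $l\ge 1$ is super-geometrically small. Substituting this back gives
\begin{equation*}
\|\Phi_\infty-\Id\|_{\mathcal{B}(Y_p)}\le 2\epsilon_0^{1/6+1/4}(\ln\epsilon_0^{-1})^{-\frac{\alpha_1}{4\alpha_2}}\le\epsilon_0^{5/12},
\end{equation*}
where the logarithmic factor absorbs the constant $2$ for sufficiently small $\epsilon_0$.

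I do not foresee a serious obstacle: the whole argument is driven by the super-geometric decay $\epsilon_m=\epsilon_0^{(5/4)^m}$, which makes telescoping sums behave like their first term. The only point requiring a little care is the $\mathcal{C}^1$-regularity in $\omega$, because $\mathcal{D}_\epsilon$ is a Cantor set and differentiation has to be interpreted in the Whitney sense; this is however already handled by the fact that the norms $[\cdot]_\beta^{\mathcal{D},\sigma}$ and $[\cdot]_{\beta+}^{\mathcal{D},\sigma}$ used throughout the iterative scheme include the $\partial_\omega$-component, so the limit inherits the Whitney-$\mathcal{C}^1$ structure for free.
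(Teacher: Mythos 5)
Your proof is correct and follows essentially the same route as the paper: telescope the difference $\Phi_m-\Phi_k$ in terms of $\Phi_{j-1}\circ(\phi_j-\Id)$, control each increment by \eqref{iterationtranform} together with the uniform bound $\|\Phi_{j-1}\|_{\mathcal{B}(Y_p)}\le C$ coming from \eqref{iterationTransform}, and let the super-geometric decay $\epsilon_j=\epsilon_0^{(5/4)^j}$ close the sum; the bound \eqref{TransformationInfty} is then recovered by letting $m\to\infty$ in \eqref{iterationTransform} and using $1/6+1/4=5/12$ with the logarithmic factor absorbing the constant. The paper presents the same telescope one step at a time (bounding $\|\Phi_{m+1}-\Phi_m\|\le C\|\phi_{m+1}-\Id\|\le\epsilon_m^{1/4}$ and summing), so there is no substantive difference; your remarks on Weierstrass and on Whitney-$\mathcal{C}^1$ regularity in $\omega$ supply detail the paper leaves implicit and are consistent with how the weighted norms $[\cdot]_{\beta+}^{\mathcal{D},\sigma}$ already include the $\partial_\omega$ component.
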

 \begin{proof}
From the definition we have
\[
\Phi_{m+1}-\Phi_m=\Phi_m\circ\phi_{m+1}^{}-\Phi_m=\Phi_m\circ(\phi_{m+1}^{}-\Id).
\]
Collecting \eqref{iterationtranform} and \eqref{iterationTransform} leads to 
\[
\|\Phi_{m+1}(\cdot,\omega,\varphi)-\Phi_m(\cdot,\omega,\varphi)\|_{\mathcal{B}(Y_p)}\le C\|\phi_{m+1}^{}(\cdot,\omega,\varphi)-\Id\|_{\mathcal{B}(Y_p)}
\le\epsilon_m^{1/4}.
\]
Hence, given $m_2^{}\ge m_1^{}\ge1$ we have 
\[
\|\Phi_{m_2^{}}(\cdot,\omega,\varphi)-\Phi_{m_1^{}}(\cdot,\omega,\varphi)\|_{\mathcal{B}(Y_p)}\le\sum_{l=m_1^{}}^{m_2^{}-1}\epsilon_{l}^{1/4}\le2\epsilon_{m_1}\to0\text{ when }m_1^{}\to\infty.
\]
The last estimate implies the Cauchy sequence, which concludes the results.
 \end{proof}

Define $ \mathcal{B}(\ell_p^2)\ni M_m(\omega,\varphi)=e^{\mi S_1(\omega,\varphi)}\circ e^{\mi S_2(\omega,\varphi)}\circ\cdots\circ e^{\mi S_m(\omega,\varphi)}$ for $m\ge1$. In view of \eqref{change} we get that
\begin{align}
\phi_m^{}(\xi,\eta,\omega,\varphi)&=\big(e^{-\mi S_m^T(\omega,\varphi)}\xi,e^{\mi S_m(\omega,\varphi)}\eta\big)\notag
\intertext{and}
\Phi_m(\xi,\eta,\omega,\varphi)&=\big(\overline{M_m(\omega,\varphi)}\xi,M_m(\omega,\varphi)\eta\big).\label{definitionPhim}
\end{align}
From a straightforward computation, given $m_2^{}\ge m_1^{}\ge1$
\[
\|M_{m_2^{}}(\omega,\varphi)-M_{m_1^{}}(\omega,\varphi)\|_{\mathcal{B}(\ell_p^2)}\le\|\Phi_{m_2^{}}(\cdot,\omega,\varphi)-\Phi_{m_1^{}}(\cdot,\omega,\varphi)\|_{\mathcal{B}(Y_p)}.
\]
This implies that $\big\{M_m(\omega,\varphi)\big\}_{m\ge1}$ is a Cauchy sequence in $\mathcal{B}(\ell_p^2)$. Letting $M_\infty(\omega,\varphi)$ be the limit mapping, the uniformly convergence leads to that $(\omega,\varphi)\mapsto M_\infty(\omega,\varphi)$ is analytic in $\varphi$ and $\mathcal{C}^1$ in $\omega$. Moreover, due to \eqref{definitionPhim}
\begin{equation}\label{PhiMInf}
	\Phi_\infty(\xi,\eta,\omega,\varphi)=\big(\overline{M_\infty(\omega,\varphi)}\xi,M_\infty(\omega,\varphi)\eta\big)
\end{equation}
and taking into account \eqref{TransformationInfty} leads to 
\begin{equation}\label{limitM}
\|\overline{M_\infty(\omega,\varphi)}-\Id\|_{\mathcal{B}(\ell_p^2)}=\|M_\infty(\omega,\varphi)-\Id\|_{\mathcal{B}(\ell_p^2)}\leq \|\Phi_\infty(\cdot,\omega,\varphi)-\Id\|_{\mathcal{B}(Y_p)}\le\epsilon_0^{5/12}.
\end{equation}
By construction the map $\Phi_m(\cdot,\omega,\omega t)$ conjugates the original Hamiltonian system associated with
$
H_\omega(t,\xi,\eta)=\langle\xi,N_0\eta\rangle+\epsilon\langle\xi,Q(\omega t)\eta\rangle
$
into the Hamiltonian system associated with 
$
H_m(t,\xi,\eta)=\langle\xi,N_m(\omega)\eta\rangle+\langle\xi,Q_m(\omega,\omega t)\eta\rangle.
$
Collecting Lemma \ref{limitQ} and \ref{limitN} one concludes $Q_m(\omega,\omega t)\to0$  and  $N_m(\omega)\to N_\omega$ when $m\to\infty$, where the operator $N_\omega\equiv N_\infty(\omega)=N_0+W(\omega)\in\nf$ 
is $\mathcal{C}^1$ in $\omega$ with
\begin{equation}\label{limitW}
	|W|_\beta=|N_\omega-N_0|_\beta\le2\epsilon_0.
\end{equation}

Let $\Phi_\omega(\varphi):=\Phi_\infty(\cdot,\omega,\varphi)$ and $M_\omega(\varphi)=M_\infty(\omega,\varphi)$, then \eqref{PhiMInf} reads
$
\Phi_\omega(\varphi)(\xi,\eta)=\big(\overline{M_\omega(\varphi)}\xi,M_\omega(\varphi)\eta\big).
$
Furthermore, denoting the limiting Hamiltonian $\mathcal{H}_\omega=\langle\xi,N_\omega\eta\rangle=\langle\xi,N_0\eta\rangle+\langle\xi,W\eta\rangle$, the symplectic coordinate transformation $\Phi_\omega(\varphi)$ conjugates the original Hamiltonian system associated with $H_\omega$ into the autonomous Hamiltonian system associated with $\mathcal{H}_\omega$. Collecting \eqref{limitM} and \eqref{limitW} leads to \eqref{NcloseN0} and \eqref{Mcloseidentity} in Theorem \ref{reduthm}.\qed 

\section{Appendix}
\subsection{Proof of Lemma \ref{structure}}\label{proofstructurelemma}
 Recall that $\beta>\frac12$.
\begin{enumerate}
\item[(i).\phantom{ii}] 
The proof results from Lemma \ref{auxiliarylemma} with $\beta>\frac12$ and 
\begin{equation}\label{anxiliaryseries}
\sum_{c\in\widehat{\mathcal{E}}}\frac1{(1+\ln w_c)^{2\beta}(1+|w_b-w_c|)}\le C.
\end{equation}
\item[(ii).\phantom{i}] Similarly,  collecting \eqref{anxiliaryseries} and $\beta>\frac12$ conludes the results.
\item[(iii).] Use assertion (ii) of Lemma \ref{structure}.
\item[(iv).] Let $A\in\M_\beta$ and $s\ge1$. Then for any $\xi\in\ell_s^2$, we have 
\begin{align*}
\|A\xi\|_{-s}^2&=\sum_{a\in\widehat{\mathcal{E}}}w_a^{-s}\|\sum_{b\in\widehat{\mathcal{E}}}A_{[a]}^{[b]}\xi_{[b]}\|^2 \le\sum_{a\in\widehat{\mathcal{E}}} w_a^{-s} \left(\sum_{b\in\widehat{\mathcal{E}}}\|A_{[a]}^{[b]}\|\cdot\|\xi_{[b]}\|\right)^2\\
&\le|A|_\beta^2\sum_{a\in\widehat{\mathcal{E}}}w_a^{-s}\left(\sum_{b\in\widehat{\mathcal{E}}}\frac{w_b^{s/2}\|\xi_{[b]}\|}{(1+\ln w_a)^\beta(1+\ln w_b)^\beta w_b^{s/2}}\right)^2\\
&\le|A|_\beta^2\sum_{a\in\widehat{\mathcal{E}}}\frac1{(1+\ln w_a)^{2\beta}w_a^s}\left(\sum_{b\in\widehat{\mathcal{E}}}\frac1{(1+\ln w_b)^{2\beta}w_b^s}\right)\left(\sum_{b\in\widehat{\mathcal{E}}}w_b^s\|\xi_{[b]}\|^2\right)\\
&\le|A|_\beta^2\sum_{a\in\widehat{\mathcal{E}}}\frac1{(1+\ln w_a)^{2\beta}w_a}\left(\sum_{b\in\widehat{\mathcal{E}}}\frac1{(1+\ln w_b)^{2\beta}w_b}\right)\|\xi\|_s^2\\
&\le C^2|A|_\beta^2\|\xi\|_s^2.
\end{align*}
\item[(v).\phantom{i}] Case 1: $s\in[0,1]$. In this case   we first prove 
\begin{equation}\label{anxiliaryseriesI}
(I):=\sum_{b\in\widehat{\mathcal{E}}}\frac{(w_a/w_b)^s}{(1+\ln w_b)^{2\beta}(1+|w_a-w_b|)}\le C, \quad\forall\,s\in[0,1]\text{ and }\beta>\frac12.
\end{equation}
We split the series above into two parts as follows:
$$
(I)=\left(\sum_{w_b>\frac{w_a}2}+\sum_{w_b\le\frac{wa}2}\right)\frac{(w_a/w_b)^s}{(1+\ln w_b)^{2\beta}(1+|w_a-w_b|)}:=(I_1)+(I_2).
$$
For the former, $(w_a/w_b)^s\le2^s\le2$. Thus
\begin{equation}\label{anxiliaryseriesI1}
(I_1)\le\sum_{b\in\widehat{\mathcal{E}}}\frac2{(1+\ln w_b)^{2\beta}(1+|w_a-w_b|)}\le C.
\end{equation}
Then turn to the latter. Since $w_b\le\frac{w_a}2$, then $1+|w_a-w_b|\ge w_a-w_b\ge\frac{w_a}2\ge w_b$. Thus 
\(
1+|w_a-w_b|=(1+|w_a-w_b|)^s(1+|w_a-w_b|)^{1-s}\ge(w_a/2)^sw_b^{1-s}\ge\frac12w_a^sw_b^{1-s}.
\)
Therefore, one obtains
$
(I_2)\le\sum_{b\in\widehat{\mathcal{E}}}\frac2{(1+\ln w_b)^{2\beta}w_b}\le C.
$
Collecting the last estimate and \eqref{anxiliaryseriesI1} leads to the results \eqref{anxiliaryseriesI}.\\
\indent Now, we prepare to prove the assertion of (v) when $s\in [0,1]$. 
Since $A\in\M_\beta^+$, then for any $\xi\in\ell_s^2$,
\begin{alignat*}{5}
\hspace{32pt}&&\hspace{1pt}&\|A\xi\|_s^2\le\sum_{a\in\widehat{\mathcal{E}}}w_a^s\left(\sum_{b\in\widehat{\mathcal{E}}}\|A_{[a]}^{[b]}\|\cdot\|\xi_{[b]}\|\right)^2\\
&\le&&\sum_{a\in\widehat{\mathcal{E}}}\frac{|A|_{\beta+}^2}{(1+\ln w_a)^{2\beta}}\left(\sum_{b\in\widehat{\mathcal{E}}}\frac{(w_a/w_b)^{s/2}}{(1+\ln w_b)^\beta(1+|w_a-w_b|)^{1/2}}\frac{w_b^{s/2}\|\xi_{[b]}\|}{(1+|w_a-w_b|)^{1/2}}\right)^2\\
&\le&&\sum_{a\in\widehat{\mathcal{E}}}\frac{|A|_{\beta+}^2}{(1+\ln w_a)^{2\beta}}\left(\sum_{b\in\widehat{\mathcal{E}}}\frac{(w_a/w_b)^s}{(1+\ln w_b)^{2\beta}(1+|w_a-w_b|)}\right)\left(\sum_{b\in\widehat{\mathcal{E}}}\frac{w_b^s\|\xi_{[b]}\|^2}{1+|w_a-w_b|}\right)\\
&\le&&C|A|_{\beta+}^2\sum_{b\in\widehat{\mathcal{E}}}w_b^s\|\xi_{[b]}\|^2\sum_{a\in\widehat{\mathcal{E}}}\frac1{(1+\ln w_a)^{2\beta}(1+|w_a-w_b|)}\quad\underline{~\text{by \eqref{anxiliaryseriesI}}~}\\
&\le&&C^2|A|_{\beta+}^2\|\xi\|_{s}^2. \quad\underline{\text{ by Lemma \ref{auxiliarylemma} }~}
\end{alignat*}

Next turn to the other case: $s\in[-1,0)$. Repeating similar procedures as the first case and noting that
\begin{equation*}
\sum_{a\in\widehat{\mathcal{E}}}\frac{(w_a/w_b)^s}{(1+\ln w_a)^{2\beta}(1+|w_a-w_b|)}\le C,\quad \forall\,s\in[-1,0)\text{ and }\beta>\frac12, 
\end{equation*}
 we complete the proof.
\end{enumerate}\qed
\subsection{Some auxiliary lemmas}
\begin{lemma}[see Lemma A1 in \cite{WL2017}]\label{auxiliarylemma}
	For $j\ge1$ and $\delta>1$, there exists a positive constant $C\equiv C(\delta)$ independent of $j$ such that
	$
	\sum_{l\ge1}\frac{1}{(1+\ln l)^\delta(1+|l-j|)}\le C.
	$
\end{lemma}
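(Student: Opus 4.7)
\textbf{Proof proposal for Lemma \ref{auxiliarylemma}.}
My plan is to split the summation range into three regions according to the position of $l$ relative to $j$, and bound each piece uniformly in $j$. Concretely, set
\[
S_1 = \sum_{1 \le l < j/2} \frac{1}{(1+\ln l)^\delta (1+|l-j|)},\quad
S_2 = \sum_{j/2 \le l \le 3j/2} \frac{1}{(1+\ln l)^\delta (1+|l-j|)},\quad
S_3 = \sum_{l > 3j/2} \frac{1}{(1+\ln l)^\delta (1+|l-j|)},
\]
so that the total sum equals $S_1 + S_2 + S_3$. (For small $j$, the ranges of $S_1$ or $S_3$ may be empty, which is harmless.)

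For $S_1$, I would use the fact that $l < j/2$ forces $1+|l-j| \ge 1 + j/2$, so
\[
S_1 \le \frac{1}{1+j/2} \sum_{l=1}^{\lfloor j/2 \rfloor} 1 \le 1,
\]
where I have estimated $(1+\ln l)^{-\delta} \le 1$ trivially. For $S_3$, I would use that $l > 3j/2$ gives $1+|l-j| = 1 + l - j \ge l/3$ (up to a harmless constant), so
\[
S_3 \le 3 \sum_{l > 3j/2} \frac{1}{l\,(1+\ln l)^\delta} \le 3 \int_1^\infty \frac{dx}{x(1+\ln x)^\delta} = 3\int_0^\infty \frac{du}{(1+u)^\delta} = \frac{3}{\delta-1},
\]
after substituting $u = \ln x$ and using $\delta > 1$; this bound is independent of $j$.

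The main obstacle is the diagonal region $S_2$, because there $\sum 1/(1+|l-j|)$ is of harmonic type and grows like $\ln j$. The point is that in this range $l \ge j/2$, so $1+\ln l \ge 1 + \ln(j/2) \ge c(1+\ln j)$ for some absolute $c > 0$ (for $j \ge 2$, and the case $j = 1$ is handled by $S_2$ having at most two terms). Consequently
\[
S_2 \le \frac{1}{c^\delta (1+\ln j)^\delta} \sum_{|l-j| \le j/2} \frac{1}{1+|l-j|}
\le \frac{C_1}{(1+\ln j)^\delta}\bigl(1 + \ln(1 + j)\bigr)
\le \frac{C_2}{(1+\ln j)^{\delta-1}},
\]
which stays bounded since $\delta > 1$. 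Combining the three estimates and taking the supremum over $j \ge 1$ yields the desired absolute constant $C = C(\delta)$.
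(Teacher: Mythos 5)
Your proof is correct. The paper itself does not reproduce a proof of this lemma; it simply cites Lemma A1 of \cite{WL2017}. Your argument --- splitting the sum into a far-left region where $1+|l-j|\gtrsim j$, a diagonal band where $1+\ln l \gtrsim 1+\ln j$ cancels the logarithmic growth of the harmonic-type sum, and a far-right tail where $1+|l-j|\gtrsim l$ reduces matters to the convergent series $\sum l^{-1}(1+\ln l)^{-\delta}$ --- is the standard and natural one for estimates of this kind, and each step checks out: the bound $\lfloor j/2\rfloor/(1+j/2)\le 1$ for $S_1$, the integral comparison giving $S_3\le 3/(\delta-1)$, and the uniform lower bound $1+\ln l\ge c(1+\ln j)$ on the middle band, which turns $S_2$ into $O\bigl((1+\ln j)^{1-\delta}\bigr)$, bounded since $\delta>1$. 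The only cosmetic slip is bounding the tail sum directly by the integral $\int_1^\infty$ (strictly one should start the integral at the first index or add the first term), but as the summation in $S_3$ starts at $l\ge 2$ this is harmless.
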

The following lemma is classical. 
\begin{lemma}\label{classicallemmaMeasure}
	Let $f:[0,1]\mapsto\mathbb{R}$ be a $\mathcal{C}^1$ map satisfying $|f'(x)|\ge\delta$ for all $x\in[0,1]$ and let $\kappa>0$ then
	$
	\text{Meas}\big(\{x\in[0,1]:|f(x)|\le\kappa\}\big)\le\frac{2\kappa}{\delta}.
	$
\end{lemma}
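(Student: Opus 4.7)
The plan is to exploit the fact that a continuous nowhere-vanishing derivative is sign-definite, so $f$ is strictly monotone, and then to bound the Lebesgue measure of the sublevel set $E=\{x\in[0,1]:|f(x)|\le\kappa\}$ directly via the mean value theorem (no measure-theoretic machinery beyond the definition of Lebesgue measure on an interval is needed).

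First I would observe that $f'$ is continuous on $[0,1]$ and satisfies $|f'|\ge\delta>0$; by the intermediate value theorem applied to $f'$, it keeps a constant sign throughout $[0,1]$. Replacing $f$ by $-f$ if necessary we may assume $f'(x)\ge\delta$ for every $x\in[0,1]$. In particular $f$ is strictly increasing and continuous, hence the preimage $E=f^{-1}([-\kappa,\kappa])\cap[0,1]$ is a (possibly empty) interval, say $E=[a,b]$.

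Next, for the endpoints $a\le b$ of $E$ the mean value theorem provides some $\xi\in(a,b)$ with
\[
f(b)-f(a)=f'(\xi)(b-a)\ge\delta(b-a),
\]
while simultaneously $f(b)-f(a)\le 2\kappa$ since both values lie in $[-\kappa,\kappa]$. Combining these two inequalities yields $b-a\le 2\kappa/\delta$, which is exactly $\text{Meas}(E)\le 2\kappa/\delta$.

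There is essentially no analytic obstacle: the lemma is one-dimensional and the only substantive point is that sign-definiteness of $f'$ promotes the sublevel set from an a priori arbitrary measurable set to a genuine interval, after which a single application of the mean value theorem closes the estimate. If one preferred to avoid the monotonicity step one could instead change variables in $\int_E dx=\int_{f(E)}|f'|^{-1}\,dy\le\delta^{-1}\text{Meas}(f(E))\le 2\kappa/\delta$, but the monotonicity route is cleaner and makes the interval structure transparent.
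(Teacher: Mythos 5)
The paper does not actually prove this lemma; it merely labels it ``classical'' and cites no argument, so there is no in-paper proof to compare against. Your argument is correct and is a standard one: continuity of $f'$ together with $|f'|\ge\delta>0$ forces a constant sign (intermediate value theorem), strict monotonicity then makes the sublevel set $E=f^{-1}([-\kappa,\kappa])\cap[0,1]$ a closed interval, and a single application of the mean value theorem to its endpoints gives $\mathrm{Meas}(E)\le 2\kappa/\delta$. The alternative change-of-variables route you sketch, $\int_E \md x=\int_{f(E)}|f'|^{-1}\,\md y$, is also fine and has the small advantage of not requiring the preliminary sign-definiteness observation, at the cost of invoking the one-dimensional change-of-variables formula; either version is a complete and acceptable proof of this measure estimate.
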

\noindent
\textbf{Acknowledgements}\addcontentsline{toc}{section}{Acknowledgements}
The authors would like to thank the anonymous referee for corrections and suggestions. 
They were partially supported by NSFC grant (12071083) and Natural Science Foundation of Shanghai (19ZR1402400).

\end{document}